\newcommand{\Z}{\mathbb{Z}}
\newcommand{\N}{\mathbb{N}}
\newcommand{\domainBig}{\mathcal{G}_N}
\newcommand{\domain}{\mathcal{G}^+_N}
\newcommand{\death}{\mathcal{D}}
\newcommand{\Split}{\mathcal{S}}
\newcommand{\abs}[1]{\left| #1 \right|}
\newtheorem{theorem}{Theorem}
\newtheorem{definition}[theorem]{Definition}
\newtheorem{lemma}[theorem]{Lemma}
\theoremstyle{plain}
\newtheorem{cor}[theorem]{Corollary}
\newtheorem{observation}[theorem]{Observation}
\title{Distance bounds for high dimensional consistent digital rays and 2-D partially-consistent digital rays}
\author[1]{Man-Kwun Chiu\footnote{Partially supported by ERC StG 757609.}}
\affil[1]{Institut f\"ur Informatik, Freie Universit\"at Berlin, Berlin, Germany\\
	\texttt{chiumk@zedat.fu-berlin.de}}
\author[2]{Matias Korman\footnote{Supported by MEXT Kakenhi No.~17K12635 and the NSF award CCF-1422311.}}
\affil[2]{Department of Computer Science, Tufts University, Medford, MA, USA\\
	\texttt{matias.korman@tufts.edu}}
\author[3]{Martin Suderland}
\affil[3]{Faculty of Informatics, Universit\`{a} della Svizzera italiana, Lugano, Switzerland\\
	\texttt{martin.suderland@usi.ch}}
\author[4]{Takeshi Tokuyama\footnote{Supported by MEXT Kakenhi 17K19954 and 18H05291.}}
\affil[4]{Kwansei Gakuin University, Sanda, Japan\\
	
	\texttt{tokuyama@kwansei.ac.jp}}
\date{}
\begin{document}
	
	\maketitle
\begin{abstract}
	We consider the problem of digitalizing Euclidean segments. Specifically, we look for a constructive method to connect any two points in $\mathbb{Z}^d$. The construction must be {\em consistent} (that is, satisfy the natural extension of the Euclidean axioms) while resembling them as much as possible. Previous work has shown asymptotically tight results in two dimensions with $\Theta(\log N)$ error, where resemblance between segments is measured with the Hausdorff distance, and $N$ is the $L_1$ distance between the two points. This construction was considered tight because of a $\Omega(\log N)$ lower bound that applies to any consistent construction in $\mathbb{Z}^2$. 
	
	In this paper we observe that the lower bound does not directly extend to higher dimensions. We give an alternative argument showing that any consistent construction in $d$ dimensions must have $\Omega(\log^{1/(d-1)} N)$ error. We tie the error of a consistent construction in high dimensions to the error of similar {\em weak} constructions in two dimensions (constructions for which some points need not satisfy all the axioms). This not only opens the possibility for having constructions with $o(\log N)$ error in high dimensions, but also opens up an interesting line of research in the tradeoff between the number of axiom violations and the error of the construction. In order to show our lower bound, we also consider a colored variation of the concept of discrepancy of a set of points that we find of independent interest.
\end{abstract}	
	
\section{Introduction}
Euclidean line segments are one of the most fundamental objects of geometry. Although often loosely referred to as {\em the shortest path connecting the endpoints}, segments have a clear and unique axiomatic definition out of which many interesting properties follow. For example, it is well-known that the intersection of two segments is always a segment (that could possibly degenerate to a point or even become empty). The definition of other mathematical concepts heavily depends on the definition of segments
(e.g., we say that a certain region $P$ of the space is {\em convex} if for any two points $p,q\in P$, the line segment defined by $p$ and $q$ is in $P$). 

The definition of segment works very well in a Euclidean or similar spaces with infinite precision. Digital representation  (such as pixels in a screen) introduces imprecision. 
The most common approach used in practice is to somehow {\em round} the Euclidean segment into the digital space. The digital segments will look very similar to the Euclidean counterparts (that is, the {\em error} is very small).
However, 
we cannot guarantee the useful properties and concepts that follow from the axiomatic definition of Euclidean segment (see \cref{fig:rounding}).

In the aspect of the consistency of digital segments, we look for a deterministic method to construct digital segments in a way that $(i)$ the analogous of Euclidean axioms are satisfied and $(ii)$ the digital segments resemble the Euclidean ones as much as possible. 

\begin{figure}[h]
	\centering
	\includegraphics[page=1, scale=1]{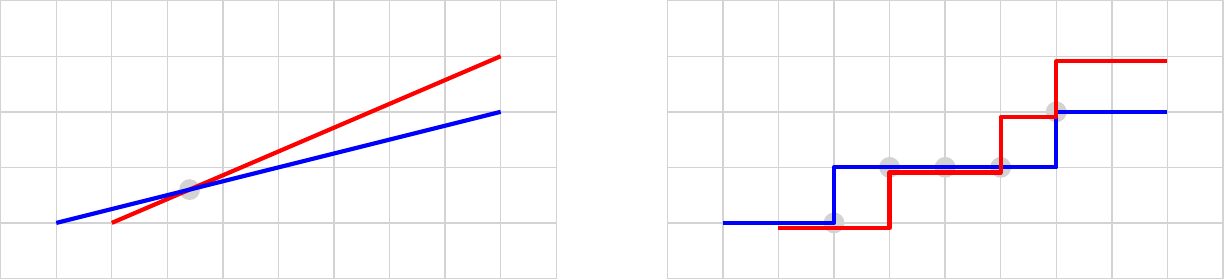}
	\caption{Left: Two Euclidean line segments that intersect in a point.
		Right: Rounding produces polylines that intersect in three disconnected components.}
	\label{fig:rounding}
\end{figure}

\subsection*{Preliminaries}
Our aim is to construct a digital path $dig(p,q)$ for any two points $p,q \in \Z^d$. Ideally, we want $dig$ to be 
defined for any pairs of points in $\Z^d$ (full list of requirements is described below), but sometimes we consider the case in which $dig$ is only defined for a subset of $\Z^d \times \Z^d$.

\begin{definition} For any $S\subseteq \Z^d \times \Z^d$, let $DS(S)$ be a set of digital segments such that $dig(p,q)\in DS(S)$ for all $(p,q)\in S$. We say that $DS(S)$ forms a {\em partial set of consistent digital segments} on $S$ (partial CDS for short) if for every pair $(p,q)\in S$ it satisfies the following five axioms:
	\begin{itemize}
		\item[(S1)] Grid path property: $dig(p,q)$ is a path between $p$ and $q$ under the $2d$-neighbor topology\footnote{The $2d$-neighbor topology is the natural one that connects to your predecessor and successor in each dimension. Formally speaking, two points are connected if and only if their $|| \cdot ||_{1}$ distance is exactly one.}. 
		\item[(S2)] Symmetry property: if $(q,p) \in S$, $dig(p,q)=dig(q,p)$.
		\item[(S3)] Subsegment property: for any $r\in dig(p,q)$, $dig(p,r) \in DS(S)$ and $dig(p,r) \subseteq dig(p,q)$.
		\item[(S4)] Prolongation property: $\exists~ r \in \Z^d$ such that $dig(p,r)\in DS(S)$ and $dig(p,q) \subset dig(p,r)$.
		\item[(S5)] Monotonicity property: for all $i\leq d$ such that $p_i = q_i$, it holds that every point $r\in dig(p,q)$ satisfies $r_i = p_i = q_i$.
	\end{itemize}
\end{definition} 

These axioms give nice properties of digital segments analogous to Euclidean line segments. For example, (S1) and (S3) imply that the intersection of two digital segments is another segment (that could degenerate to a single point or an empty set). (S5) implies that the intersection of a segment with an axis-aligned halfspace is a segment (and connected by (S1)), and so on. 

A partial CDS for $S = \Z^d\times \Z^d$ is called a set of {\em consistent digital segments} (CDS for short). 
Although our final goal is to have such a construction that works for the case in which $S=\Z^d\times \Z^d$, in this paper we consider subsets of the form $S = \{o\}\times \Z^d$ (where $o$ is the origin or any fixed point in $\Z^d$). We say that a partial CDS on such a set is a {\em consistent digital ray} system (CDR for short), as it contains all segments (or rays) from $o$ to $\Z^d$. 

Another property that we want from partial CDS is that they visually resemble the Euclidean segments. The resemblance between the digital segment $dig(p,q)$ and the Euclidean counterpart $\overline{pq}$ is measured using the {\em Hausdorff} distance. The Hausdorff distance $H(A,B)$ of two objects $A$ and $B$ is defined by $ H(A, B) = \max \{ h(A,B), h(B, A) \}$, where $h (A,B)
= \max_{a \in A} \min_{b \in B} \delta(a,b)$, and $\delta(a,b)$ is the standard $|| \cdot ||_{\infty}$ $L$-infinity norm. 

Thus, the resemblance of a partial CDS on $S$ is simply defined as $\max_{(p,q)\in S} H(dig(p,q),\overline{pq})$ (that is, the biggest error created between a digital segment and its Euclidean counterpart). This value is simply referred to as the {\em error} of the partial CDS construction. We are interested to see how the error grows as we enlarge our focus of interest. Thus, we limit the domain to the case in which both points are in the $L_1$ ball of radius $N$  centered at the origin (i.e.  $\domainBig=\Z^{d} \cap B_1(o, N)$). Rather than looking for the exact function, we are interested in the asymptotic behavior of the error as a function of $N$. For simplicity, we will actually restrict ourselves to the positive orthant $\domain=\domainBig \cap_i H_i$, where $H_i=\{p\in \Z^d \colon p_i \geq 0\}$ and $p_i$ is the $i$-th coordinate of $p$ (the results extend to other orthants by symmetry).

\subsection*{Previous Work}
Research on the digital representation of line segments has been an active area of research for over half a century~\cite{KR04}. Many different approaches have been considered. Most common techniques look for methods that implicitly encode the properties we desire. For example, a popular approach is to consider a {\em dynamic} method to digitize line segments. In this setting, the way we transform a Euclidean segment into a digital one will depend on which other segments are present (and their specific coordinates). 
It is known that a grid of exponential size is needed if we want to preserve the combinatorial types~\cite{DBLP:conf/stoc/GoodmanPS89}. Another workaround is known as  {\em snap rounding} that represents line segments by polygonal chains: Each segment is carefully rounded to avoid inconsistencies. Note that both of these ideas implicitly keep the error small while making sure that the intersection of two digital segments is a connected component. Although they work well in practice, they have the drawback that they cannot be used to define objects that are based on digital segments (such as digital starshapes or convex region).

The first paper to explicitly look for an axiomatic approach was in 1987 by Luby~\cite{luby}: in his work  he introduced the concept of CDS (under the name of {\em smooth geometries}) and gave a method to construct CDSs in $\Z^2$ based on a characterization of CDRs in $\Z^2$: any CDR can be uniquely identified by four total orders of the integers (and vice versa). By choosing a proper total order and using it for all points of $\Z^2$ we obtain a CDR with $O(\log N)$ error. H\r{a}stad\footnote{The lower bound was published by Luby, but credit given to H\r{a}stad (see Theorem~19 of~\cite{luby}).} gave a matching lower bound for any such construction. 
The lower bound is based on discrepancy theory~\cite{matousek}: any CDR is mapped to a sequence of real numbers in $[0,1)$ in a way that the error of the CDR is proportional to the discrepancy of the sequence (intuitively speaking, a measure on how well shuffled the numbers are).

These results were rediscovered by Chun {\em et al.}~\cite{cknt-cdg-09j} and Christ {\em et al.}~\cite{ChristJournal12}. They renewed interest in the topic and sparked other related research: Chowdhury and Gibson~\cite{Chowdhury2015} gave necessary and sufficient conditions for a collection of CDRs to form a CDS. In a companion paper, the same authors~\cite{Chowdhury2016} afterwards provided an alternative characterization together with a constructive algorithm; specifically, they gave an algorithm that, given a collection of segments in an $N \times N$ grid that satisfies the five axioms, computes a CDS that contains those segments. The algorithm runs in polynomial time of $N$. 

Unfortunately, most of these results only work on the digital plane. Out of the previously mentioned results, only the CDR construction of Chun {\em et al.}~\cite{cknt-cdg-09j} extends to three and higher dimensions. The construction has $O(\log N)$ error regardless of the dimension. 
Chun {\em et al.}~\cite{cknt-cdg-09j} 
also considered the case in which the monotonicity property (S5) is not preserved. They showed that if we remove (S5), we can obtain a CDR with $O(1)$ error in any dimension. Although the error is small, the resulting segments are far from what we would consider similar to the Euclidean segments (because they loop around many times). Recently, Chiu and Korman~\cite{ck-hdcds-18} showed that the problem in higher dimensions behaves very differently from the two dimensional case. Specifically, they studied how to extend the CDS construction of Christ {\em et al.}~\cite{ChristJournal12} and showed that it is very limiting in three (and higher) dimensions. We can use their method to get arbitrarily many CDRs (with $\Omega(\log N)$ error) and sometimes we can get a CDS. However, whenever the construction yields a CDS, it will have $\Omega(N)$ error.

Our interest in higher dimensions comes motivated by an application in {\em image segmentation}. Image segmentation is the act of separating an object from its background in an image (that is, determining which pixels are part of the background and which ones not). Chun {\em et al.}~\cite{cknt-cdg-09j} showed how to combine their CDR construction with the framework of Asano {\em et al.}~\cite{ACKT01} to segment two dimensional images. This idea has been extended to consider other shapes (see ~\cite{ckknt-acmwrdes-11} for a detailed list), but always two dimensional. The hope is that a high dimensional CDR with low error will produce more accurate segmentation algorithms. Although traditional images taken with a camera are two dimensional, images from a medical equipment such as those taken with an MRI machine can have three or even higher dimensions (say, when we want to track changes of a particular object along time).

\subsection*{Results and paper organization}
When approximating some geometric object, it often happens that higher dimensions create a larger error than in a lower dimension setting. Since the high dimensional setting contains a two dimensional subspace, it is common for lower bounds to extend to higher dimensions.
However, this is not true for the case of CDRs: although a three dimensional CDR contains two dimensional subspaces, those subspaces need not exactly be CDRs (and thus the $\Omega(\log N)$ lower bound does not directly hold). In this paper, we further  explain the reason and investigate the lower bound for the higher dimensional case.

The main reason why a subspace is not a CDR is because of the prolongation property (S4): we require that every segment is extendable, but has no constraints on the dimension in which it does so. In particular, a subspace of a high dimensional CDR need not be a CDR (see an example in \cref{fig:3d-CDR}). Subspaces of CDRs are what we call  {\em weak CDR}: it is a construction that almost always behaves like a CDR but some vertices may not satisfy the prolongation property (S4). Each vertex that does not extend is called an {\em inner leaf}. In this paper we study weak CDRs in two dimensions and the implications that they have for (proper) CDRs in higher dimensions.

\begin{figure}
	\centering
	\begin{subfigure}[b]{0.25\textwidth}
		\centering
		\includegraphics[scale=0.2,trim= 0 0 0 0, clip]{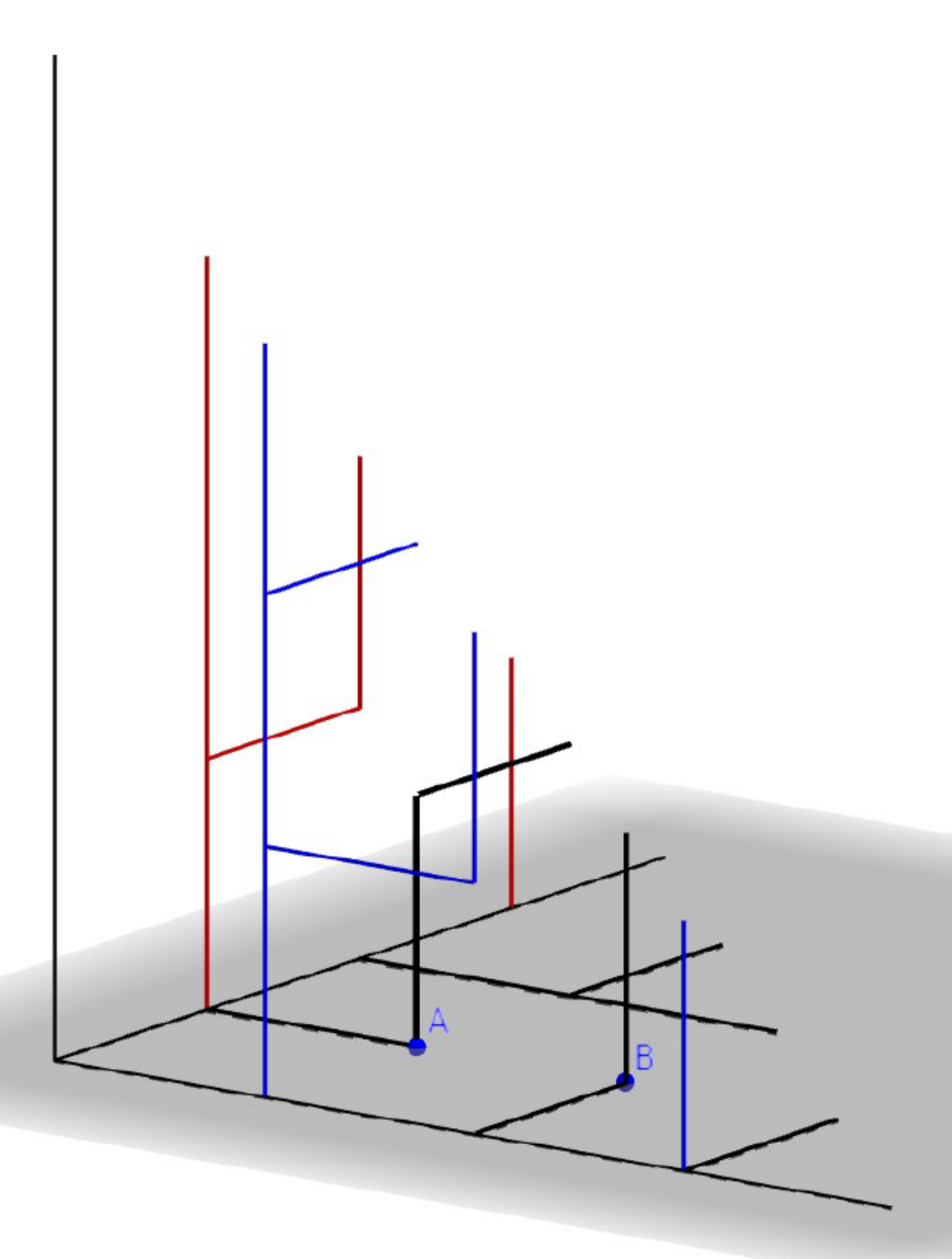}
	\end{subfigure}
	\hspace{1mm}
	\begin{subfigure}[b]{0.25\textwidth}
		\centering
		\includegraphics[scale=1.2,trim= 0 0 0 0, clip]{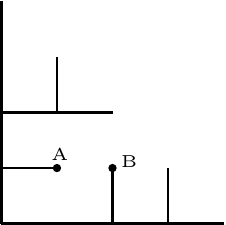}
		\vspace{1mm}
	\end{subfigure}
	\hspace{1mm}
	\begin{subfigure}[b]{0.25\textwidth}
		\centering
		\includegraphics[scale=0.4,trim= 10 0 0 0, clip]{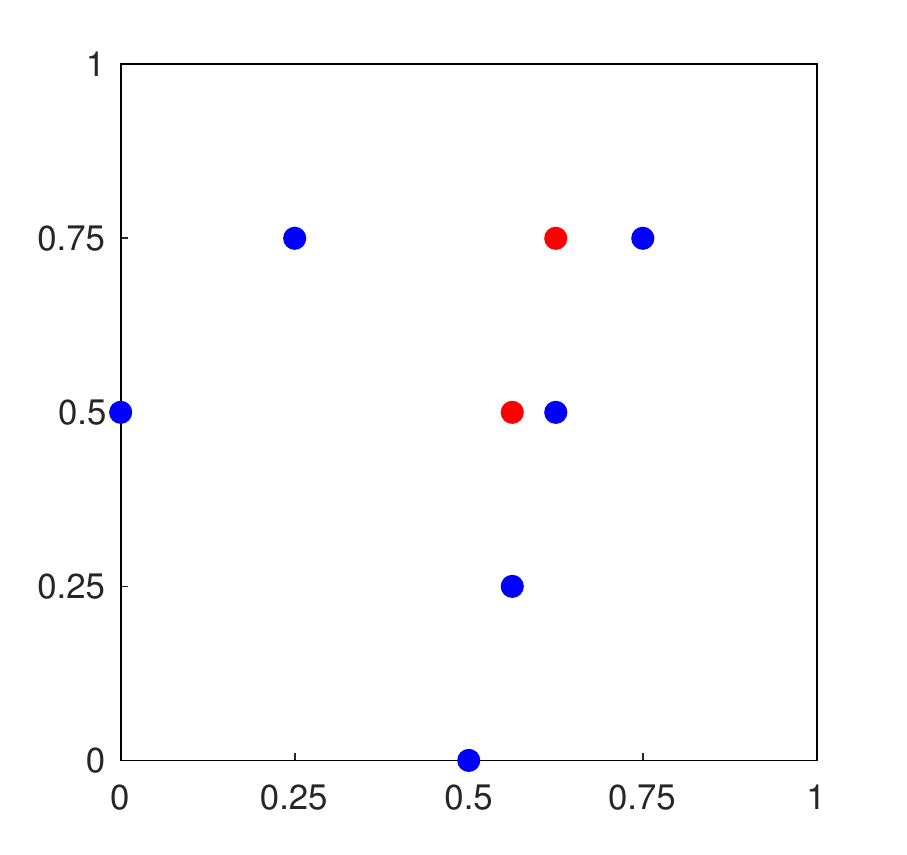}
	\end{subfigure}
	\caption{(left) A drawing of a CDR in $\domain \subset \Z^3$ for $N=4$. Notice that the CDR is a tree whose leaves are at the plane $x+y+z=N$. (middle) A cross section on the $xy$-plane of the same CDR. Observe that vertices A and B do not extend within the $xy$-plane. Thus, the subspace is a weak CDR (rather than a proper CDR). (right) A map of the weak CDR into a two-colored pointset. Regions with many blue points and few red correspond to portions of the CDR with high error.}
	\label{fig:3d-CDR}
\end{figure}

The new found properties of weak CDRs allow us to extend the two-dimensional lower bound to higher dimensions. H\r{a}stad's bound was based on a mapping from a (two-dimensional) CDR into a pointset in $[0,1)\subset \mathbb{R}$ and tied the error of the CDR to the discrepancy of the transformed pointset. Our lower bound uses an additional intermediate step: from any CDR we consider the weak CDR it generates in the $x_1x_2$-plane. We then map this weak CDR into a set of points in the unit square and {\em then} use discrepancy theory to obtain a lower bound for the weak CDR and eventually to the high dimensional CDR. Overall, we show a very strong link between the three spaces (CDR in high dimensions, weak CDR in the $x_1x_2$-plane and set of points created by our mapping). Along the paper we will analyze properties of each of the spaces, and see what implications it has for the other two. Specifically, we show the following: 
\begin{enumerate}[label=\roman*.]
	\item Because we now need to account for more general constructions (weak CDRs instead of proper CDRs), the mapping needs to be changed. Instead of creating points in the $[0,1)$ interval, in \cref{TransformationSection} we map into a two-colored pointset in $[0,1)\times [0,1)$.
	\item Similar to the two dimensional case, we can tie the error of the weak CDR to the discrepancy of the mapped pointset. First, we extend the discrepancy results~\cite{matousek} to our exact setting. 
	Let $R$ and $B$ be a set of red and blue points in the unit square, respectively. Let $m=|B|-|R|$ and assume $m>0$. For any set $P$ of points in the unit square and $x,y \in [0,1]$ let $P[x,y]$ be the number of points in $P \cap [0,x] \times [0,y]$. 
	For any two real numbers $0 \leq x,y\leq 1$ we define the discrepancy of $R$ and $B$ at $(x,y)$ as $D_{R,B}(x,y)=m x y - (B[x,y]-R[x,y]).$ The {\em discrepancy} of $R$ and $B$ is simply defined as $D^*_{R,B}=\max_{(x,y)\in[0,1]^2} | D_{R,B}(x,y) |$ (i.e., the highest discrepancy we can achieve among all possible rectangles). The discrepancy $D_{R,B}^*$ of a two-colored pointset is high if and only if there is an axis-aligned rectangle with the origin as corner in which the difference of the cardinalities is far from the expected difference.
	
	\begin{restatable}[Two colors discrepancy]{theorem}{discrep}
		\label{theo_discrep}
		For any set $R$ and $B$ of points such that $|B|>|R|$ it holds that 
		\begin{equation*} 
			D_{R,B}^* = \Omega\left(\frac{(|B|-|R|)\cdot \log(|B|+|R|)}{|B|+|R|}\right).
		\end{equation*}
	\end{restatable}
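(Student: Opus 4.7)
The plan is to adapt the classical Schmidt/Halász wavelet lower bound for one-color discrepancy to the signed, two-color setting. The quantity $D_{R,B}(x,y)$ is precisely the discrepancy of the signed point measure $\mu = \sum_{b\in B}\delta_b - \sum_{r\in R}\delta_r$ against the continuous reference measure $m\cdot dx\,dy$, and both have total mass $m$. When $R=\emptyset$ (so $m=n$), the statement reduces to Schmidt's $\Omega(\log n)$ theorem; the general statement amounts to Schmidt scaled by the \emph{effective signed-density factor} $m/n$. This immediately suggests that the proof should be a version of the Schmidt argument in which one can track how uncancelled blue mass persists at each dyadic scale.

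My approach would use Roth-style Haar-product test functions. First, fix a depth $K=\Theta(\log n)$ and build a test function $F$ on $[0,1]^2$ as a weighted sum of tensor Haar wavelets $h_{j_1}\otimes h_{j_2}$ across dyadic scales with $j_1+j_2\le K$, choosing the $\pm1$ sign on each dyadic box so that $F$ is positively correlated with $D_{R,B}$ in that box. The bound then follows by combining (i) a direct upper bound $\|F\|_1 = O(1)$ from the standard $L^1$ estimate on such wavelet sums, and (ii) a lower bound $\int_0^1\!\int_0^1 F(x,y)\,D_{R,B}(x,y)\,dx\,dy = \Omega(m\log n/n)$ obtained from Haar orthogonality together with a per-box signed imbalance estimate. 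Hölder's inequality then gives $\|D_{R,B}\|_\infty = \Omega(m\log n/n)$.

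The main obstacle is step (ii): one must show that a constant fraction of dyadic boxes across the $\Theta(\log n)$ scales contribute at least $\Omega(m/n)$ in absolute value to the corresponding Haar coefficient of $\mu - m\lambda$. In the classical one-color case this is automatic since each point contributes $\Omega(1)$ to some coefficient and signs cannot cancel. In the two-color case, red and blue points can cancel inside a box, so only the net imbalance of order $m/n$ per box of area $\Theta(1/n)$ survives; this is exactly where the factor $m/n$ in the final bound comes from. A natural way to make this rigorous is to split into regimes: when $m\ge n/2$, treat the problem as classical Schmidt applied to the majority color class with the minority treated as a bounded perturbation (yielding the required $\Omega(\log n)$); when $m<n/2$, use an averaging argument over scales to show that a constant fraction of dyadic boxes must carry signed imbalance at least $\Omega(m/n)$, since the cumulative signed excess of $\mu$ over $m\lambda$ in the coarse $\Theta(m)$-point boxes cannot be absorbed by cancellation at every finer scale simultaneously.
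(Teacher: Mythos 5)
You have correctly located the result in the Roth--Schmidt--Hal\'asz circle of ideas, and your heuristic that the bound is ``Schmidt scaled by $m/n$'' matches what the paper proves. However, there is a genuine gap at your step (i). For a \emph{sum} $F=\sum \pm\, h_{j_1}\otimes h_{j_2}$ of $\Theta(\log n)$ mutually orthogonal $\pm1$-valued Haar tensor functions, the claim $\|F\|_1=O(1)$ is false: by orthogonality $\|F\|_2=\Theta(\sqrt{\log n})$, and since these functions behave like independent Rademacher signs on the common refinement, Khintchine's inequality forces $\|F\|_1=\Theta(\sqrt{\log n})$ as well. There is no ``standard $L^1$ estimate'' giving $O(1)$ for such sums; the $O(1)$ bound holds only for the Riesz \emph{product} $G=\prod_j(1+cf_j)-1$, whose $L^1$ norm is at most $2$ because each factor is nonnegative and the full product integrates to $1$. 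With a plain sum your H\"older step yields only $\|D_{R,B}\|_\infty=\Omega\bigl(\frac{m}{n}\sqrt{\log n}\bigr)$ --- exactly the $L^2$-level Roth bound, a $\sqrt{\log n}$ factor short of the theorem. This is not a technicality: overcoming it is the entire content of Hal\'asz's refinement, which the paper adopts from Schmidt. Once you switch to the product, a second nontrivial step appears that your proposal does not mention: the expansion $G=G_1+\dots+G_m$ has cross terms $G_k=c^k\sum f_{j_1}\cdots f_{j_k}$, and one must show (the paper's Lemma~\ref{lem_upper}) that $\sum_{k\ge2}\int D_{R,B}G_k=O\bigl(c^2\,\frac{b-r}{b+r}\log(b+r)\bigr)$, so that choosing $c$ small enough lets the first-order term dominate.

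Your step (ii) is also more complicated than necessary. You worry about red--blue cancellation inside dyadic boxes and propose a regime split with an averaging argument over scales. The paper's computation avoids this entirely: each $f_j$ is defined to be zero on any box containing a point of $R\cup B$, and on an \emph{empty} box $A$ the point-counting terms of $D_{R,B}$ cancel identically under the alternating quadrant signs, leaving exactly $\int_A f_jD_{R,B}=(b-r)2^{-2m-4}$ with no cancellation to control. Since the unit square is cut into at least $2(b+r)$ boxes at each scale, at least $b+r$ of them are empty, so each scale contributes $\Omega\bigl(\frac{b-r}{b+r}\bigr)$ deterministically, and summing over the $\Theta(\log(b+r))$ scales gives the first-order term. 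No case distinction between $m\ge n/2$ and $m<n/2$ is needed.
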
		
	The proof is given in \cref{DiscrepancySection}.

	\item\label{relatwcdrpoint} With this new discrepancy result we obtain a trade-off between the error of any weak CDR 
	and the number of inner leaves (i.e., vertices that do not satisfy (S4)). When the weak CDR has zero inner leaves (and thus is a proper CDR) our bound matches the lower bound of H\r{a}stad. As the number of inner leaves increases, the lower bound decreases. In 
	\cref{TradeoffHausdorffErrorDeathVertices}
	we prove the following relationship.
	
	\begin{restatable}{theorem}{weakCDR}
		\label{theo_lower_2}
		For any $N \in \N$, any weak CDR defined on $\domain \subset \Z^2$ with $\kappa_2$ inner leaves between lines $x+y=\lceil N/2 \rceil$ and $x+y=N$ has $\Omega(\frac{N  \log N}{N+\kappa_2})$ error.
	\end{restatable}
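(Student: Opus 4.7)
The plan is to combine the mapping $\phi$ from \cref{TransformationSection} with the two-colour discrepancy lower bound of \cref{theo_discrep}, generalising H\r{a}stad's original argument for proper CDRs. First, I would apply $\phi$ to the given weak CDR to produce a two-coloured pointset $(R,B)\subset [0,1)\times[0,1)$. The key combinatorial task is to verify from the definition of $\phi$ that the sizes satisfy $|B|-|R|=\Theta(N)$, reflecting that roughly $N+1$ grid points on the boundary line $x+y=N$ must be reached by the tree of the CDR, and that $|B|+|R|=O(N+\kappa_2)$, since each of the $\kappa_2$ inner leaves inside the annulus $\lceil N/2\rceil \le x+y\le N$ adds only a controlled (constant) number of points to the pointset.

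Next, I would show that the Hausdorff error of the weak CDR is at least a constant multiple of the discrepancy $D^*_{R,B}$ of the image pointset. This step lifts the geometric content of H\r{a}stad's proof to the weak setting: a witness rectangle $[0,x]\times[0,y]$ attaining a large value of $|D_{R,B}(x,y)|$ pulls back through $\phi$ to a wedge-like region in $\domain$ where, thanks to the subsegment and monotonicity axioms (S3) and (S5) that still hold at almost every vertex, some single digital segment is forced to deviate from its Euclidean counterpart by an amount proportional to $D^*_{R,B}$. The crucial point is that $\phi$ encodes inner leaves symmetrically into $B$ and $R$, so that failures of prolongation (S4) change the local sizes of the colour classes but do not destroy the discrepancy-to-error implication.

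Plugging the size estimates into \cref{theo_discrep} then yields
\[
D^*_{R,B} \;=\; \Omega\!\left(\frac{(|B|-|R|)\log(|B|+|R|)}{|B|+|R|}\right) \;=\; \Omega\!\left(\frac{N\log(N+\kappa_2)}{N+\kappa_2}\right) \;=\; \Omega\!\left(\frac{N\log N}{N+\kappa_2}\right),
\]
where in the last step I use $\log(N+\kappa_2)\ge \log N$. Combining with the discrepancy-to-error inequality established in the second step delivers the stated lower bound on the Hausdorff error of the weak CDR. Note that when $\kappa_2=0$ the expression degenerates to $\Omega(\log N)$, recovering H\r{a}stad's bound as a sanity check.

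The main obstacle I expect is the second step. The mapping $\phi$ must be arranged so that inner leaves contribute to $|B|$ and $|R|$ in a sufficiently balanced fashion to keep $|B|-|R|=\Theta(N)$, yet the discrepancy witness must still pull back to a geometric region where a single segment of the weak CDR realises the claimed error. Verifying this balanced encoding, and checking that the interplay between the axioms that still hold and the localised violations at inner leaves does not degrade the error bound, is the main technical content; once it is in place, the size bookkeeping and the invocation of \cref{theo_discrep} are routine.
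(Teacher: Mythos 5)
Your overall architecture (map the weak CDR to a bicoloured pointset, bound the colour-class sizes, invoke the discrepancy theorem, pull the witness rectangle back to a segment with large detour) is the same as the paper's. However, there are two genuine gaps. First, your size bookkeeping is wrong for the statement as given: you claim $|B|+|R|=O(N+\kappa_2)$, but $\kappa_2$ only counts inner leaves in the annulus $\lceil N/2\rceil\le x+y\le N$, while the mapping of \cref{TransformationSection} produces a red point for \emph{every} inner leaf (and a matching blue point for every split vertex), including those below $x+y=\lceil N/2\rceil$, whose number is not controlled by $\kappa_2$. Applying \cref{theo_discrep} to the full pointset therefore only yields the weaker bound $\Omega(N\log N/(N+\kappa_1))$ in terms of the \emph{total} number of inner leaves $\kappa_1$ (this is exactly the paper's \cref{theo_lower}). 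To get the bound in terms of $\kappa_2$ the paper proves and uses a separate statement, \cref{theo_discrep_2}, which localises the discrepancy lower bound to the points in $[0,1]\times[1/2,1]$; together with \cref{lem_one_line} (which gives $b_2-r_2=\lfloor N/2\rfloor$ for that strip) this is what makes the $\kappa_2$-dependence possible. Your proposal is missing this ingredient entirely.

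Second, the step you defer as "the main obstacle" — converting a high-discrepancy rectangle into a single digital segment with large Hausdorff error — is where most of the new work lies, and the sketch does not contain the ideas needed to carry it out. The witness point $(\alpha,\beta)$ need not be the image of any tree vertex, so one must first snap it to a pair $(\alpha',n/N)$ realised by a vertex of $L_n$ without losing the discrepancy (the paper's \cref{lem:approx_alpha}, which relies on red and blue points being vertically paired), and then use \cref{theo_interp} to translate the rectangle count $B[\alpha',n/N]-R[\alpha',n/N]$ into the $x$-coordinate of that vertex. More importantly, in a weak CDR a vertex of $L_n$ need not extend to $L_N$ at all, so the naive "some segment through this region deviates" argument breaks: the paper has to introduce \emph{productive} vertices and split into three cases (a long run of non-productive vertices on $L_n$, too many productive vertices, too few), each invoking \cref{lem_distance} on a different explicitly chosen boundary point $p$. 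Without the productive/non-productive case analysis your second step does not go through.
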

	
	\item\label{lowerhigh} We then apply \cref{theo_lower_2}
	to obtain a lower bound for CDRs in $d$ dimensions: intuitively speaking, if the 2-D subspace has few inner leaves (say, $o(N\log N)$), then it will have $\omega(1)$ error. On the other hand, a weak CDR with many inner leaves in the 2-D subspace will cause too many points to extend to one of the remaining dimensions, and create large error 
	as well. This gives a lower bound of $\Omega(\log^{1/(d-1)}N)$ for any CDR construction in $d$ dimensions (see 
	\cref{3D-lower-bound}):
	
	\begin{restatable}{theorem}{lowerCDR}
		\label{thm:lowerCDR}
		Any CDR in $\Z^d$ has $\Omega( \log^{1/(d-1)} N)$ error.
	\end{restatable}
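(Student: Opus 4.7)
The plan is to induct on the dimension $d$, using \cref{theo_lower_2} to reduce a $d$-dimensional CDR to its $2$-dimensional restriction. The base case $d=2$ is immediate: a proper CDR is simply a weak CDR with $\kappa_2 = 0$ inner leaves, so \cref{theo_lower_2} gives $\Omega(\log N)$ error, matching $\Omega(\log^{1/(d-1)} N)$ at $d=2$.

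For the inductive step, let $T$ be a CDR in $\Z^d$ with error $\epsilon$. Restrict $T$ to the coordinate $2$-plane $\Pi = \{(x_1, x_2, 0, \ldots, 0)\}$. Axiom (S5) forces every segment $dig(o, v)$ with $v \in \Pi$ to lie entirely within $\Pi$, so the restriction is a weak CDR in 2D with error at most $\epsilon$. Let $\kappa$ denote the number of its inner leaves in the strip $\lceil N/2 \rceil \leq x_1 + x_2 \leq N$. By \cref{theo_lower_2}, $\epsilon = \Omega(N \log N / (N + \kappa))$. If $\kappa = O(N)$ this already yields $\epsilon = \Omega(\log N)$ and we are done, so we may assume $\kappa = \Omega(N \log N / \epsilon)$.

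Each such inner leaf fails to extend within $\Pi$, so by (S4) it must extend in one of the remaining $d-2$ positive coordinate directions. By pigeonhole, at least $\kappa/(d-2)$ of them extend in a common direction, say to $x_3 = 1$, producing that many distinct vertices at height $x_3 = 1$ whose tree-parent lies in $\Pi$. The plan is then to iterate this reduction through the remaining dimensions: the forced extensions feed into the next coordinate slicing, and each step contributes a further factor of $\epsilon$ to the constraints on how these new entry vertices can be packed inside the domain without violating the error bound. After peeling off all $d - 2$ ``extra'' dimensions, the cumulative constraint takes the form $\kappa = O(N \epsilon^{d - 2})$; combining this with $\kappa = \Omega(N \log N / \epsilon)$ gives $\epsilon^{d-1} = \Omega(\log N)$, hence $\epsilon = \Omega(\log^{1/(d-1)} N)$.

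The main obstacle is rigorously establishing the $O(N \epsilon^{d-2})$ upper bound on $\kappa$, which is the only place where the dimension $d$ enters the quantitative bound. Geometrically, each upward entry vertex at $x_3 = 1$ can only serve a cone of Euclidean directions of angular width $\Theta(\epsilon)$ before violating the Hausdorff tolerance, and iterating through the remaining $d - 3$ dimensions reduces the ``free volume'' available to these entries by one more factor of $\epsilon$ per dimension. Making this rigorous requires either proving a weak-CDR analog of \cref{theo_lower_2} in higher dimensions (so that the induction can close on weak CDRs rather than only proper CDRs), or a direct disjoint-subtree counting argument that exploits axioms (S3) and (S5) to track how many entry vertices at each level can coexist under the global error constraint. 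This disjointness bookkeeping in high dimensions is the crux of the proof.
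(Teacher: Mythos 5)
Your overall architecture is the right one and matches the paper's: restrict the $d$-dimensional CDR to the $x_1x_2$-plane, apply \cref{theo_lower_2} to get $\epsilon=\Omega(N\log N/(N+\kappa))$, establish a second, competing bound of the form $\epsilon=\Omega((\kappa/N)^{1/(d-2)})$ (equivalently $\kappa=O(N\epsilon^{d-2})$), and balance the two at $\kappa=\Theta(N\log^{(d-2)/(d-1)}N)$. The final algebra $\epsilon^{d-1}=\Omega(\log N)$ is also correct. However, the second bound is exactly the content of Lemma~\ref{lem:H_dist}, and you do not prove it: you sketch an iterated ``peel one dimension at a time'' induction in which each entry vertex serves a cone of angular width $\Theta(\epsilon)$, and then you explicitly concede that making this rigorous is ``the crux of the proof.'' That concession is accurate --- as written, the claim that ``each step contributes a further factor of $\epsilon$'' is an assertion, not an argument, and the pigeonhole step (a $\kappa/(d-2)$ fraction extending into a common direction) does not obviously feed into any usable recursion, since the restriction of the tree to a shifted slab $x_3=1$ is not itself a weak CDR in any sense covered by \cref{theo_lower_2}. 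So there is a genuine gap precisely at the dimension-dependent step.

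The paper closes this gap without any induction on $d$, via a single packing argument. The $\kappa$ inner leaves of $T_{x_1x_2}$ between $L_{N/2}$ and $L_N$ have pairwise disjoint subtrees in $T$ (no inner leaf can be an ancestor of another, by (S5)), and by (S4) each subtree reaches $L_N$ at a vertex not on the $x_1x_2$-plane; this yields at least $\kappa$ distinct boundary leaves off the plane, all descended from vertices $v$ of the plane at layer $L_{N/2-1}$. The set of points of $L_N$ with $x_1+x_2<N$ and $x_i<(\kappa/N)^{1/(d-2)}-1$ for all $i\in[3..d]$ has cardinality less than $\kappa$, so some such boundary leaf $u$ satisfies $u_j\geq(\kappa/N)^{1/(d-2)}-1$ for some $j\geq 3$, while $dig(o,u)$ passes through a vertex $v$ with $v_j=0$ at layer $L_{N/2-1}$; comparing with the Euclidean segment $\overline{ou}$ at that layer gives error $\Omega((\kappa/N)^{1/(d-2)})$ directly. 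If you want to complete your write-up, replace the iterated cone argument with this one-shot counting step (Lemmas~\ref{lem:subtree_dist} and~\ref{lem:H_dist}); no higher-dimensional analogue of \cref{theo_lower_2} is needed.
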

	
\end{enumerate}

Although we believe our analysis to be loose (especially in Theorem~\ref{thm:lowerCDR}), we are not certain that the existing CDR constructions with $O(\log N)$ error are tight either. 
In \cref{GreedyDiscrepancy,GreedyCDR},
we explore the possibility of having a CDR in high dimensions with $o(\log N)$ error (rather than directly looking at CDRs in high dimensions, we see what properties it would imply in the other two subspaces). Although we cannot explicitly find a construction with $o(\log N)$ error, we provide interesting insight on how further research can solve this question. 

In particular, in \cref{GreedyCDR} we give a weak CDR construction with $5/2$ error and $\Theta(N^2)$ inner leaves. 
In order to further reduce the number of inner leaves in weak CDRs with constant error we instead look at how to create a two-colored pointset with constant discrepancy in \cref{GreedyDiscrepancy}.
We show that it is not possible to have $o(N^2)$ red points in some pattern of the pointset with constant discrepancy,
which gives us a condition on any weak CDR with $o(N^2)$ inner leaves.

Further discussion on the implication of these results is given in \cref{sec_conclu}.



\section{Mapping a weak CDR into a pointset} \label{TransformationSection}

We start by showing how to transform a weak CDR in two dimensions into a two-colored pointset in $[0,1)^2$. 
Given any weak CDR, its restriction to $\domain$ forms a spanning tree $T$ of 
$\domain$ because of axioms (S1) and (S3). 
Although the tree is undirected, we see it as a directed graph (rooted tree) whose edges are oriented away from the origin (root). 
Then, (S5) implies that the parent of each vertex $(x,y)$ (except the root) is either $(x-1,y)$ or $(x,y-1)$.
For any edge $e=uv$ of $T$, where $u$ is the parent of $v$, we define $T(e)$ as the subtree of $T$ that is rooted at the child node $v$ of $e$. We slightly abuse the notation and use $T(v)$ to denote the subtree that is emanating from $v$ towards the leaves (that is, $T(v)=T(e)$). 

For any $n\leq N$ let $L_n$ be the points of $\domain$ whose sum of coordinates is $n$ (i.e., $L_n= \{(x,y) \in \domain \colon x+y = n\}$). 
We follow the usual terminology that we call a vertex of degree one a {\em leaf}.
We further consider two subcategories: we say that a leaf $v$ of $T$  is an {\em inner leaf} if it is not in $L_N$. All the vertices in $L_N$ are called {\em boundary leaves}. Note that, by properties of CDR, all vertices of $L_N$ are proper leaves (since any children should be in $L_{N+1}$, which is outside $\domain$). Further note that in a proper CDR there will be no inner leaves.   
A vertex $v$ of $T$ is a {\em split vertex} if it has degree three or it is the origin. Let $\Split$ be the set of split vertices and $\death$ the set of inner leaves. 

\subsection{Auxiliary function}\label{sec_auxiliary}
Before giving the transformation from a tree to a point set we first define an auxiliary function $M: \domain \rightarrow [0,1]$. 
For any $p \in L_N$ we set $M(p)=\frac{p_x}{p_x+p_y}$. For any subtree $T'(v)$ of $T$ we define two more functions inductively for $v \in L_{n}$ from $n = N$ to $0$ as follows:
\begin{align*}
	\max(T'(v)) & = \max_{p\in T'(v) \cap (\death \cup L_N)} M(p)\\
	\min(T'(v)) & = \min_{p\in T'(v) \cap (\death \cup L_N)} M(p),
\end{align*}
\noindent where $M(p)$ for $p \in \death$ is defined in the next paragraph.

For any inner leaf $\ell \in \death$, we know that the edges $e_1=(\ell_x-1,\ell_y+1)(\ell_x,\ell_y+1)$ and $e_2=(\ell_x+1,\ell_y-1)(\ell_x+1,\ell_y)$ must be present in $T$. Thus, we define $M(\ell)$ as $M(\ell)=\frac{\max(T(e_1))+\min(T(e_2))}{2}$. Intuitively speaking, we look at the leaves above and to the right of $\ell$, and assign a value that is in between the two of them (see \cref{fig:mapping}, left). The following statement shows that these values are sorted along $L_n$.

\begin{lemma}\label{lem_mapsorted}
	Let $T(u), T(v) \subset T$ be two subtrees of $T$ rooted at the vertices $u, v \in L_n$ (respectively) for some $n\leq N$  such that $u_x < v_x$. Then, it holds that $\max(T(u)) < \min(T(v))$.
\end{lemma}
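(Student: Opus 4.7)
The plan is to proceed by downward induction on $n$, following the same direction in which $M$ is defined. The induction hypothesis is that the lemma holds for all pairs on levels $L_{n'}$ with $n' > n$. The base case $n = N$ is immediate: every vertex $p \in L_N$ is a boundary leaf, so $T(p) \cap (\death \cup L_N) = \{p\}$, and $\max(T(p)) = \min(T(p)) = M(p) = p_x/N$, which is strictly increasing in $p_x$. For the inductive step, since the inequality chains transitively through $\min(T(v_k)) \leq \max(T(v_k))$ along consecutive vertices of $L_n$, it suffices to prove the claim for adjacent vertices $u = (i, n-i)$ and $v = (i+1, n-i-1)$.

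Next, I would set up the local structure between $L_n$ and $L_{n+1}$. The relevant vertices of $L_{n+1}$ are $w_i = (i, n-i+1)$, $w_{i+1} = (i+1, n-i)$, and $w_{i+2} = (i+2, n-i-1)$. By axiom (S5), the parent of each $w_j$ must be one of its two grid-neighbors on $L_n$; in particular, the only possible parents of $w_{i+1}$ are $u$ and $v$, so exactly one of $u, v$ has $w_{i+1}$ as a child. A direct consequence is that adjacent vertices on $L_n$ cannot both be inner leaves (otherwise $w_{i+1}$ would be parentless).

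I would then close the argument with a short case analysis. Case (i): if $u$ is an inner leaf, then $w_{i+1}$ must be a child of $v$, so $w_{i+1}$ is the smallest-indexed child of $v$ and the induction on $L_{n+1}$ yields $\min(T(v)) = \min(T(w_{i+1}))$; on the other side, $\max(T(u)) = M(u) = (\max(T(w_i)) + \min(T(w_{i+1})))/2 < \min(T(w_{i+1}))$ since, by the induction applied to $w_i$ and $w_{i+1}$, we have $\max(T(w_i)) < \min(T(w_{i+1}))$. Case (ii): if $v$ is an inner leaf, the argument is symmetric. Case (iii): if both $u, v$ are non-leaves, let $w_b$ be the largest-indexed child of $u$ and $w_a$ be the smallest-indexed child of $v$; since $w_{i+1}$ has exactly one parent and must have one, the only possibilities are $(b,a) = (i, i+1)$ or $(b,a) = (i+1, i+2)$, and in either case $\max(T(u)) = \max(T(w_b)) < \min(T(w_a)) = \min(T(v))$ by induction.

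The main subtlety I anticipate is handling the inner-leaf cases, since $M(\ell)$ for $\ell \in \death$ is defined indirectly as the midpoint of two neighboring subtrees' extremes. The induction nevertheless works because the midpoint formula combined with the strict gap $\max(T(w_i)) < \min(T(w_{i+1}))$ at the next level automatically produces strict inequalities on both sides of $M(\ell)$, which is exactly what propagates the sortedness from one level to the next.
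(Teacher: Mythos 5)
Your proof is correct, and its skeleton is the same as the paper's: downward induction on $n$ with base case $L_N$, a case split on whether $u$ or $v$ is an inner leaf, and the key observation that the midpoint formula defining $M$ on an inner leaf lies strictly between $\max$ of the subtree above it and $\min$ of the subtree to its right, both of which live one level up where the induction hypothesis applies. Where you genuinely differ is in how the relevant subtrees at level $n+1$ get ordered. The paper handles arbitrary $u_x<v_x$ directly and justifies the ordering of children by a planarity argument (descendants preserve left-to-right order because the natural straight-line embedding of $T$ is crossing-free). You instead reduce to adjacent $u,v$ by chaining $\max(T(\cdot))<\min(T(\cdot))\le\max(T(\cdot))<\cdots$ along $L_n$, and then replace the planarity argument by the purely local fact that the intermediate vertex $w_{i+1}\in L_{n+1}$ has exactly one parent among $\{u,v\}$ by (S5), which pins down the largest-indexed child of $u$ and the smallest-indexed child of $v$ in every case. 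This buys a fully combinatorial, self-contained inductive step that avoids any appeal to an embedding, at the cost of an explicit three-case analysis; the paper's version treats all pairs at once but leans on a more informal geometric claim. Both are sound, and both implicitly use that the edges $e_1,e_2$ in the definition of $M(\ell)$ exist for every inner leaf $\ell$ --- a fact that your unique-parent observation (adjacent vertices of $L_n$ cannot both be inner leaves, and a leaf's neighbors in $L_{n+1}$ must be fed from the other side) in fact supplies.
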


\begin{proof}
	We prove this statement by induction on $n$ from $N$ to $1$. If both $u,v\in L_N$ then both $T(u)$ and $T(v)$ consist of a single vertex and the proof trivially follows. Now, assume that the statement is true for any two vertices $u', v' \in L_i$ for $i > n$. We need to show that the statement holds for any two vertices $u, v \in L_n$ such that $u_x < v_x$. 
	
	First observe that if we have two descendants $u'$ and $v'$ from $u$ and $v$ respectively such that $u',v'\in L_{n'}$ for some $n'>n$, then it holds that $u'_x < v'_x$. Indeed, this follows from the fact that when we embed $T$ in the natural way with edges drawn as straight segments, the result is a tree with no crossings. Thus, if $v'_x < u'_x$ happened for some descendants, then the two paths in $T$ from $u$ to $u'$ and from $v$ to $v'$ would either cross or form a cycle. Any of those two situations would contradict with the fact that $T$ is a weak CDR.
	
	Back to our original proof, consider the case in which neither $u$ nor $v$ are inner leaves. By the above argument we have that the $x$-coordinate of any child $u'\in L_{n+1}$ of $u$ must be smaller than any child $v'\in L_{n+1}$ of $v$. By induction, this implies that $\max(T(u')) < \min(T(v'))$ and thus $\max(T(u)) < \min(T(v))$.
	
	The cases in which $u$ or $v$ are inner leaves are similar: if $u$ is an inner leaf, we have $\max(T(u)) = M(u) =\frac{\max(T(u_1))+\min(T(u_2))}{2}$, where $u_1=(u_x,u_y+1)\in L_{n+1}$ and $u_2=(u_x+1,u_y)\in L_{n+1}$. By induction on $u_1$ and $u_2$ we have $\max(T(u_1)) < \min(T(u_2))$ and $\max(T(u)) < \min(T(u_2))$, thus we need to compare $\min(T(u_2))$ with any children of $v$. If $v$ is also an inner leaf, we can do a similar argument and have that $\max(T(v_1)) < \min(T(v))$ where $v_1=(v_x,v_y+1)$. 
	
	In general, given $u$, let $u'\in L_{n+1}$ be the child of $u$ with the largest $x$-coordinate (or $u'=u_2$ if $u$ is an inner leaf). Similarly, we define $v'$ as the child of $v$ with the smallest $x$-coordinate (or $v'=v_1$ if $v$ is an inner leaf). Again, by planarity of the natural embedding, we have that $u'_x \leq v'_x$ if at least one of $u,v$ is an inner leaf. In either case, we can use induction and get that $\max(T(u')) \leq \min(T(v'))$ which implies $\max(T(u)) < \max(T(u')) \leq \min(T(v')) \leq \min(T(v))$ (if $u$ is an inner leaf) or $\max(T(u)) \leq \max(T(u')) \leq \min(T(v')) < \min(T(v))$ (if $v$ is an inner leaf) completing the proof.
\end{proof}

For any subtree $T'$ of $T$, its {\em depth} is the longest possible length of a path from its root to any of its leaves. Any split vertex $s\in \Split$ has two branching edges $e_1$ and $e_2$, each defining a subtree. The subtree of higher depth is the {\em preferred} subtree of $s$ (in case of tie, we choose the tree emanating from $(s_x+1,s_y)$). For any point $p\in\domain$ we define a walk from $p$ to some leaf of $T$. If $p\in L_n$ has degree two, we follow the single edge to $L_{n+1}$. If $p \in \Split$, we follow the edge to the preferred subtree. This process is continued until we reach a leaf $\gamma(p)$. 

With this virtual walk we can define the function $M$ to all points $p\in\domain$ (not only leaves) of the domain as follows. If $p$ is neither a split nor a leaf, we define $M(p)$ as $M(p)=M(\gamma(p))$. For a split vertex $s$, let $s'$ be the child of $s$ that is not on the preferred subtree of $s$. Then, we define $M(s)$ as $M(s)=M(\gamma(s'))$. 

Intuitively speaking, from any vertex we always follow its only edge away from the root (if it has degree 2) or the preferred edge (if it has degree 3) until we reach a leaf. The only exception is if we start on a split vertex, in which case we do not follow the preferred edge at the first step. This exception is needed to make sure that the end points of the walk starting from split vertices are distinct.

\begin{lemma}\label{lem_bijection}
	For any split vertex $s\in \Split$, there exists a unique leaf $\ell \in\death \cup L_N$ such that $M(s)=M(\ell)$. And for any leaf $\ell \in \death \cup L_N \setminus \{(N,0)\}$, there exists a unique split vertex $s \in \Split$ such that $M(s)= M(\ell)$.
\end{lemma}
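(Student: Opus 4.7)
The plan is to introduce the map $\Phi:\Split\to\death\cup L_N$ defined by $\Phi(s)=\gamma(s')$, where $s'$ is the non-preferred child of $s$. By construction $M(s)=M(\Phi(s))$, so once we know that $M$ is injective on leaves, the lemma becomes equivalent to showing that $\Phi$ is a bijection onto $\death\cup L_N\setminus\{(N,0)\}$. I would split the argument into three ingredients: injectivity of $M|_{\death\cup L_N}$, identification of the leaf reached by the preferred walk from the root, and a ``last-deviation'' characterization of $\Phi^{-1}(\ell)$.

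For injectivity of $M$ on leaves, take two distinct leaves $\ell_1,\ell_2$ and let $p$ be their least common ancestor in $T$. Then $p$ has two distinct children $u,v\in L_{|p|_1+1}$ with $\ell_1\in T(u)$ and $\ell_2\in T(v)$; assuming WLOG $u_x<v_x$, \cref{lem_mapsorted} yields
\[
M(\ell_1)\le\max(T(u))<\min(T(v))\le M(\ell_2),
\]
where the outer bounds follow from $\ell_1,\ell_2\in\death\cup L_N$. Hence once $M(s)$ is fixed, the leaf $\ell$ with $M(\ell)=M(s)$ is unique, and distinct splits mapped by $\Phi$ to the same leaf must coincide.

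Next I would show that the preferred walk from the root ends at $(N,0)$. By (S5) the parent of every $(k,0)$ is $(k-1,0)$, so the axis path $(0,0)\to(1,0)\to\cdots\to(N,0)$ lies entirely in $T$. At every split $(k,0)$ on this path, the subtree rooted at $(k+1,0)$ has depth at least $N-k-1$ (it contains the remainder of the axis), while the subtree at $(k,1)$ has depth at most $N-k-1$; the preferred-subtree rule, with the $(s_x+1,s_y)$ tiebreaker handling the equality case, then selects $(k+1,0)$. Therefore $\gamma((0,0))=(N,0)$.

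Finally, for a leaf $\ell$ with root-to-leaf path $\pi_\ell=(v_0,\ldots,v_k=\ell)$, uniqueness of paths in $T$ implies that $\gamma(w)=\ell$ precisely when $w\in\pi_\ell$ and every split along $\pi_\ell$ from $w$ onwards takes its preferred child. Consequently, $\Phi(s)=\ell$ iff $s=v_i$ for the largest index $i<k$ at which $v_{i+1}$ is the non-preferred child of $v_i$; such an $i$ exists uniquely unless $\pi_\ell$ uses preferred children throughout, which by the previous step happens exactly when $\ell=(N,0)$. The subcase $s'\in\death$ (so the walk terminates immediately at $s'$) is consistent with this characterization, since then the final step of $\pi_\ell$ is itself the non-preferred one and there are no subsequent splits to examine. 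A handshake count in $T$ gives $|\Split|=|\death\cup L_N|-1$ as an independent sanity check. The most delicate step will be the last-deviation characterization; here I would need to carefully verify that the walk from a vertex outside $\pi_\ell$ cannot reach $\ell$ (immediate from tree uniqueness) and that the walk from $v_{i+1}$ follows $\pi_\ell$ to its end when no further deviations occur.
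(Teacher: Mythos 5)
Your proposal is correct and takes essentially the same route as the paper: your ``last-deviation'' characterization of $\Phi^{-1}(\ell)$ is exactly the paper's backward walk from a leaf that stops upon traversing the first non-preferred edge, with $(N,0)$ as the unique leaf whose backward walk reaches the root. You additionally spell out two facts the paper only asserts --- injectivity of $M$ on leaves via \cref{lem_mapsorted}, and that the axis path consists solely of preferred edges (via the depth bound and the tiebreaker) --- which is a tightening of the same argument rather than a different approach.
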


\begin{proof}
	By definition of the auxiliary function, two leaves do not have the same mapping. Thus, it remains to show that the walk of two different split vertices cannot end at the same leaf. Imagine doing the walk backwards: start at any leaf, walk towards the origin and stop as soon as you reach a split vertex by traversing its non-preferred edge. Since each split vertex has exactly two children, it follows that exactly one leaf will stop at each split vertex. The exceptional case is the leaf $(N,0)$, from which walking backwards to the origin is a horizontal path and the path does not contain any non-preferred edge. That is, in the inverse walk we follow preferred edges until we reach a non-preferred edge. This is equivalent to starting at a split vertex and follow the non-preferred edge once and continue with the preferred edges, which is the exact definition of our auxiliary function.
\end{proof}

\subsection{Transforming the tree into a pointset}

\begin{figure}
	\centering
	\includegraphics[page=3, scale=.65]{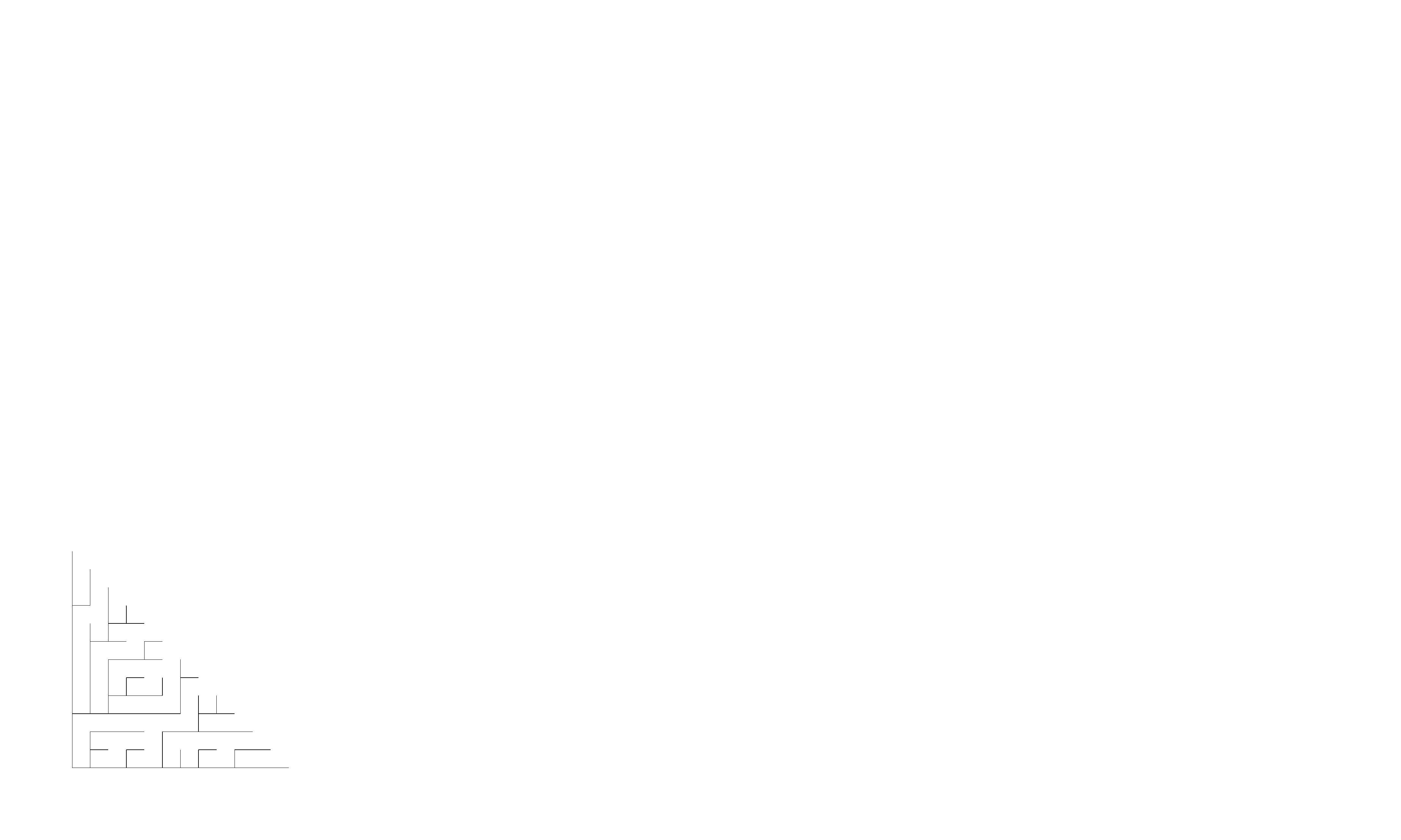} 
	\includegraphics[trim= 20 10 20 30,scale=0.43]{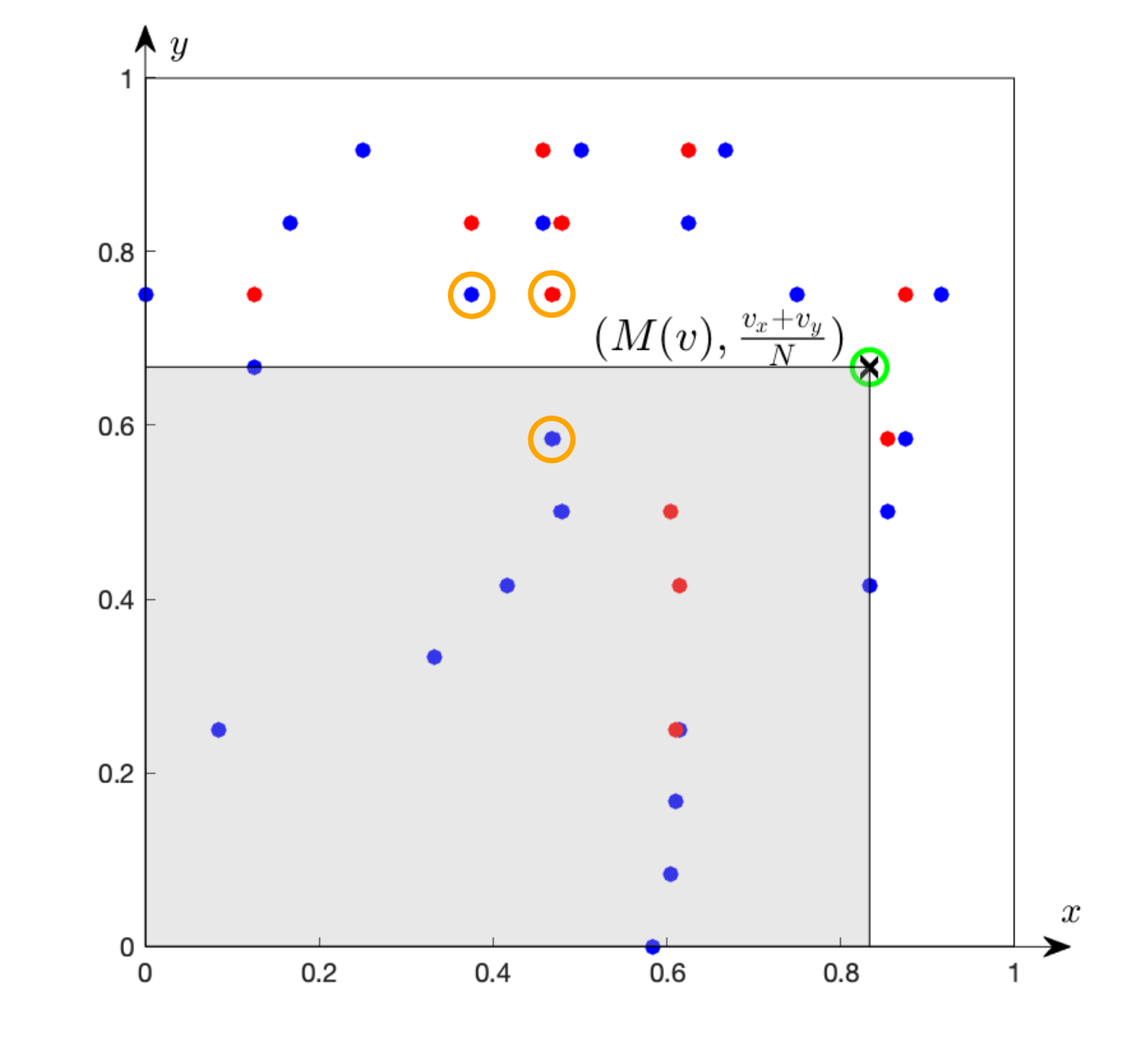}
	
	\caption{(left) A tree of a weak CDR and the value of the auxiliary function $M$ applied to all leaves of the tree. (right) The tree transformed into blue and red point sets. Two vertices of the same layer are mapped to points with the same y-coordinate and an inner leaf and its corresponding split vertex are mapped to points with the same x-coordinate (see the highlighted orange circles). (For Theorem~\ref{theo_interp}) The x-coordinate of $v = (6,2)$ (green circle) can be bounded in terms of the difference between blue and red point in the axis-aligned rectangle with corners $(0,0)$ and $\pi(v) = (M(v),\frac{v_x+v_y}{N}) =(\frac{10}{12},\frac{8}{12})$. The rectangle contains 11 blue points and 3 red ones.}
	
	\label{fig:mapping}
	\label{fig:intuitiontransformation}	
\end{figure}

With the auxiliary function $M$ we can define the mapping between a weak CDR into a bicolored pointset in the unit square. For any vertex $v=(v_x,v_y) \in \domain$ we define its {\em transformation} as $\pi(v)=(M(v), \frac{v_x+v_y}{N})$. 
Given any weak CDR, we look at the tree $T$ it defines in $\domain$. Each vertex $v \in\death$ creates a red point $\pi(v)$ and each split vertex $w\in\Split$ creates a blue point $\pi(w)$ (note that we do not transform the boundary leaves in $L_N$ into points). We define the {\em mapping} of $T$ as the union of the sets $R=\{\pi(v)\colon v\in \death\}$ and $B=\{\pi(v)\colon v\in \Split\}$ (see \cref{fig:mapping}, right). Note that the two sets depend on the tree $T$ (and thus $R=R(T)$ and $B=B(T)$). From now on we assume that $T$ is fixed, and thus we simplify the notation for ease of reading. 
For any set $P$ of points in the unit square and $x,y \in [0,1]$ let $P[x,y]$ be the number of points in $P \cap [0,x] \times [0,y]$. 

\begin{lemma}\label{lem_one_line}
	For any weak CDR $T$ in $\domain \subset \Z^2$ and $n <N$, the red and blue points on the line $y=n/N$ alternate in color starting and ending with a blue point. In particular, we have $B[1,n/N]-R[1,n/N]=n+1$.
\end{lemma}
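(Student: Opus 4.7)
The plan is to encode the tree edges between layers $L_n$ and $L_{n+1}$ by a binary word and read off the color pattern on $L_n$ from it. Index the vertices of $L_n$ as $v^{(j)} = (j, n-j)$ for $j = 0, \ldots, n$ and those of $L_{n+1}$ as $w^{(i)} = (i, n+1-i)$ for $i = 0, \ldots, n+1$. By axioms (S1) and (S3) each $w^{(i)}$ has a unique parent in $T$, and by (S5) the only two candidates are $v^{(i-1)}$ (left neighbor on $L_n$) and $v^{(i)}$ (lower neighbor). Define $\sigma_i \in \{L,D\}$ according to which of the two is the actual parent. The axis vertices of $\domain$ force $\sigma_0 = D$ and $\sigma_{n+1} = L$, since $(0,n+1)$ and $(n+1,0)$ have only one possible parent inside $\domain$.

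A direct case analysis on the pair $(\sigma_j, \sigma_{j+1})$ then determines the type of $v^{(j)}$: the pair $(D,L)$ means both $w^{(j)}$ and $w^{(j+1)}$ are children of $v^{(j)}$, so $v^{(j)}$ has degree three (a split); the pair $(L,D)$ means $v^{(j)}$ has no children, so it is an inner leaf; and either $(D,D)$ or $(L,L)$ leaves $v^{(j)}$ with exactly one child and thus degree two. Consequently, split vertices of $L_n$ correspond bijectively to $D{\to}L$ transitions in the word $\sigma_0\sigma_1\cdots\sigma_{n+1}$, and inner leaves of $L_n$ correspond to $L{\to}D$ transitions.

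Because the word starts with $D$ and ends with $L$, its transitions must alternate, beginning with $DL$, then $LD$, $DL$, $LD$, and finally ending with $DL$. Thus, reading the types of $v^{(0)},\ldots,v^{(n)}$ from left to right and deleting the degree-two entries yields the exact pattern $S,I,S,I,\ldots,S$. By \cref{lem_mapsorted} the values $M(v^{(j)})$ are strictly increasing in $j$, so the blue points (splits) and red points (inner leaves) on the horizontal line $y=n/N$, listed from left to right, alternate in color and begin and end with blue. In particular $s_n - d_n = 1$ for every $n < N$ (the base case $n=0$ fits in because the origin is the sole vertex of $L_0$, is declared a split, and gives $\sigma = DL$). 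Summing the per-layer differences from layer $0$ up to layer $n$ then gives $B[1,n/N] - R[1,n/N] = n+1$. The only potentially delicate step is checking that the boundary equalities $\sigma_0 = D$ and $\sigma_{n+1} = L$ force the first and last transitions to both be $DL$; once these are in place, the rest reduces to elementary combinatorics of two-letter words.
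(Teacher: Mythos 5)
Your proof is correct, and it reaches the conclusion by a genuinely different (and arguably tidier) mechanism than the paper. The paper splits the statement in two: the count $s_n-d_n=1$ comes from a cardinality argument ($L_{n+1}$ has one more vertex than $L_n$, and each split adds one child while each inner leaf removes one), and the alternation comes from a pigeonhole step — between two consecutive split vertices $u,v$ there are $v_x-u_x-1$ vertices of $L_n$ competing for only $v_x-u_x-2$ vertices of $L_{n+1}$, so at least one inner leaf sits in each gap, and the global count then forces exactly one per gap and none outside. Your encoding of the parent assignments as a word $\sigma_0\cdots\sigma_{n+1}$ with forced boundary letters $\sigma_0=D$ and $\sigma_{n+1}=L$ delivers both facts at once: a word that starts with $D$ and ends with $L$ has its $DL$ and $LD$ transitions interleaved with exactly one more $DL$, which is precisely the pattern $S,I,S,\dots,S$ and the $+1$ excess. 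The case analysis on $(\sigma_j,\sigma_{j+1})$ is complete because every non-origin vertex has exactly one parent (spanning-tree structure from (S1), (S3), (S5)), and the origin case is handled by its split-by-definition status. One small point worth spelling out: \cref{lem_mapsorted} literally gives $\max(T(u))<\min(T(v))$ for $u_x<v_x$, so to conclude that the mapped points appear on the line $y=n/N$ in the same left-to-right order as the vertices you should add that $M(w)\in[\min(T(w)),\max(T(w))]$ for every split vertex and inner leaf $w$ (immediate from the definition of $M$, since a split's value is inherited from a leaf of its own subtree and an inner leaf is its own subtree); the paper's proof skips this ordering step entirely, so you are if anything being more careful.
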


\begin{proof}
	For the first statement we observe that only points that lie in $L_n$ will have $y$-coordinates equal to $n/N$. Moreover, since $L_{n+1}$ has one more vertex than $L_n$, each diagonal must have exactly one more split vertex than inner leaves. Indeed, Chun {\em et al.} showed that in proper CDRs each diagonal has exactly one split vertex (and of course, zero inner leaves).
	
	Now we need to show that split vertices and inner leaves appear alternatingly on the diagonal line. Consider two consecutive split vertices $u, v \in L_n$ such that $u_x < v_x$. By definition of split, the edges $e_u =(u_x, u_y)(u_x+1,u_y)$ and $e_v = (v_x, v_y)(v_x, v_y+1)$ are all in $T$. Observe that there are $v_x - u_x - 1$ vertices in $L_n$ and $v_x - u_x - 2$ vertices in $L_{n+1}$ between $e_u$ and $e_v$. Since two different vertices of $L_n$ cannot connect to the same vertex of $L_{n+1}$,
	one of them will not reach $L_{n+1}$. That vertex will be an inner leaf and will be between $u$ and $v$ as claimed. 
	
	
	That is, the blue pointset has one more point than the red pointset in each horizontal line $y = i/N$. Summing up the differences from $i=0$ to $n$, we get that in total there are $n+1$ additional blue points $p=(x,y)$ with $y\leq n/N$. 
\end{proof}

With the above observations we can now state the main relationship between the weak CDR and its mapped pointset.
For any vertex $v\in L_n$, its path to the origin splits the tree into two portions. Consider the portion of the tree up to $L_n$ that is above the path from $v$ to the origin. In $L_0$, the subtree contains a single vertex (the root) whereas at the diagonal $L_n$ contains $v_x+1$ vertices.  
Since the number of leaves grows with split vertices and shrinks with inner leaves, this means that in the portion of the tree that we are looking at, the difference between split vertices and inner leaves must be $v_x$, see \cref{fig:intuitiontransformation}. Note that if the two children of a split vertex (e.g., $(5,0)$ in \cref{fig:intuitiontransformation}) are not in the same portion, the number of leaves does not grow with that split vertex. However, these split vertices may be still contained in the rectangle that we consider in the mapped pointset. This is the reason why we do not have an equality in Theorem~\ref{theo_interp}.

\begin{theorem}\label{theo_interp}
	For any vertex $v \in \domain$ it holds that $B[M(v),\frac{v_x+v_y}{N}] - R[M(v),\frac{v_x+v_y}{N}] - 2\leq v_x \leq B[M(v),\frac{v_x+v_y}{N}] - R[M(v),\frac{v_x+v_y}{N}]$.
\end{theorem}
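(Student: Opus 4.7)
The plan is to rephrase $B - R$ as a count of ``preferred walks'' reaching level $n+1$, where $n = v_x + v_y$, and then apply \cref{lem_mapsorted} together with the bijection of \cref{lem_bijection}.

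First I would establish the active-walk picture: at each level $i$, every vertex of $L_i$ is the current position of a unique walk obtained by following the preferred-or-unique edge at each vertex; one walk starts at the origin, a new walk is introduced at the non-preferred child of every split, and inner leaves terminate walks. \cref{lem_one_line} guarantees that the number of active walks at level $i$ equals $i+1$, and the bijection $\phi$ of \cref{lem_bijection} is precisely the pairing of each split $s$ with the terminating leaf of the walk it introduces (so $M(s) = M(\phi(s))$).

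Grouping splits $s$ in the rectangle with their leaves $\phi(s)$, those with $\phi(s) \in \death$ and $i(\phi(s)) \leq n$ cancel exactly against the red points in the rectangle. What remains is $B - R = |\{s : i(s) \leq n,\ M(s) \leq M(v),\ \phi(s) \in L_N \text{ or } i(\phi(s)) > n\}|$. For $n = N$ this is a direct boundary-leaf count that gives $v_x$ when $v = (N, 0)$ and $v_x + 1$ otherwise. For $n < N$, it equals the number of non-origin walks alive at level $n+1$ whose terminating leaf satisfies $M \leq M(v)$. Applying \cref{lem_mapsorted} at level $n+1$, a walk through $u \in L_{n+1}$ ends at a leaf with $M$-value $\leq M(v)$ iff $u_x \leq v^*_x$, where $v^*$ is determined by $v$: for non-leaf $v$, $v^*$ is the level-$(n+1)$ vertex on the $M$-defining walk of $v$ (its unique child if $v$ has degree two, or its non-preferred child if $v$ is a split), giving $v^*_x \in \{v_x, v_x+1\}$; for inner-leaf $v$, the inner-leaf formula defining $M(v)$ yields the threshold $v_x$ directly. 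Hence the total number of walks with $M$-end $\leq M(v)$ is $v^*_x + 1$ (or $v_x + 1$ in the inner-leaf case). Subtracting the origin walk --- which, because of (S5) together with the tie-breaking rule, is forced to travel along the horizontal path from $(0,0)$ to $(N,0)$ and hence sits at $(n+1, 0)$ at level $n+1$ --- subtracts $1$ exactly when $v^*_x = n+1$, i.e.\ in the special case $v = (n, 0)$ with $v$ degree-two. Putting the pieces together yields $B - R \in \{v_x, v_x + 1, v_x + 2\}$.

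The main obstacle is the careful case analysis on $v$ (split, degree-two non-leaf, inner leaf, or boundary leaf, further split by the direction of the $M$-defining walk at $v$) needed to pin down $v^*_x$ and the origin contribution. A supporting step that needs care is verifying that the preferred walks from distinct non-preferred children are pairwise disjoint --- this follows because any two preferred walks sharing a vertex must coincide thereafter, and by \cref{lem_bijection} distinct splits cannot map to the same terminating leaf.
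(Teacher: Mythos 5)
Your proposal is correct and follows essentially the same route as the paper: the paper's inductive claim in its Lemma~\ref{lem_interp} is exactly your ``active walk'' characterization (the $x$-values whose column count $B-R$ equals $1$ are the values $M(\gamma(p))$ for $p$ in the current layer), and both arguments then convert that count to $v_x$ via the sortedness of Lemma~\ref{lem_mapsorted} and the pairing of Lemma~\ref{lem_bijection}. The only difference is bookkeeping: you work one layer up and absorb the two $\pm 1$ corrections into the case analysis for $v^*_x$, where the paper instead isolates them in Lemma~\ref{emptyrectangle} (relating $M(v)$ to $M(\gamma(v))$) and Lemma~\ref{lem_one_line} (raising the rectangle from height $\frac{n-1}{N}$ to $\frac{n}{N}$).
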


\begin{proof}
	We split the proof into two auxiliary lemmas.
	
	\begin{lemma}\label{emptyrectangle}
		Let $v \in L_n$ be a split vertex such that $v_x < n$. If $M(v) < M(\gamma(v))$ the rectangle $[M(v),M(\gamma(v))] \times \left[0,\frac{n-1}{N}\right]$ contains exactly one point, which is blue and has $M(\gamma(v))$ as x-coordinate. If $M(\gamma(v)) < M(v)$ the rectangle $[M(\gamma(v)),M(v)] \times \left[0,\frac{n-1}{N}\right]$ contains exactly one point, which is blue and has $M(\gamma(v))$ as x-coordinate. When  $v=(n,0) \in L_n$ the rectangle $[M(v),M(\gamma(v))] \times \left[0,\frac{n-1}{N}\right]$ is empty.
	\end{lemma}   
	\begin{proof}
		We first consider the case of $M(v) < M(\gamma(v))$.
		When we keep following from $v$ to the preferred subtree, we end up in a leaf, called $\ell$. By definition of $M$ we have $M(\ell) = M(\gamma(v))$. Since $v_x < n$ we have $M(\gamma(v)) \neq 1$. By Lemma~\ref{lem_bijection} there is a unique split vertex $s \in \Split$ such that $M(s)= M(\ell)$. This split vertex is below layer $L_n$ (indeed, we reach $L_n$ from $\ell$ by following only preferred edges and the inverse walk has to stop when we traverse a non-preferred edge of $s$) and therefore $s$ is transformed to a blue point in the rectangle.
		Now let $s'$ be a split vertex which is mapped to a blue point in the rectangle. We will show that $s' = s$. Let $\ell'$ be the unique leaf such that $M(\ell')= M(s')$. Consider first the case in which $\ell'$ is below layer $L_n$ (that is,  $\ell'_x+\ell'_y < n$). Then let $v'$ be the vertex on $dig(o,v)$ and $L_{\ell'_x+\ell'_y}$. If $\ell'_x < v'_x$ (resp. $v'_x < \ell'_x$) then Lemma~\ref{lem_mapsorted} implies that $M(\ell') < min(T_{v'}) \leq M(v)$ (resp. $M(\gamma(v)) \leq max(T_{v'}) < M(\ell')$). This would be a contradiction to $s'$ being mapped to a blue point in the rectangle. 
		
		It remains to consider the case in which $\ell'$ is above layer $L_n$. Define $\ell''$ to be the vertex on $dig(o, \ell')$ and $L_n$.  Lemma~\ref{lem_mapsorted} implies that $\ell'' = v$ (otherwise we have either $M(\ell') < M(v)$ or $M(\gamma(v))<  M(\ell')$ which would again be a contradiction). Recall that there is only one split vertex whose walk to its corresponding leaf through preferred subtrees passes through $v$. Hence $s' = s$ and there is exactly one blue point in the rectangle. 
		
		We now show that there cannot be any red point either. Indeed, recall that for every red point there is a blue point with the same $x$-coordinate and smaller $y$-coordinate because for each inner leaf $\ell$ there is a unique split vertex $s$ defined by the walk from $s$ to $\ell$ such that $M(\ell) = M(s)$. From the previous argument, we know that $s$ with $M(s) = M(\gamma(v))$ is mapped to the only one blue point in the rectangle and its corresponding leaf $\ell$ defined by the walk is above $L_n$. Hence, even if $\ell$ is an inner leaf, the mapped red point is not in the rectangle. 
		Moreover, there cannot be any other red point in the rectangle (since it would imply that the corresponding blue point would also be in and we already ruled out this case). 
		
		In the same way we can also prove that if $M(\gamma(v)) < M(v)$ the rectangle $[M(\gamma(v)),M(v)] \times \left[0,\frac{n-1}{N}\right]$ contains exactly one point, which is blue and has $M(\gamma(v))$ as x-coordinate.
		If $v_x = n$ then $\ell$ as defined above is the leaf $(N,0)$ and $M(\ell) = 1$.  Lemma~\ref{lem_bijection} implies that there is no split vertex $s$ with $M(s) = 1$. 
	\end{proof}
	

	\begin{lemma}\label{lem_interp}
		For any vertex $v \in \domain$ it holds that 
		\begin{equation}
			v_x - B\left[M(v),\frac{v_x+v_y-1}{N}\right] + R\left[M(v),\frac{v_x+v_y-1}{N}\right] + 1 \in \{0,1\}.
		\end{equation}
	\end{lemma}
	\begin{proof}
		We first prove by induction over $n$ that $\forall n \in \{0,1,...,N\}$ the following statement holds.
		\begin{equation} \label{claim}
			\left\{M(\gamma(p)) | p \in L_n\right\} = \left\{x \in [0,1]:  \abs{B \cap \{x\} \times \left[0,\frac{n-1}{N}\right]} - \abs{R \cap \{x\} \times \left[0,\frac{n-1}{N}\right]} = 1 \right\} \cup \{1\} .
		\end{equation}
		The quantity $|B \cap \{x\} \times [0,\frac{n-1}{N}]| - |R \cap \{x\} \times [0,\frac{n-1}{N}]|$ counts the difference between the number of blue points and red points on the vertical segment with $x$-coordinate $x$ and length $\frac{n-1}{N}$. Because of  Lemma~\ref{lem_bijection} we know that each split vertex shares the same value with a leaf in the auxiliary function $M$. If the leaf is an inner leaf, both blue (split) and red (inner) points lie on the same unit segment $\{x\} \times [0,1]$. Otherwise, there is only one blue point on $\{x\} \times [0,1]$ because $M(p)$ for $p \in L_N$ are all different.
		Hence the quantity $|B \cap \{x\} \times [0,\frac{n-1}{N}]| - |R \cap \{x\} \times [0,\frac{n-1}{N}]|$ can either be $0$ or $1$.
		
		The base case $n=0$ trivially holds. We have $ \{M(\gamma(p)) | p \in L_0\} = \{1\}$ and
		\begin{equation*}
			B \cap \{x\} \times \left[0,\frac{n-1}{N}\right] = R \cap \{x\} \times \left[0,\frac{n-1}{N}\right] = \emptyset
		\end{equation*}
		We assume that \cref{claim} holds for layer $L_n$ and we prove that it also holds for $L_{n+1}$.
		We distinguish 3 cases for any vertex $q$ in layer $L_n$. 
		\begin{itemize}
			\item If $q$ has degree 2 then $q$ and its child $r \in L_{n+1}$ are mapped by $M \circ \gamma$ to the same value. Moreover $q$ does not create any vertex in the set $B$ nor $R$. 
			\item If $q$ is an inner leaf, then the value $M(\gamma(q))$ will not appear in $\{M(\gamma(p)) | p \in L_{n+1}\}$ any more. The value $M(\gamma(q))$ also disappears in 
			\begin{equation*}
				\left\{x \in [0,1]: \left|B \cap \{x\} \times \left[0,\frac{n}{N}\right]\right| - \left|R \cap \{x\} \times \left[0,\frac{n}{N}\right]\right| = 1 \right\} \cup \{1\}.
			\end{equation*}
			because $q$ created a red point in $R$ with the coordinates $(M(\gamma(q)),\frac{n}{N}) = (M(q),\frac{n}{N})$.
			\item If $q$ is a split vertex, then the value $M(\gamma(q))$ will stay in $\{M(\gamma(p)) | p \in L_{n+1}\}$. Moreover \linebreak $\{M(\gamma(p)) | p \in L_{n+1}\}$ contains the additional value $M(q)$. The value $M(q)$ also appears in 
			\begin{equation*}
				\left\{x \in [0,1]:  \left|B \cap \{x\} \times \left[0,\frac{n}{N}\right]\right| - \left|R \cap \{x\} \times \left[0,\frac{n}{N}\right]\right| = 1 \right\} \cup \{1\}
			\end{equation*}
			because $q$ creates a blue point in $B$ with the coordinates $(M(q),\frac{n}{N})$.
		\end{itemize}
		
		Hence \cref{claim} holds. \newline 
		Let $v$ be a vertex in layer $L_n$, i.e. $n = v_x + v_y$. By Lemma~\ref{lem_mapsorted} we know that a vertex $u \in L_n$ with $u_x < v_x$ satisfies $M(\gamma(u)) < M(\gamma(v))$. By Lemma~\ref{lem_mapsorted} we also know that a vertex $w \in L_n$ with $v_x < w_x$ satisfies $M(\gamma(v)) < M(\gamma(w))$. Hence the number of vertices in layer $L_n$ with smaller $x$-coordinate than that of $v$ is exactly the number of vertices which are mapped by $M \circ \gamma$ to a smaller value than that of $v$.
		If $v_x < n$:
		\begin{align}
			v_x & = |\{u \in L_n| u_x < v_x\}| \stackrel{\text{Lemma~\ref{lem_mapsorted}}}{=} |\{u \in L_n| M(\gamma(u)) < M(\gamma(v))\}| \nonumber \\
			& = |\{u \in L_n| M(\gamma(u)) \leq M(\gamma(v))\}| -1 \nonumber \\
			& \stackrel{(\ref{claim})}{=} B\left[M(\gamma(v)),\frac{n-1}{N}\right] - R\left[M(\gamma(v)),\frac{n-1}{N}\right] - 1 \\
			& \stackrel{\text{Lemma~\ref{emptyrectangle}}}{=} \begin{cases}
				B[M(v),\frac{n-1}{N}] - R\left[M(v),\frac{n-1}{N}\right] -1 &\mbox{if } M(\gamma(v)) \leq M(v) \\
				B[M(v),\frac{n-1}{N}] - R\left[M(v),\frac{n-1}{N}\right] & \mbox{if } M(v) < M(\gamma(v))
			\end{cases} \label{Momega2M}
		\end{align}
		
		If $v_x = n$ then:
		\begin{align*}
			v_x & = |\{u \in L_n| M(\gamma(u)) \leq M(\gamma(v))\}| -1 \stackrel{(\ref{claim})}{=} B\left[M(\gamma(v)),\frac{n-1}{N}\right] - R\left[M(\gamma(v)),\frac{n-1}{N}\right] \nonumber \\
			& \stackrel{\text{Lemma~\ref{emptyrectangle}}}{=} B\left[M(v),\frac{n-1}{N}\right] - R\left[M(v),\frac{n-1}{N}\right]
		\end{align*}
	\end{proof}
	
	By Lemma~\ref{lem_one_line}, the red and blue points on the line $y= v_x+v_y$ alternate in color starting and ending with a blue point. Hence, any interval $[0, x]$ on the line $y=v_x+v_y$ contains at most one more blue points. Therefore, 
	$B[M(v),\frac{v_x+v_y}{N}] - R[M(v),\frac{v_x+v_y}{N}] - ( B[M(v),\frac{v_x+v_y-1}{N}] - R[M(v),\frac{v_x+v_y-1}{N}] )$ is at most one.
	Lemmas~\ref{lem_interp} and \ref{lem_one_line} directly imply \cref{theo_interp}. 
\end{proof}



\section{Bichromatic discrepancy} \label{DiscrepancySection}
Let $R$ and $B$ be a set of red and blue points in the unit square, respectively. Let $r=|R|$ and $b=|B|$, and further assume that $b>r$. Let $m=b-r$ (which is positive since $b>r$). For any set $P$ of points in the unit square and $x,y \in [0,1]$ let $P[x,y]$ be the number of points in $P \cap [0,x] \times [0,y]$. 

For any two sets $R$ and $B$ and real numbers $x,y\leq 1$ we define the discrepancy of $R$ and $B$ at $(x,y)$ as 
\begin{equation}\label{D_m}
	D_{R,B}(x,y)=(b-r)xy - (B[x,y]-R[x,y]).
\end{equation}

The {\em discrepancy} of $R$ and $B$ is simply defined as $D^*_{R,B}=\max_{(x,y)\in[0,1]^2} | D_{R,B}(x,y) |$ (i.e., the highest discrepancy we can achieve among all possible rectangles).

\discrep*

Note that if we set $R=\emptyset$ we get the classic two dimensional discrepancy result for which there are several proofs (see~\cite{matousek} for a detailed survey). In order to extend the bound for the case of $R\neq \emptyset$, we make minor changes to Schmidt's proof~\cite{Schmidt1972}. We start by using an auxiliary function $G$ (defined below) and combining it  with the trivial inequality

$$\int_{(x,y)\in[0,1]^2} D_{R,B}(x,y) G(x,y) dx dy \leq \max_{(x,y)\in[0,1]^2} | D_{R,B}(x,y) | \int_{(x,y)\in[0,1]^2} |G(x,y)| dx dy$$

to obtain

$$D^*_{R,B}=\max_{(x,y)\in[0,1]^2} |D_{R,B}(x,y)| \geq \frac{\int D_{R,B} G}{\int |G|}.$$

Note that for simplicity in the notation we removed the integration limits. Our definition of $G$ is identical to the one used by Schimdt: Let $m=\lceil \log_2 (b+r) \rceil +1$ and observe that, by definition of $m$ we have $2(b+r) \leq 2^m \leq 4(b+r)$. For any $j\in \{0, \ldots, m\}$ we define function $f_j: [0,1]^2 \rightarrow \{-1,0,1\}$ as follows: subdivide the unit square with $2^j$ equally spaced vertical lines and $2^{m-j}$ horizontal lines.

For any value of $j$ we subdivide the unit square into rectangles of area $2^{-m}$ (larger values of $j$ will result in thinner but wider rectangles). Let $A$ be a rectangle of subdivision associated to $f_j$. We define $f_j$ within the rectangle to be $0$ if $A$ contains any point of $R \cup B$. If $A$ does not have neither red nor blue points, we further subdivide it into four congruent quadrants. The function value of $f_j$ is equal to $1$ in the upper right and lower left quadrants, and $-1$ in upper left and lower right quadrants (see a visual representation of $f_j$ in~\cite{matousek}, page 173).

Then, we define $G$ as $G=(1+cf_0) (1+cf_1)\ldots (1+cf_m)-1$, where $c>0$ is a small constant (whose value will be chosen afterwards). Note that $G$ can also be expressed as $G=G_1 + \ldots G_m$, where 

$$G_k=c^k \sum_{0 \leq j_1 \leq \ldots \leq j_m \leq m} f_{j_1}f_{j_2}\ldots f_{j_k}.$$

Schmidt showed that $\int |G| \leq 2$ (regardless of the value of $m$). Thus, we now focus in giving an upper bound for $\int D_{R,B}G$.

\begin{lemma}\label{lem_lower}
	There exists a constant $c_1$ such that $\int D_{R,B}G_1 \geq c c_1\frac{b-r}{b+r} \log (b+r)$. 
\end{lemma}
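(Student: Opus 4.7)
Since $G_1 = c\sum_{j=0}^{m} f_j$ and integration is linear, the plan is to lower bound $\int D_{R,B}(x,y)\, f_j(x,y)\, dx\, dy$ for each $j$ separately and then sum. Splitting $D_{R,B}(x,y) = (b-r)xy - (B[x,y]-R[x,y])$ via (\ref{D_m}), it is natural to treat the bilinear term and the counting term separately.

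The first claim I would establish is that the counting term contributes nothing: for every $j$,
\begin{equation*}
\int_{[0,1]^2}\bigl(B[x,y]-R[x,y]\bigr)\, f_j(x,y)\, dx\, dy = 0.
\end{equation*}
This reduces to a rectangle-by-rectangle check. If a subrectangle $A$ of the $j$-th partition contains some point of $R\cup B$, then by construction $f_j\equiv 0$ on $A$. Otherwise, since no point of $R\cup B$ lies inside $A=[a_1,a_2]\times[b_1,b_2]$, for $(x,y)\in A$ the count $B[x,y]-R[x,y]$ decomposes as $C_A + g_A(y)+h_A(x)$, where $C_A$ collects points with both coordinates at most $(a_1,b_1)$, $g_A(y)$ collects the points in the strip $[0,a_1]\times[b_1,y]$, and $h_A(x)$ the points in $[a_1,x]\times[0,b_1]$. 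Any function of the form constant $+$ function of $x$ $+$ function of $y$ integrates to $0$ against the $\pm 1$ checkerboard pattern defining $f_j|_A$, by Fubini and the zero-integral of $\mathrm{sgn}(x-c_x)$ and $\mathrm{sgn}(y-c_y)$ on $A$.

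For the second and main claim, I would show that for every $j$,
\begin{equation*}
\int_{[0,1]^2} xy\, f_j(x,y)\, dx\, dy \;\geq\; \tfrac{1}{32}\,2^{-m}.
\end{equation*}
On any point-free subrectangle $A$ of width $w_j = 2^{-j}$ and height $h_j = 2^{j-m}$, $f_j|_A = \mathrm{sgn}(x-c_x)\mathrm{sgn}(y-c_y)$ so the integral factors and a short one-dimensional computation gives $\int_A xy\, f_j = w_j^2 h_j^2/16 = 2^{-2m}/16$ (in particular, it is \emph{positive}, which is why all $j$'s contribute with the same sign). The $j$-th partition has $2^m$ subrectangles in total; at most $b+r$ of them contain a point of $R\cup B$, and since $m$ was chosen so that $2^m\geq 2(b+r)$, at least $2^{m-1}$ subrectangles are point-free. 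Multiplying gives the claimed bound.

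Combining the two claims, $\int D_{R,B} f_j \geq (b-r)\cdot 2^{-m-5}$, and using $2^m \leq 4(b+r)$ we get $\int D_{R,B} f_j \geq (b-r)/(128(b+r))$. Summing over $j=0,\dots,m$ and multiplying by $c$ yields
\begin{equation*}
\int D_{R,B}\, G_1 \;\geq\; c\,(m+1)\cdot\tfrac{b-r}{128(b+r)} \;\geq\; c\cdot c_1\cdot\tfrac{b-r}{b+r}\log(b+r),
\end{equation*}
with $c_1 = 1/(128\log 2)$, because $m+1\geq \log_2(b+r)$. The only step that requires care is the first claim: the decomposition of $B[x,y]-R[x,y]$ on a point-free rectangle as $C+g(y)+h(x)$ crucially uses that no point lies strictly inside $A$, so that the mixed term $\#\{p : a_1<p_x\leq x,\; b_1<p_y\leq y\}$ vanishes identically on $A$. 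Everything else is routine bookkeeping on the Schmidt partition.
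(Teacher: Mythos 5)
Your proof is correct and follows essentially the same route as the paper's: Schmidt's partition, linearity over $j$, and the observation that each point-free cell contributes exactly $(b-r)2^{-2m}/16$ while cells with points contribute nothing, followed by counting at least $\Omega(b+r)$ point-free cells and summing over the $\Theta(\log(b+r))$ values of $j$. The only cosmetic difference is that the paper packages the cancellation of the counting term and the evaluation of the bilinear term into a single second-difference identity over the four quadrants, whereas you separate the two terms and dispatch each by Fubini against the checkerboard; the computations and constants are the same.
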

\begin{proof}
	By definition of $G_1$ we have $\int D_{R,B}G_1 = c\sum_{j=0}^m \int D_{R,B}f_j$. Thus, it suffices to show that for any value of $j$ it holds that $\int D_{R,B}f_j \geq c' \frac{b-r}{b+r}$ (for some other constant $c'>0$).
	
	Recall that, when defining $f_j$, we subdivided the unit square into at least $2(b+r)$ rectangles. For the rectangles that contain at least one point of $R \cup B$, $f_j$ is set to zero, and thus they do not contribute to the integral. Since we have $b+r$ many points, we know that there must exist at least $b+r$ rectangles that do not contain any point of $R$ or $B$. Let $A$ be any such rectangle, and let $A_{SW},A_{NW},A_{SE},A_{NE}$ be the four subquadrants of $A$ (where the subindex refers to the cardinal position of the quadrant). Recall that $f_j$ is equal to $1$ for any point of $A_{SW} \cup A_{NE}$ and $-1$ for points of $A_{SE} \cup A_{NW}$.
	
	Let $\mathsf{w}$ and $\mathsf{h}$ be vectors defined by the horizontal and vertical sides of $A_{SW}$, respectively. Observe that their lengths are $2^{-j-1}$ and $2^{j-m-1}$, respectively. Then, we have 
	
	\begin{eqnarray*}
		& & \int_A f_j D_{R,B}\\
		&=& \int_{A_{SW}} D_{R,B}- \int_{A_{NW}} D_{R,B} + \int_{A_{NE}} D_{R,B} -\int_{A_{SE}} D_{R,B}  \\
		& = &\int_{A_{SW}} [D_{R,B}(x,y)+D_{R,B}(x+\mathsf{w},y+\mathsf{h})-D_{R,B}(x,y+\mathsf{h})-D_{R,B}(x+\mathsf{w},y)]dx dy.
	\end{eqnarray*}
	
	If we apply the definition of $D_{R,B}$ (Equation~\eqref{D_m}) to the four terms inside the integral we get
	
	\begin{eqnarray*}
		\int_A f_j D_{R,B} &=& 
		\int_{A_{SW}}((b-r)[xy+(x+\mathsf{w})(y+\mathsf{h})-x(y+\mathsf{h})-(x+\mathsf{w})y])dx dy\\ 
		&-&
		\int_{A_{SW}}(B[x,y]+B[x+\mathsf{w},y+\mathsf{h}]-B[x,y+\mathsf{h}]-B[x+\mathsf{w},y])dx dy \\
		&+& \int_{A_{SW}}(R[x,y]+
		R[x+\mathsf{w},
		y+\mathsf{h}]-
		R[x,y+\mathsf{h}]-R[x+\mathsf{w},y])dx dy.\\ 
	\end{eqnarray*}
	Observe that we are integrating twice positively and twice negatively over almost identical functions. In fact, the terms of the first integral all cancel out except along the rectangle $[x,x+\mathsf{w})\times [y,y+\mathsf{h})$. Similarly, when we look at the second and third terms, the contribution of any point in $R\cup B$ is cancelled out unless it is in the rectangle $[x,x+\mathsf{w})\times [y,y+\mathsf{h})$. However, by definition of $A$ there are no such points. Thus, we obtain
	
	$$\int_A f_j D_{R,B} = \int_{A_{SW}} (b-r)\mathsf{w}\cdot \mathsf{h} \,dxdy= \int_{A_{SW}} (b-r)2^{-m-2} dxdy = (b-r)2^{-2m-4}$$
	
	That is, when we integrate $f_j D_{R,B}$ over a rectangle $A$ containing no point of $R\cup B$, the result is  $(b-r)2^{-2m-4}$. We know that there are at least $b+r$ rectangles not containing points of $R\cup B$, thus their contribution is at least $\frac{(b+r)(b-r)}{2^{2m+4}} = \frac{(b+r)}{2^{m}}\frac{(b-r)}{16\cdot 2^{m}} \geq \frac{1}{4} \frac{(b-r)}{16\cdot 4 (b+r)} = \Omega(\frac{b-r}{b+r})$. 
\end{proof}

\begin{lemma}\label{lem_upper}
	There exists a constant $c_{2}$ such that $\sum_{k=2}^m \int D_{R,B}G_k \leq c^2 c_{2} \frac{b-r}{b+r} \log (b+r)$. 
\end{lemma}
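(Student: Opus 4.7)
The plan is to proceed in the spirit of Schmidt's original one-color argument~\cite{Schmidt1972}, reducing the proof to a per-tuple bound on $\int D_{R,B} f_{j_1} \cdots f_{j_k}$ for each strictly increasing tuple $0 \le j_1 < \cdots < j_k \le m$, and then summing those bounds. Expanding $G_k = c^k \sum_{j_1 < \cdots < j_k} f_{j_1} \cdots f_{j_k}$, the claim reduces to establishing, for every such tuple,
$$\left| \int D_{R,B}\, f_{j_1} \cdots f_{j_k} \right| \le C \cdot 2^{-k} \cdot \frac{b-r}{b+r}$$
for some absolute constant $C$, and then assembling the geometric tail carefully so that a single $c^2 \log(b+r)$ factor emerges.

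To prove the per-tuple bound I would subdivide the unit square by the common refinement of the $k$ axis-aligned grids associated with $f_{j_1}, \ldots, f_{j_k}$. On any rectangle of this refinement lying inside a non-empty cell of some individual $f_{j_i}$-grid, the factor $f_{j_i}$ vanishes and the integrand is identically zero. On the globally empty rectangles, the counting functions $B[\cdot,\cdot]$ and $R[\cdot,\cdot]$ are locally constant, so iterating the four-corner telescoping computation of \cref{lem_lower} once per factor $f_{j_i}$ cancels all constant contributions and leaves only $(b-r)\,xy$ integrated against $f_{j_1}\cdots f_{j_k}$. An explicit evaluation on the resulting sub-quadrants gains an extra factor of $1/4$ at each new scale from the $\pm 1$ sign cancellation of $f_{j_i}$ across its four sub-quadrants; this is what supplies the $2^{-k}$ savings. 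Using $2^m \ge 2(b+r)$ to normalize and the fact that at most $b+r$ rectangles fail to be empty, the displayed per-tuple bound follows.

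Finally, summing over the $\binom{m+1}{k}$ strict tuples and then over $k \in \{2, \ldots, m\}$ yields
$$\sum_{k=2}^m \int D_{R,B}\, G_k \;\le\; C \cdot \frac{b-r}{b+r} \sum_{k=2}^m \binom{m+1}{k}\left(\frac{c}{2}\right)^{k}.$$
For $c$ chosen sufficiently small the right-hand sum is a convergent geometric-type series dominated by its $k=2$ term. The main obstacle is to upgrade the resulting factor of $\binom{m+1}{2} = \Theta((\log(b+r))^2)$ at $k=2$ to the single $\log(b+r)$ claimed in the statement: this requires a finer accounting that groups tuples by their scale separation $j_k - j_1$ and exploits additional cancellation between distinct tuples in the integral, rather than applying triangle inequality tuple-by-tuple. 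This refinement is exactly the step where Schmidt's two-dimensional combinatorics keeps the exponent of $\log$ at one instead of $k$, and carrying it over to the bichromatic setting — while tracking that the constant factor absorbs into $c_2$ and the dependence on $(b-r)/(b+r)$ is preserved — is the most delicate part of the argument.
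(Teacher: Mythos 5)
Your overall strategy (per-tuple bound via the common refinement, then sum over tuples) is the right family of argument, and your reduction of the four-corner telescoping to the single surviving term $(b-r)xy$ on empty cells matches the paper. But there is a genuine gap, and it sits exactly where you flag ``the most delicate part'': the argument as written does not prove the lemma, and the fix you gesture at is mischaracterized. First, the per-tuple bound $\bigl|\int D_{R,B}f_{j_1}\cdots f_{j_k}\bigr|\le C\,2^{-k}\,\tfrac{b-r}{b+r}$ is true but too weak. Feeding it into the count of $\binom{m+1}{k}$ tuples gives $\sum_{k\ge 2}\binom{m+1}{k}(c/2)^k\le (1+c/2)^{m+1}$, which for any \emph{fixed} constant $c>0$ is $(b+r)^{\Theta(c)}$ --- polynomial in $b+r$, not ``dominated by the $k=2$ term'' and not even polylogarithmic (the geometric domination you invoke would require $c\lesssim 1/m$, i.e.\ $c$ shrinking with $b+r$, which is not available since $c$ must later be fixed against the constant from \cref{lem_lower}). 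So the proposal does not reach $\log^2(b+r)$; it reaches nothing usable.

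Second, the refinement you correctly identify --- grouping tuples by the scale separation $g=j_k-j_1$ --- does \emph{not} require any cancellation between distinct tuples. The paper's proof is still a tuple-by-tuple triangle inequality; the point is that the correct per-tuple bound decays geometrically in $g$ rather than in $k$: the refined grid has $2^{m+g}$ cells of area $2^{-(m+g)}$, each contributing $\pm(b-r)2^{-2(m+g)-4}$, so $\bigl|\int D_{R,B}f_{j_1}\cdots f_{j_k}\bigr|\le (b-r)2^{-m-g-4}$, uniformly over the $\binom{g-1}{k-2}$ choices of intermediate indices. Summing $c^k\binom{g-1}{k-2}$ over $k$ gives $c^2(1+c)^{g-1}$, the $2^{-g}$ beats $(1+c)^{g-1}$ for $c\le 1/2$, and the single factor of $m=\Theta(\log(b+r))$ comes only from the at most $m+1$ translates of $j_1$ for each fixed gap. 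Your write-up names the right pivot ($j_k-j_1$) but leaves the actual bound and the combinatorial resummation undone, and replaces them with an appeal to inter-tuple cancellation that is not what closes the argument.
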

\begin{proof}
	Recall that $G_k=c^k\sum_{0 \leq j_1 < j_2 < \ldots < j_k \leq m} f_{j_1}\ldots f_{j_k}$. Fix any valid set of indices and consider the value of $\int f_{j_1}\ldots f_{j_k} D_{R,B}$. 
	
	As shown in~\cite{matousek}, function $f_{j_1}\ldots f_{j_k}$ is largely defined by $f_{j_1}$ and $f_{j_k}$. Indeed, if we overlay the rectangular partition defined by functions $f_{j_1},\ldots, f_{j_k}$ we obtain a grid of rectangles whose width is $2^{-j_k}$ and height $2^{-(m-j_1)}$. In each of these rectangles, the function is zero (if any of the rectangles associated to the $f_{j_i}$ functions contains a point of $R\cup B$), or is further subdivided into four equal sized quadrants and in each one it is $+1$ or $-1$ alternatively.  
	
	Let $A$ be one of the rectangles of the refined grid. As shown in Lemma~\ref{lem_lower}, we have that 
	
	$$\int_A f_{j_1}\ldots f_{j_k} D_{R,B} = \tau (b-r) 2^{-2(m+j_k-j_1)-4},$$
	where $\tau \in \{-1,1\}$. This extra term appears because the product of the different functions involved can change the sign of each of the four quadrants. In any case, we have $\int_A f_{j_1}\ldots f_{j_k} D_{R,B} \leq (b-r) 2^{-2(m+g)-4}$ where $g=j_k-j_1$. 
	
	By the way the grid is constructed, there are $2^{m-j_1}\times 2^{j_k}=2^{m+g}$ many rectangles, and thus we conclude that $\int f_{j_1}\ldots f_{j_k} D_{R,B} \leq (b-r) 2^{-m-g-4}$. In order to obtain a bound $\int D_{R,B}G_k$ we sum over all possible indices.
	
	\begin{eqnarray*}
		\int D_{R,B}G_k = c^k \sum_{0 \leq j_1 < j_2 \leq \ldots < j_k \leq m}\int f_{j_1}\ldots f_{j_k}D_{R,B} \leq \frac{c^k (b-r)}{2^{m+4}} \sum_{0 \leq j_1 < j_2 < \ldots < j_k \leq m} 2^{-(j_k-j_1)}.
	\end{eqnarray*}
	
	Note that in the sum, the indices $j_2, \ldots j_{k-1}$ do not matter. Thus, we group the terms by the gap between the indices $j_1$ and $j_k$ (say, if $j_1=3$ and $j_k=7$ the gap is $4$). Note that the minimum gap is at least $k-1$ (since otherwise we do not have enough space to choose the $k-2$ indices in between) and at most $m$. Once we have a gap of $g$ there are $m-g$ options for index $j_1$. 
	
	\begin{eqnarray*}
		\int D_{R,B}G_k &\leq& \frac{c^k(b-r)}{2^{m+4}}\sum_{g=k-1}^m \sum_{j_1=0}^{m-g} \sum_{j_1 < j_2 < \ldots < j_{k-1} < j_1+g}  2^{-g} \\
		&=& \frac{c^k(b-r)}{2^{m+4}}\sum_{g=k-1}^m \sum_{j_1=0}^{m-g} {g-1 \choose k-2}2^{-g} \leq 
		\frac{c^k(b-r) m}{2^{m+4}}\sum_{g=k-1}^m {g-1 \choose k-2}2^{-g}.
	\end{eqnarray*}
	In order to upper bound the sum over all $G_k$, we first reorder the summation order.
	\begin{eqnarray*}
		\sum_{k=2}^m \int D_{R,B}G_k &\leq& \sum_{k=2}^m \frac{c^k(b-r) m}{2^{m+4}}\sum_{g=k-1}^m {g-1 \choose k-2}2^{-g} \\
		& = & \frac{(b-r) m}{2^{m+4}} \sum_{g=1}^m 2^{-g} \sum_{k=2}^{g+1} c^k {g-1 \choose k-2} \\
		& = & \frac{(b-r) m}{2^{m+4}} \sum_{g=1}^m 2^{-g} c^2 \sum_{i=0}^{g-1} {g-1 \choose i} c^i \\
		& = & \frac{(b-r) m}{2^{m+4}} \sum_{g=1}^m 2^{-g} c^2 (1+c)^{g-1} \\
		& =& \frac{(b-r) m c^2}{2^{m+5}} \sum_{g=1}^m \left(\frac{1+c}{2}\right)^{g-1}.\\
	\end{eqnarray*}
	
	The sum contains the first terms of the geometric sum $\sum_{g=1}^\infty \left(\frac{1+c}{2}\right)^{g-1} \leq \frac{2}{1-c}$ (for any $c<1$). In particular, if we set $c \leq 1/2$ we can upper bound the partial sum by $4$. Recall that $m = \Theta(\log(b+r))$ and $2^m = \Theta(b+r)$. Thus, the lemma is proven.
\end{proof}

\begin{cor}
	There exists a constant $\kappa>0$ such that $\int D_{R,B} G 
	\geq \kappa \left(\frac{(b-r)\cdot \log(b+r)}{b+r}\right)
	$.
\end{cor}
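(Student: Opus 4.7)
The plan is to combine \cref{lem_lower} and \cref{lem_upper} by exploiting the linearity of the integral, and then choose the constant $c$ small enough that the contribution from $G_1$ dominates the contribution from all higher-order terms. Since $G = G_1 + \sum_{k=2}^m G_k$, we have
\begin{equation*}
    \int D_{R,B} G \;=\; \int D_{R,B} G_1 \;+\; \sum_{k=2}^m \int D_{R,B} G_k.
\end{equation*}
\cref{lem_lower} gives a lower bound of $c\, c_1 \frac{b-r}{b+r}\log(b+r)$ on the first term, and \cref{lem_upper} gives an upper bound of $c^2 c_2 \frac{b-r}{b+r}\log(b+r)$ on the absolute value of the second term (the proof of \cref{lem_upper} already required $c \leq 1/2$, so I will respect that constraint).

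Putting these together yields
\begin{equation*}
    \int D_{R,B} G \;\geq\; (c\, c_1 - c^2 c_2)\,\frac{(b-r)\log(b+r)}{b+r} \;=\; c\,(c_1 - c\, c_2)\,\frac{(b-r)\log(b+r)}{b+r}.
\end{equation*}
I would now fix $c = \min\{1/2,\; c_1/(2 c_2)\}$, which is a positive absolute constant (independent of $R$ and $B$). This choice guarantees $c_1 - c\, c_2 \geq c_1/2 > 0$, and also satisfies the hypothesis $c \leq 1/2$ needed in \cref{lem_upper}. Setting $\kappa = c\, c_1/2$ then gives the desired bound
\begin{equation*}
    \int D_{R,B} G \;\geq\; \kappa \cdot \frac{(b-r)\log(b+r)}{b+r}.
\end{equation*}

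There is no serious obstacle here; this is essentially a bookkeeping step that packages the two previous lemmas. The only thing to be careful about is that the constant $c$ is chosen after seeing both $c_1$ and $c_2$ from the earlier lemmas, and that this choice is consistent with the $c \leq 1/2$ assumption used when bounding the geometric series in the proof of \cref{lem_upper}. Once $c$ is fixed as above, the constant $\kappa$ is absolute, as required.
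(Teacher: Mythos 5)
Your proof is correct and is essentially identical to the paper's own argument: the same decomposition $G = G_1 + \sum_{k\geq 2} G_k$, the same combination of Lemmas~\ref{lem_lower} and \ref{lem_upper} (the paper phrases it via $\int(A+B) \geq \int A - \int|B|$, which is the same point you make about needing the bound on the absolute value of the tail), and the same choice $c = \min\{1/2, c_1/(2c_2)\}$ yielding $\kappa = cc_1/2$. No differences worth noting.
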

\begin{proof}
	Apply the inequality $\int (A+B) \geq \int A -\int |B|$ and Lemmas~\ref{lem_lower} and \ref{lem_upper} to obtain:
	
	$$ \int D_{R,B} G = \int D_{R,B}G_1 + \sum_{k=2}^m \int D_{R,B}G_k  \geq c(c_1-cc_{2})\left(\frac{(b-r)\cdot \log(b+r)}{b+r}\right)$$
	
	Note that Lemmas~\ref{lem_lower} and \ref{lem_upper} holds for any value of $c$ such that $c\in (0, 1/2]$. By choosing a sufficiently small value of $c$ (say, $c=\min\{\frac{1}{2}, \frac{c_1}{2c_{2}}\}$) we obtain
	
	$$ \int D_{R,B} G \geq \frac{cc_1}{2}\left(\frac{(b-r)\cdot \log(b+r)}{b+r}\right)$$
	
\end{proof}
This completes the proof of \cref{theo_discrep}.

When $R = \emptyset$, it would be expected that we need to distribute the blue points uniformly in the unit square to have a low discrepancy. Indeed, it is also held for the red points. The following theorem implies that even if there are many red points, but the red points are concentrated in the lower half of the unit square, the discrepancy cannot be reduced. For simplicity, we only show a special case of how the discrepancy is depended on the points in $[0,1] \times [1/2, 1]$, which is good enough for our purpose in \cref{3D-lower-bound}. Notice that the same argument can be applied in a more general case.

\begin{theorem}\label{theo_discrep_2}
	For any set $R$ and $B$ of points in the unit square such that $|R|=r$, $|B|=b$ and $b>r$. Let $r_2$ and $b_2$ be the number of red and blue points in $[0,1] \times [1/2, 1]$ respectively. 
	It holds that 
	\begin{equation*} 
		D_{R,B}^* = \Omega\left(\frac{(b_2-r_2)\cdot \log(b_2+r_2)}{b_2+r_2}\right).
	\end{equation*}
\end{theorem}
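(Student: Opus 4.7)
The plan is to reduce the statement to \cref{theo_discrep} applied to a rescaled version of the points living in the top strip $[0,1]\times[1/2,1]$. If $b_2 \leq r_2$ the claimed bound is non-positive and hence trivial (since $D^*_{R,B}\geq 0$), so assume $b_2>r_2$. Consider the affine map $\phi(x,y)=(x,2y-1)$, which sends the strip bijectively onto the unit square, and define $R'=\phi(R\cap([0,1]\times[1/2,1]))$ and $B'=\phi(B\cap([0,1]\times[1/2,1]))$, so $|R'|=r_2$ and $|B'|=b_2$. Applying \cref{theo_discrep} to $R'$ and $B'$ immediately gives
$$
D^*_{R',B'}=\Omega\!\left(\frac{(b_2-r_2)\log(b_2+r_2)}{b_2+r_2}\right),
$$
so it suffices to show that $D^*_{R,B} \geq c\,D^*_{R',B'}$ for some absolute constant $c>0$.

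The key step is an algebraic identity relating the two discrepancy functions. For any $(x,y')\in[0,1]^2$, set $y=(y'+1)/2\in[1/2,1]$; using $B'[x,y']=B[x,y]-B[x,1/2]$ (and analogously for $R'$), a short computation yields
$$
D_{R',B'}(x,y') = \Bigl((b_2-r_2)-\tfrac{b-r}{2}\Bigr) x y' + D_{R,B}(x,y) - D_{R,B}(x,1/2).
$$
The ``correction'' coefficient of $xy'$ is itself controlled by $D^*_{R,B}$: indeed, $D_{R,B}(1,1/2)=\tfrac{b-r}{2}-((b-b_2)-(r-r_2))=(b_2-r_2)-\tfrac{b-r}{2}$, so $|(b_2-r_2)-\tfrac{b-r}{2}|\leq D^*_{R,B}$. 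Applying the triangle inequality to the identity then gives $|D_{R',B'}(x,y')|\leq 3\,D^*_{R,B}$ for every $(x,y')\in[0,1]^2$, hence $D^*_{R',B'}\leq 3\,D^*_{R,B}$, and combining with the lower bound on $D^*_{R',B'}$ above completes the proof.

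The only obstacle is the algebraic bookkeeping in the identity above, together with the observation that the correction term $(b_2-r_2)-\tfrac{b-r}{2}$ is nothing but $D_{R,B}(1,1/2)$ and is therefore itself bounded by $D^*_{R,B}$. Once this is noticed, the argument is a clean rescaling reduction and, as the paper remarks, the same technique applies to any horizontal (or, symmetrically, vertical) sub-strip of the unit square.
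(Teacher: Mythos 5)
Your proposal is correct and takes essentially the same route as the paper's proof: restrict to the top strip, rescale it to the unit square, apply Theorem~\ref{theo_discrep} there, and relate the two discrepancy functions through exactly the algebraic identity you write down (the paper's three-case contradiction argument is just a less tidy packaging of your single bound $D^*_{R',B'}\le 3\,D^*_{R,B}$, and its second case is precisely your observation that the correction coefficient $(b_2-r_2)-\tfrac{b-r}{2}$ is itself a value of $D_{R,B}$). The only caveat is that points lying exactly on the line $y=1/2$ make $B'[x,y']=B[x,y]-B[x,1/2]$ off by the number of such points; replace $B[x,1/2]$ and $D_{R,B}(x,1/2)$ by their one-sided limits from below (the paper's $\varepsilon$ device), which changes nothing since $\left|D_{R,B}(x,1/2-\varepsilon)\right|\le D^*_{R,B}$ for every $\varepsilon>0$.
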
 

\begin{proof}
	Let $R_2$ and $B_2$ be the set of red and blue points in $[0,1] \times [1/2, 1]$ respectively.
	Consider the upper half of the unit square $[0,1] \times [1/2, 1]$ and rescale the vertical length to be $1$. By \cref{theo_discrep}, there exists a point $(x,2y)$ such that $| D_{R_2,B_2}(x,2y) | = |2xy(b_2-r_2) - (B_2[x,2y] - R_2[x,2y]) | \geq 2c\left(\frac{(b_2-r_2)\cdot \log(b_2+r_2)}{b_2+r_2}\right)$ for some constant $c$.		
	
	Then, we map the point $(x,2y)$ back to a point $(x,1/2+y)$ in the original unit square. We will show that either $D_{R,B}(x,1/2+y)$ or $D_{R,B}(x,1/2-\epsilon)$ would give us the desired lower bound, where $\epsilon$ is an arbitrarily small constant such that rectangle $[0,1] \times [0, 1/2-\epsilon]$ only contains $B \setminus B_2$ and $R \setminus R_2$.
	
	If $|D_{R,B}(x,1/2+y)| \geq  c\left(\frac{(b_2-r_2)\cdot \log(b_2+r_2)}{b_2+r_2}\right)$, we are done.
	
	If $|(b-r)/2 - (b_2-r_2) | \geq c/4 \left(\frac{(b_2-r_2)\cdot \log(b_2+r_2)}{b_2+r_2}\right)$, the proof is also done. Because 
	\begin{eqnarray*}
		& &	|D_{R,B}(1, 1/2-\epsilon)|\\
		& \stackrel{(\ref{D_m})}{=} & |(b-r)(1/2-\epsilon) - (B[1,1/2-\epsilon] - R[1, 1/2-\epsilon])|\\
		& = & |(b-r)(1/2-\epsilon) - (b-r - (b_2 - r_2))|\\
		& = & |(b-r)/2 - (b_2-r_2) - (b-r)\epsilon|\\
		& > & c/8 \left(\frac{(b_2-r_2)\cdot \log(b_2+r_2)}{b_2+r_2}\right).
	\end{eqnarray*}	
	
	Suppose that the above two cases do not hold, we have $|D_{R,B}(x,1/2+y)| < c\left(\frac{(b_2-r_2)\cdot \log(b_2+r_2)}{b_2+r_2}\right)$ and $|(b-r)/2 -(b_2-r_2 ) | < c/4 \left(\frac{(b_2-r_2)\cdot \log(b_2+r_2)}{b_2+r_2}\right)$.
	Let $R_1 = R \setminus R_2$ and $B_1 = B \setminus B_2$, which are inside the rectangle $[0,1] \times [0, 1/2-\epsilon]$. Consider 
	\begin{eqnarray*}
		& & D_{R,B}(x,1/2+y)\\
		& = & (b-r)x(1/2+y) - (B[x,1/2+y] - R[x,1/2+y])\\
		& = & (b-r)x(1/2+y) - (B_2[x,1/2+y] - R_2[x,1/2+y] + B_1[x,1/2-\epsilon] - R_1[x,1/2-\epsilon])\\
		& = & (b - r)x(1/2-\epsilon) - (B_1[x,1/2-\epsilon] - R_1[x,1/2-\epsilon]) + (b-r)x\epsilon\\
		& & + (b-r)xy -(B_2[x,1/2+y] - R_2[x,1/2+y] )\\
		& = & D_{R,B}(x,1/2-\epsilon) + (b-r)xy -(B_2[x,1/2+y] - R_2[x,1/2+y] ) + (b-r)x\epsilon\\
		& > & D_{R,B}(x,1/2-\epsilon) + 2(b_2-r_2)xy - (B_2[x,1/2+y] - R_2[x,1/2+y] )\\
		&   & - c/2 \left(\frac{(b_2-r_2)\cdot \log(b_2+r_2)}{b_2+r_2}\right) + (b-r)x\epsilon \\
		& = & D_{R,B}(x,1/2-\epsilon) +  D_{R_2,B_2}(x,2y) - c/2 \left(\frac{(b_2-r_2)\cdot \log(b_2+r_2)}{b_2+r_2}\right) + (b-r)x\epsilon
	\end{eqnarray*}	
	The first inequality is given by $b-r > 2(b_2-r_2) - c/2 \left(\frac{(b_2-r_2)\cdot \log(b_2+r_2)}{b_2+r_2}\right)$.
	Since $|D_{R,B}(x,1/2+y)| < c\left(\frac{(b_2-r_2)\cdot \log(b_2+r_2)}{b_2+r_2}\right)$ and $| D_{R_2,B_2}(x,2y) | \geq 2c\left(\frac{(b_2-r_2)\cdot \log(b_2+r_2)}{b_2+r_2}\right)$, we can conclude that $|D_{R,B}(x,1/2-\epsilon)| = \Omega(\frac{(b_2-r_2)\cdot \log(b_2+r_2)}{b_2+r_2})$.
\end{proof}



\section{Lower bound for two dimensional weak CDRs}\label{TradeoffHausdorffErrorDeathVertices}

Before giving the proof of \cref{theo_lower_2}, we recall that a proof for a proper CDR (i.e., one without inner leaves) was given in~\cite{cknt-cdg-09j}. Our proof follows the same spirit, so we first give an overview of their proof and describe what changes when we introduce inner leaves. 

\begin{lemma}\label{lem_distance}
	Given a CDR, a point $p=(x,y)\in L_N$, and an integer $n<N$, let $p'=(x',y')\in L_n$ be the unique point of $L_n$ that is in $dig(o,p)$. The Hausdorff error of the CDR is at least $|x'-x\cdot \frac{n}{N}|$. 
\end{lemma}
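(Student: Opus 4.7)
The plan is to lower-bound the Hausdorff error by exhibiting a single point of $dig(o,p)$ that is already $|x' - x n/N|$ far from $\overline{op}$ in the $L_\infty$ metric. The natural candidate is $p'$ itself, since $p'\in dig(o,p)$ by construction, so it suffices to show that $\min_{r\in \overline{op}} \delta(p',r) \geq |x' - x n/N|$. Write $\Delta := |x' - xn/N|$ for brevity, and (by symmetry) assume $x,y>0$; the degenerate cases $x=0$ or $y=0$ force $x'=0$ or $y'=0$ by axiom (S5), so $\Delta=0$ and the bound is vacuous.

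First I would parameterize the Euclidean segment as $r(t) = (tx,ty)$ for $t\in[0,1]$ and write $\delta(p', r(t)) = \max(|x'-tx|,\, |y'-ty|)$. The key observation is that the line $u+v=n$ through $p'$ meets $\overline{op}$ at the unique point $r^{\ast} = (xn/N,\, yn/N)$, which occurs at $t^{\ast} = n/N$. Since $(x' - t^{\ast}x) + (y' - t^{\ast}y) = n - N\cdot n/N = 0$, the two coordinate differences are opposite in sign and of equal magnitude, giving $\delta(p', r^{\ast}) = \Delta$ directly.

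Next I would argue that $\Delta$ is in fact the minimum over all $t\in[0,1]$. Both maps $t\mapsto x'-tx$ and $t\mapsto y'-ty$ are strictly decreasing, and at $t^{\ast}$ they take the values $+\Delta$ and $-\Delta$ respectively (for the case $x' \geq xn/N$; the other case is analogous). Hence if $t>t^{\ast}$ then $y'-ty < -\Delta$ and so $|y'-ty|>\Delta$, while if $t<t^{\ast}$ then $x'-tx>\Delta$; in either direction $\max(|x'-tx|, |y'-ty|)\geq\Delta$. Combining, $\min_{t} \delta(p', r(t)) = \Delta$, and therefore $H(dig(o,p),\overline{op}) \geq h(dig(o,p),\overline{op}) \geq \Delta$, which in turn lower-bounds the CDR's error.

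This is essentially a direct computation and I do not expect any real obstacle; the only care needed is bookkeeping for the sign of $x' - xn/N$ and handling the degenerate axis-aligned cases via (S5). The content reduces to a single geometric fact: along a Euclidean segment, the anti-diagonal through any point of the digital path meets the segment at its $L_\infty$-closest point, and the distance to that intersection is precisely the horizontal deviation $|x' - xn/N|$.
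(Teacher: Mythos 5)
Your proof is correct, but it lower-bounds the Hausdorff distance from the opposite side compared to the paper. You fix the digital point $p'$ and show that \emph{every} point of $\overline{op}$ is at $L_\infty$-distance at least $\Delta=|x'-xn/N|$ from it, i.e.\ you bound $h(dig(o,p),\overline{op})$; the sign/monotonicity bookkeeping with $t\mapsto x'-tx$ and $t\mapsto y'-ty$ is sound, and the degenerate axis-aligned cases are correctly dispatched via (S5). The paper instead fixes the Euclidean point $p\cdot\frac{n}{N}$ and shows that \emph{every} vertex of $dig(o,p)$ lies outside the open $L_\infty$-ball of radius $\Delta$ around it (with $p'$ on the boundary), i.e.\ it bounds $h(\overline{op},dig(o,p))$; that direction genuinely needs the staircase monotonicity of the whole path to exclude vertices on the other diagonals $L_m$, $m\neq n$, from the ball. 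Both quantities lower-bound the symmetric Hausdorff distance, so either suffices. Your version is slightly more economical — it only uses that $p'$ lies on the path and that the two coordinate deviations along the anti-diagonal $L_n$ sum to zero, with no reasoning about any other vertex — while the paper's version gives the marginally stronger statement that the segment point $p\cdot\frac{n}{N}$ itself is far from the entire digital path, which is the phrasing it reuses in later arguments.
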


\begin{proof}
	This result was shown by Chun {\em et al.}~\cite{cknt-cdg-09j} (Lemma 3.5, in Cases 1 and 2). We give the proof for completeness. Consider the $L$-infinity ball of radius $|x'-x\cdot\frac{n}{N}|$ centered at $p\cdot\frac{n}{N}$. By construction, this ball contains $p'$ in its boundary. Because of the monotonicity axiom, no vertex of $dig(o,p)$ can be in the interior of the ball. 
	In particular, when measuring the Hausdorff distance of point $p \cdot \frac{n}{N}\in \overline{op}$ we get an error of at least $|x'-x \cdot\frac{n}{N}|$.
\end{proof}

Consider any point $p\in L_N$ and virtually sweep a line of slope $-1$ from the origin all the way to $L_N$. During the sweep, the intersection between the diagonal line and either the Euclidean segment $\overline{op}$ or the digital one $dig(o,p)$ will be a point. Lemma~\ref{lem_distance} says that if we can find an instant of time for which two intersection points are at distance $\partial$ from each other, then the Hausdorff error of the whole CDR must be $\Omega(\partial)$ (see \cref{fig:snalpha}).

In order to find this instant of time we see how much the subtrees grow. Consider a consecutive set of $I$ vertices in some intermediate layer $L_n$. Let $\mathcal{L}(I)$ be the vertices of $L_N$ whose digital path to the origin passes through some vertex of $I$. If the CDR has small error, we need $\mathcal{L}(I)$ to have roughly $\frac{N}{n}|I|$ many points. The difference between the expected number of vertices and $|\mathcal{L}(I)|$ combined with Lemma~\ref{lem_distance} will give a lower bound on the Hausdorff error.

\begin{figure}
	\centering
	\includegraphics[scale=0.4]{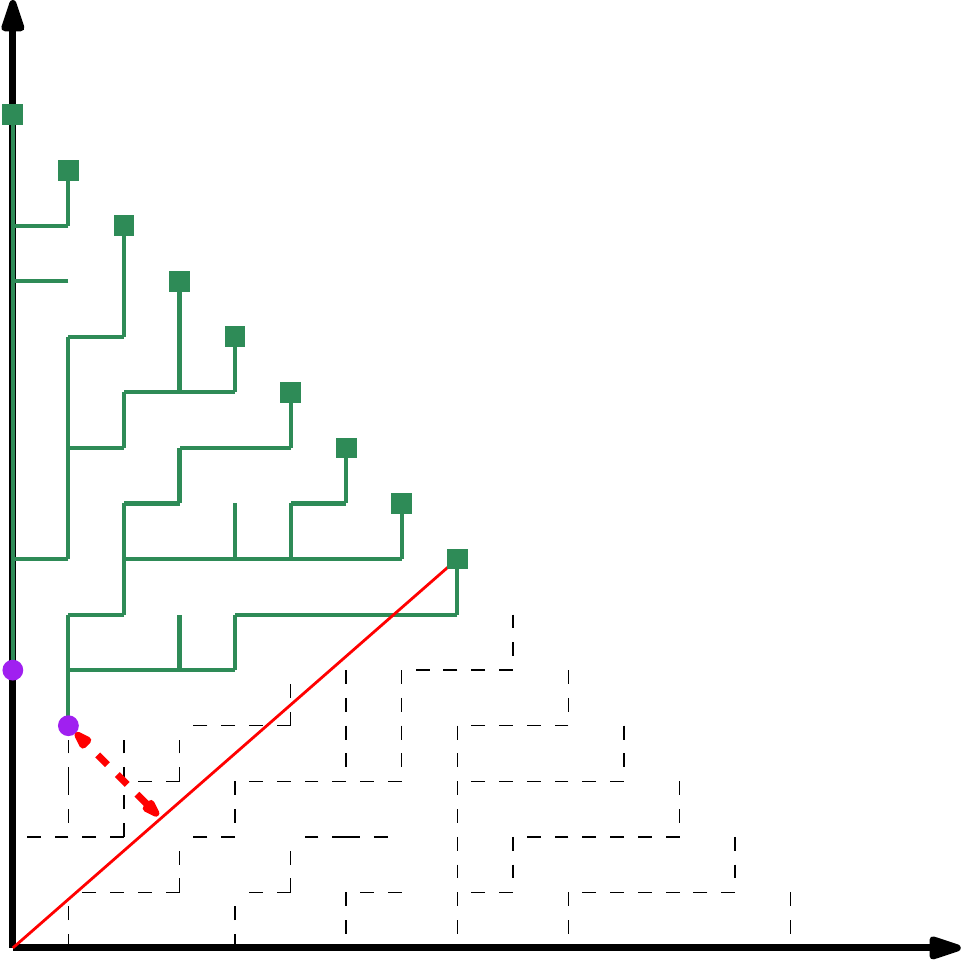}
	\caption{Illustration of why the two sets $I$ (purple disks) and $\mathcal{L}(I)$ (green squares) should have proportional sizes. If the size of $\mathcal{L}(I)$ grows drastically (as shown in the figure), the point of the highest $x$-coordinate in $\mathcal{L}(I)$ must make a significant detour to pass through $I$, causing a large error. A similar effect happens if the size of $\mathcal{L}(I)$ is comparatively small.}
	\label{fig:snalpha}
\end{figure}

Our proof follows the same spirit (transform the tree into a pointset, use discrepancy to find a subset with too many/too few children and use Lemma~\ref{lem_distance} to find a large error). Although all three steps follow the same spirit, they need major changes to account for the possibility of inner leaves.

The biggest change is how we map the tree. In proper CDRs each line has a unique split vertex and always extends to $L_N$. Thus, a region with a large number of split vertices directly implies a large error. In our setting, we could potentially have a region with many split vertices followed by a large number of inner leaves to cancel out the growth. This is why we need two major changes: first we now color the points red and blue depending on whether they are split vertices or inner leaves. We also introduce a second dimension to track when the children of a split vertex stop extending.  
Intuitively speaking, the $x$-coordinate of our mapping will be similar to the mapping done by Chun {\em et al.}~\cite{cknt-cdg-09j} whereas the $y$-coordinate represents time. Thus, the difference in $y$-coordinates between red and blue points can be used to determine for how long are the two children of a split vertex alive (the longer the difference in $y$-coordinates, the further away that the two children extend).

We now use the mapping of \cref{TransformationSection} together with the two colors 
discrepancy (\cref{theo_discrep,theo_discrep_2}) to show a lower bound on the error of weak CDRs. 
The discrepancy result in \cref{theo_discrep} considers the points in the whole unit square. Due to some technical reasons, in \cref{3D-lower-bound} we will need a discrepancy result for the points in the {\em upper half} of the unit square instead (\cref{theo_discrep_2}). The difference between the two theorems is just a constant factor and thus would have little implication. Here we use \cref{theo_discrep_2} and prove the result in terms of the number of inner leaves in the upper half. Specifically, we show the following result.
\weakCDR*

\begin{proof}
	Given a weak CDR and its associated tree $T$, consider its transformation into the sets $R$ and $B$ of red and blue points defined by $\pi$. 
	Let $b_2$ and $r_2$ be the numbers of blue and red points in the rectangle $[0,1] \times [1/2,1]$ respectively. By Lemma~\ref{lem_one_line}, we have $b_2 - r_2 = \lfloor N/2 \rfloor$.
	We apply the discrepancy result (\cref{theo_discrep_2}) with $b_2-r_2 = \lfloor N/2 \rfloor$ and $r_2 = \kappa_2$, and obtain that there exists $\alpha, \beta \in [0,1]$ such that $\abs{B[\alpha,\beta] - R[\alpha,\beta] - N \cdot \alpha \cdot \beta} > c' \cdot \frac{N \cdot \log N}{N+\kappa_2}$. 
	
	We want to use \cref{theo_interp}
	on the vertex of $T$ whose image is $(\alpha,\beta)$.
	Naturally, such a vertex need not exist, but we will find one nearby whose associated discrepancy is also high. Let $n= \lfloor N \cdot \beta \rfloor$ and observe that $B[\alpha,\beta]=B[\alpha,\frac{n}{N}]$; indeed, by the way we transform points, their $y$-coordinates are of the form $i/N$. However, by definition of $n$ we know that $\beta$ is between $n/N$ and $(n+1)/N$ and thus no point can lie in the horizontal strip $y\in (n/N, \beta]$ (by the same argument we also have $R[\alpha,\beta]=R[\alpha,\frac{n}{N}]$). 
	
	If we substitute $\beta$ in the previous equation we get
	\begin{equation*}
		\abs{B\left[\alpha,\frac{n}{N}\right] - R\left[\alpha,\frac{n}{N}\right]-\alpha n} > c' \cdot \frac{N \log N}{N+\kappa_2} -1 \geq c'' \cdot \frac{N \cdot \log N}{N+\kappa_2} 
	\end{equation*}
	for a large enough $N$, $\kappa_2 \in O(N \log N)$ and for some $c''>0$. We get the additional $1$ term because of the rounding in the definition of $n$.
	
	Now we need to do a similar operation for $\alpha$. Let $q_i=(i,n-i)$ be a vertex of $L_n$. By Lemma~\ref{lem_mapsorted} the image of the auxiliary function $M(q_i)$ monotonically increases as $i$ grows. Let $Q=\{q_i\colon M(q_i) \leq \alpha\}$ and $\alpha'=max_{q_i\in Q} M(q_i)$. Note that, by definition of the set $Q$, it trivially holds that $\alpha'\leq \alpha$.
	
	\begin{lemma}
		\label{lem:approx_alpha}
		$B[\alpha,\frac{n}{N}] - R[\alpha,\frac{n}{N}]=B[\alpha',\frac{n}{N}] - R[\alpha',\frac{n}{N}]$
	\end{lemma}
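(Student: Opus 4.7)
The plan is to prove the stronger statement that the half-open strip $S=(\alpha',\alpha]\times[0,n/N]$ contains \emph{no} point of $R\cup B$, which immediately yields the desired equality. I would split the argument by whether a prospective point of $R\cup B$ sits on the top line $y=n/N$ or strictly below it.

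For the top line, any point of $R\cup B$ with $y$-coordinate exactly $n/N$ is the image $\pi(q_j)=(M(q_j),n/N)$ of a split vertex or inner leaf $q_j\in L_n$, and therefore has $x$-coordinate $M(q_j)$. By the very definition of $\alpha'=\max\{M(q_j):q_j\in L_n,\; M(q_j)\le\alpha\}$, the half-open interval $(\alpha',\alpha]$ contains no value $M(q_j)$, so no point of $R\cup B$ lies on the top edge of $S$.

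For points with $y$-coordinate strictly less than $n/N$, I would invoke the intermediate claim~(\ref{claim}) established inside the proof of Lemma~\ref{lem_interp}: the non-decreasing step function $x\mapsto B[x,(n-1)/N]-R[x,(n-1)/N]$ has all its $+1$ jumps precisely at the values in $\{M(\gamma(q_j)):q_j\in L_n\}\cup\{1\}$. Hence it suffices to show that none of the $M(\gamma(q_j))$ lies in $(\alpha',\alpha]$. By Lemma~\ref{lem_mapsorted}, $[0,1]$ is partitioned into the disjoint, $j$-sorted subtree intervals $[\min(T(q_j)),\max(T(q_j))]$, and both $M(q_j)$ and $M(\gamma(q_j))$ belong to the $j$-th such interval. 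For $j<i$ this interval lies entirely below $\min(T(q_i))\le \alpha'$, so $M(\gamma(q_j))<\alpha'$. For $j>i$ it lies entirely above $\max(T(q_i))\ge \alpha'$; combined with the fact that $\alpha<M(q_{i+1})$ (forced by the maximality of $\alpha'$ in $Q$) this prevents $M(\gamma(q_j))$ from dropping into $(\alpha',\alpha]$.

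The main obstacle I anticipate is the boundary case $j=i$ (and symmetrically $j=i+1$) when $q_i$ is itself a split vertex, since then $M(q_i)$ and $M(\gamma(q_i))$ live in the same subtree interval and so could a priori straddle $\alpha'$. The delicate bookkeeping here exploits the non-preferred/preferred asymmetry in the definition of $M$ on split vertices together with Lemma~\ref{emptyrectangle}, which pinpoints exactly where the blue point paired with $q_i$ appears and rules out $M(\gamma(q_i))\in(\alpha',\alpha]$. Since the final lower bound needs this equality only up to additive constants, even a $\pm O(1)$ slack in this edge case would leave the downstream $\Omega(N\log N/(N+\kappa_2))$ bound unaffected.
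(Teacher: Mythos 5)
Your headline claim --- that the half-open strip $(\alpha',\alpha]\times[0,\frac{n}{N}]$ contains \emph{no} point of $R\cup B$ --- is strictly stronger than the lemma and is false in general. A split vertex $s$ and its associated inner leaf $\ell$ (the pair of Lemma~\ref{lem_bijection}) can both lie strictly below $L_n$; they then contribute one blue and one red point with the common $x$-coordinate $M(s)=M(\ell)$ and heights below $\frac{n}{N}$, and since $M(\ell)$ sits in the open gap between the subtree intervals of two consecutive vertices of $L_{n'+1}$ it equals no $M(q_j)$, so for suitable $\alpha$ it lands in $(\alpha',\alpha]$. Accordingly, the paper does not argue emptiness: it argues that the points inside the symmetric-difference rectangle come in same-$x$ blue/red \emph{pairs}, leaving $B-R$ unchanged. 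Relatedly, your inference from claim~(\ref{claim}) is a non sequitur for emptiness: that claim locates the $+1$ jumps of $x\mapsto B[x,\frac{n-1}{N}]-R[x,\frac{n-1}{N}]$, and the absence of a jump at $x$ does not exclude a cancelling pair at $x$. This part is repairable, since constancy of the difference (not emptiness) is all the lemma asserts and the difference function has no $-1$ jumps.

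The real trouble is exactly the boundary case you flag, and your proposed fix points the wrong way. If $q_i$ (with $M(q_i)=\alpha'$) is a split vertex whose preferred child is the right one, then $\alpha'=M(q_i)<M(\gamma(q_i))$, and Lemma~\ref{emptyrectangle} does not rule out $M(\gamma(q_i))\in(\alpha',\alpha]$ --- on the contrary, it \emph{guarantees} a blue point with $x$-coordinate $M(\gamma(q_i))$ at height at most $\frac{n-1}{N}$ (the split vertex $s$ with $M(s)=M(\gamma(q_i))$ lies strictly below $L_n$), whose partner red point, if any, lies above $L_n$. Whenever $M(q_i)<M(\gamma(q_i))\le\alpha$ this blue point sits in the strip uncancelled and the two sides of the stated equality differ by $1$. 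A concrete instance: on $\mathcal{G}^+_4$ take the proper CDR whose split vertices are $(0,0),(1,0),(1,1),(2,1)$, each preferring its right child; with $n=2$ and $\alpha=0.8$ one gets $\alpha'=\frac14$ and $B[0.8,\frac12]=3\ne 2=B[\frac14,\frac12]$. The paper's own proof glosses over the same case (its assertion $\pi(m)=(M(\ell),\frac{n}{N})$ fails when $m$ is a split vertex), so the equality should really carry an additive $O(1)$ slack; your closing remark that such slack leaves the downstream $\Omega(\frac{N\log N}{N+\kappa_2})$ bound intact is correct and is the right salvage, but it is a concession that neither your route nor the claimed boundary-case fix proves the exact statement.
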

	
	\begin{proof}
		The difference between the two rectangles is the rectangle $\Delta$ whose opposite corners are $(\alpha',0)$ and $(\alpha, n/N)$, and one of the boundary $\overline{(\alpha',0)(\alpha',\frac{n}{N})}$ is open. 
		We claim that red and blue points are paired (sharing the same $x$-coordinate) in $\Delta$ (and thus, for each red point that we remove we are also removing a blue one). 
		By Lemma~\ref{lem_bijection}, we know that all the blue points have different $x$-coordinates, so do red points. Hence, if there are red and blue points on the same vertical line, they must be the only pair in that vertical line. First notice that if there is a red point in $\Delta$, there also exists a blue point in $\Delta$ with the same $x$-coordinate and below the red point.
		By the virtual walk that we define the auxiliary function, every split vertex is closer to the origin than the corresponding leaf. Hence, after the transformation $\pi$, if there is a red point, then there must exist a blue point with the same $x$-coordinate (by Lemma~\ref{lem_bijection}) and smaller $y$-coordinate. 
		Then, we will show that if there is a blue point in $\Delta$, there also exists a red point in $\Delta$ with the same $x$-coordinate.
		
		Assume, for the sake of contradiction that there exists a blue point $p$ in $\Delta$ such that there does not exist a red point $q$ with the same $x$-coordinate as $p$ in $\Delta$. 
		Let $s$ be the split vertex whose image is $p$. By definition of the transformation $\pi$, the $x$-coordinate of $p$ is $M(s)$, which is between $\alpha'$ and $\alpha$. We apply Lemma~\ref{lem_bijection} to find the unique leaf $\ell$ such that $M(s)=M(\ell)$. 
		Since $\pi(\ell)\not\in\Delta$, we have that $\ell_x+\ell_y>n$. Let $m$ be the unique vertex of $L_n$ that is in the path from $s$ to $\ell$. It follows that $\pi(m)=(M(\ell), \frac{n}{N})\in \Delta$. This gives a contradiction with the definition of $\alpha'$, and thus implies that if there exists a blue point in $\Delta$, then there also exists a red point in $\Delta$ with the same $x$-coordinate.
	\end{proof}
	
	Thus, given a pair $(\alpha,\beta)$ whose associated rectangle has high discrepancy, we have {\em snapped} it to the pair $(\alpha', \frac{n}{N})$ that defines another rectangle with high discrepancy. More importantly, by definition of $Q$, we know that $\pi(q_{|Q|-1})=(\alpha', \frac{n}{N})$. Note that $q_{|Q|-1}$ need not be a split vertex or an inner leaf (and thus, $(\alpha', \frac{n}{N})$ may not be a point of $R \cup B$).
	
	Let $b'=B[\alpha',\frac{n}{N}]$ and $r'=R[\alpha',\frac{n}{N}]$. 
	If we apply \cref{theo_interp} to point $q_{|Q|-1}$ we get that $b' - r' - 2 \leq |Q|-1 \leq b'-r'$. This set $Q$ is the one that makes the role of $I$ in the proof overview: we know that vertices of $Q$ are the ones that extend to cover all the vertices of $L_N$ whose image is $\alpha'$ or less. As such, we would expect $|Q|$ to contain roughly $n\alpha'$ elements. However, the discrepancy result tells us that the size of $Q$ is $c''\frac{N\log N}{N+\kappa_2}$ units away from that value.
	We say that $p$ is {\em productive} if some point of $T(p)$ is in $L_N$ (this is equivalent to the fact that $p$ can be extended to reach the boundary).
	Let $k \leq b'-r'-2$ be the biggest integer such that $q_k$ is productive. Note that $k$ is well defined because $q_0$ is always productive ($(0,n)$ always extends to $(0,N)$).
	The proof now considers a few cases depending on whether $k$ is small or large (specifically, we say that $k$ is {\em small} if $|Q|-1-k \geq (b'-r'-2)-k> \frac{c''}{2} \cdot \frac{N  \log N}{N+\kappa_2}$, {\em large} otherwise) and if $Q$ contains too few or too many points.
	
	\begin{figure}
		\centering
		\includegraphics[width=0.7\textwidth]{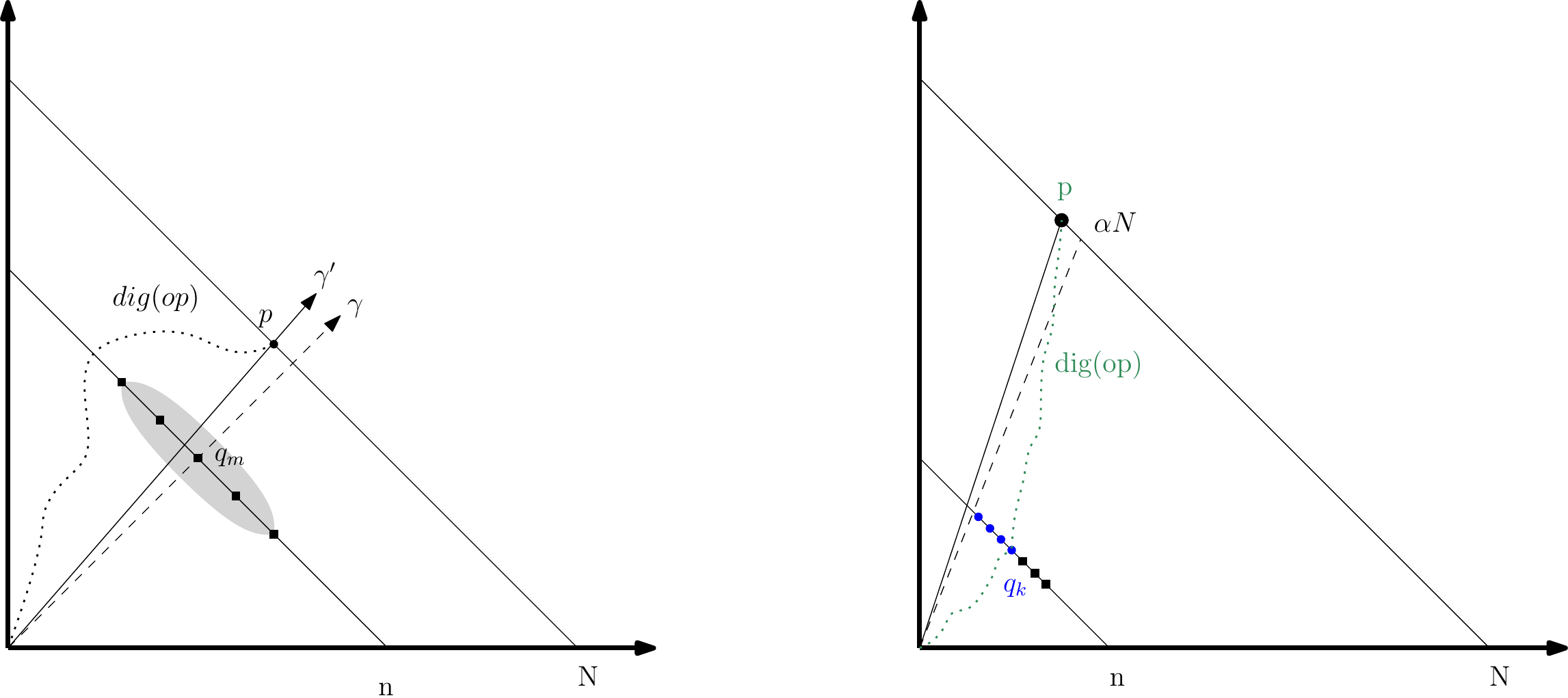}
		\caption{(left) When $k$ is small we have  $\Omega(\frac{N\log N}{N+\kappa_2})$ consecutive vertices in $L_n$ that are not productive (shown as squares). In particular, the ray $\gamma$ through the middle point must make a large detour. (right) When $k$ is large, there is a digital path through $q_k$ with a big detour.}
		\label{fig:Ksmall}
	\end{figure}
	
	\begin{description}
		\item[$k$ is small] Recall that we looked for the largest possible $k$ (such that $q_k$ is  productive). Thus, if $k$ is small, we have many points in layer $L_n$ that are consecutive and not productive. In particular, none of the vertices in $q_{b'-r'-\lfloor\frac{c''}{2} \cdot \frac{N \log N}{N+\kappa_2}\rfloor}, \ldots, q_{b'-r'-2}$ are productive. Let $q_m=q_{b'-r'-\lfloor\frac{c''}{4} \cdot \frac{N  \log N}{N+\kappa_2}\rfloor}$ (note that this point is surrounded by non-productive points in both sides along $L_n$).
		
		Shoot a ray $\gamma$ from $o$ towards $q_m$.
		Let $p$ be the vertex on $L_N$ that is closest to $\gamma$. Observe that the $|| \cdot ||_{\infty}$ distance between $\gamma$ and $p$ is at most $1/2$. Let $\gamma'$ be the ray shooting from $o$ towards $p$. Similarly, the $|| \cdot ||_{\infty}$ distance between $\gamma'$ and $q_m$ is at most $1/2$ (see \cref{fig:Ksmall}, left). 
		
		
		We now apply Lemma~\ref{lem_distance} to $dig(o,p)$. We know that the Euclidean segment $\overline{op}$ is close to $q_m$. The digital segment must cross $L_n$ and is far from $q_m$ (the closest it can pass is either $q_{b'-r'-\lfloor\frac{c''}{2} \cdot \frac{N  \log N}{N+\kappa_2}\rfloor-1}$ or $q_{b'-r'-1}$). That is, we know that the intersection of $\overline{op}$ with the line $x+y=n$ is at most half a unit away from $q_m$. Similarly, the intersection with $dig(o,p)$ is at least $\lfloor\frac{c''}{4} \cdot \frac{N  \log N}{N+\kappa_2}\rfloor$ from $q_m$. Thus, by triangle inequality the $|| \cdot ||_{\infty}$ distance between $dig(o,p)$ and $\overline{op}$ is at least $\lfloor\frac{c''}{4} \cdot \frac{N \log N}{N+\kappa_2}\rfloor-3/2 \in \Omega (\frac{N\log N}{N+\kappa_2})$. 

		\item[$k$ is large and $b'-r'\geq n\alpha+c''\cdot \frac{N \log N}{N+\kappa_2}$] Look at the $x$-coordinate of $q_k$. We know that $Q$ has at least $b'-r'-1\geq n\alpha+c''\cdot \frac{N \log N}{N+\kappa_2}-1$ many elements, and $k$ is among the productive vertices with the largest $x$-coordinate. In particular, the $x$-coordinate of $q_k$ is at least $b'-r' -2 \geq n\alpha+ \frac{c''}{2}\cdot \frac{N  \log N}{N+\kappa_2}-2$.
		
		Let $p$ be the unique leaf of $L_N$ such that $M(p)=M(q_k)$. We now apply Lemma~\ref{lem_distance} to $dig(o,p)$ at the line $x+y=n$. By definition of $p$, we have that $dig(o,p)$ passes through $q_k$. Now, by definition of $Q$, we know that $M(q_k) \leq \alpha$ and in particular the $x$-coordinate of $p$ is at most $\alpha N$ (see \cref{fig:Ksmall}, right). Thus, the Euclidean segment $\overline{op}$ must intersect at a point whose $x$-coordinate is at most $\alpha n$. 
		
		That is, when we look at the Euclidean and the digital segments along line $x+y=n$, the Euclidean crossing happens at $x$-coordinate at most $\alpha n$. However, the $x$-coordinate of the digital crossing is at least $\alpha n + \frac{c''}{2}\cdot \frac{N  \log N}{N+\kappa_2} -1$. By Lemma~\ref{lem_distance} we conclude that the error must be $\Omega(\frac{N  \log N}{N+\kappa_2})$ as claimed.
		
		\item[$b'-r'< n\alpha - c''\cdot \frac{N \log N}{N+\kappa_2}$] This proof is very similar to the previous case. Consider the vertex $p=(\lfloor \alpha N \rfloor, N-\lfloor \alpha N \rfloor)\in L_N$ and apply Lemma~\ref{lem_distance} to $dig(o,p)$ and $\overline{op}$. 
		
		At line $x+y=n$ the Euclidean segment $\overline{op}$ passes through a point whose $x$-coordinate is $\lfloor \alpha N \rfloor \cdot \frac{n}{N} \geq \lfloor \alpha n \rfloor -1$. By definition, $M(p)\leq \alpha$ and thus $dig(o,p)$ must pass through some vertex $q$ of $Q$. In particular, the $x$-coordinate of $q$  is at most $b'-r' < n\alpha - c''\cdot \frac{N  \log N}{N+\kappa_2}$, giving the $\Omega(\frac{N  \log N}{N+\kappa_2})$ error and completing the proof of \cref{theo_lower_2}.\qedhere
	\end{description}
\end{proof}

Note that if we use \cref{theo_discrep} instead, the same argument follows and we would get the following result.
\begin{theorem}\label{theo_lower}
	For any $N \in \N$, any weak CDR defined on $\domain \subset \Z^2$ with $\kappa_1$ inner leaves has $\Omega(\frac{N  \log N}{N+\kappa_1})$ error.
\end{theorem}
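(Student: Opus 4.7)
The plan is to combine the mapping $\pi$ from \cref{TransformationSection} with the bichromatic discrepancy bound in a manner analogous to H\r{a}stad's two-dimensional argument, but correctly accounting for the presence of inner leaves. First I would transform the given weak CDR into its red-blue pointset $R, B$ in the unit square. Restricting attention to the strip $[0,1]\times[1/2,1]$, \cref{lem_one_line} gives $b_2-r_2=\lfloor N/2\rfloor$ (summing excess blue over red in each diagonal), while by hypothesis $r_2\leq \kappa_2$. Plugging these into \cref{theo_discrep_2} yields a pair $(\alpha,\beta)\in[0,1]^2$ whose corner-rectangle has discrepancy $\bigl|B[\alpha,\beta]-R[\alpha,\beta]-N\alpha\beta\bigr| = \Omega(\tfrac{N\log N}{N+\kappa_2})$.

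Next I would snap $(\alpha,\beta)$ to a grid-aligned pair corresponding to an actual vertex. Since every transformed point has $y$-coordinate of the form $i/N$, setting $n=\lfloor N\beta\rfloor$ changes $B[\alpha,\beta]$ and $R[\alpha,\beta]$ by nothing and the target value $N\alpha\beta$ by at most $1$. To snap $\alpha$, I would use \cref{lem_mapsorted} to list the vertices of $L_n$ as $q_0,\dots,q_n$ with $M(q_i)$ increasing in $i$, and define $\alpha'\leq\alpha$ as the largest $M(q_i)$ not exceeding $\alpha$. A small sublemma would show that red and blue points lying in the thin strip $(\alpha',\alpha]\times[0,n/N]$ pair up by $x$-coordinate and therefore cancel out in the counts: by \cref{lem_bijection} each red point shares its $x$-coordinate with a unique blue point from the corresponding split vertex, and the walk $\gamma$ forces both to live in the same vertical line; in the other direction, any unpaired blue point in the strip would yield a vertex of $L_n$ with $M$-value in $(\alpha',\alpha]$, contradicting the maximality of $\alpha'$. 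Hence the discrepancy at $(\alpha',n/N)$ is still $\Omega(\tfrac{N\log N}{N+\kappa_2})$.

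Writing $b'=B[\alpha',n/N]$, $r'=R[\alpha',n/N]$, and $Q=\{q_i:M(q_i)\leq\alpha\}$, I would apply \cref{theo_interp} to $q_{|Q|-1}$ to get $|Q|-1\in[b'-r'-2,\, b'-r']$; that is, the number of consecutive $L_n$-vertices with $M$-image at most $\alpha'$ is close to $b'-r'$, whereas in a ``faithful'' CDR one would expect approximately $n\alpha$. Let $k$ be the largest index with $q_k$ productive (i.e.\ reaching $L_N$), which is well defined since $q_0=(0,n)$ always extends to $(0,N)$. I would then split the argument into three cases matching those foreshadowed in the excerpt: (i) $k$ is small, meaning $\Omega(\tfrac{N\log N}{N+\kappa_2})$ consecutive vertices at the top of $Q$ are non-productive, so shooting a ray from the origin through a vertex $q_m$ in the middle of this band and applying \cref{lem_distance} to the digital segment to the nearest boundary leaf forces a detour of the required order; (ii) $k$ is large and $b'-r'\geq n\alpha+c''\tfrac{N\log N}{N+\kappa_2}$, so $q_k$ has $x$-coordinate much larger than $\alpha n$ while the Euclidean segment to its image leaf on $L_N$ (which has $x$-coordinate at most $\alpha N$) crosses $L_n$ at $x$-coordinate at most $\alpha n$, yielding the gap; (iii) $b'-r'<n\alpha-c''\tfrac{N\log N}{N+\kappa_2}$, where the digital segment from $o$ to $(\lfloor\alpha N\rfloor,N-\lfloor\alpha N\rfloor)$ is forced through $Q$ and thus crosses $L_n$ at $x$-coordinate at most $b'-r'$, which is far below $\alpha n$.

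The main obstacle I anticipate is the snapping step and the bookkeeping in the case analysis rather than any deep new inequality: the discrepancy result provides only a real-valued witness $(\alpha,\beta)$, and one must argue carefully that rounding to a vertex of $L_n$ preserves the lower bound up to constants even though the walk $\gamma$ can route leaves high above $L_n$. The pairing sublemma handling the $x$-direction is the delicate ingredient. Once snapping is done, each of the three geometric cases is a straightforward application of \cref{lem_distance} combined with the fact that $\overline{op}$ and $dig(o,p)$ must agree on their endpoints, so any constant-fraction discrepancy in the crossing of $L_n$ translates directly into Hausdorff error of the same order.
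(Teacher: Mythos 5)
Your proposal is correct and takes essentially the same approach as the paper: it reproduces the paper's proof of \cref{theo_lower_2} (the upper-half variant via \cref{theo_discrep_2}), and since the number of inner leaves between $L_{\lceil N/2\rceil}$ and $L_N$ is at most $\kappa_1$, the stated bound in terms of $\kappa_1$ follows immediately. The paper itself obtains this statement by running the identical argument with the whole-square discrepancy bound (\cref{theo_discrep}) in place of \cref{theo_discrep_2}, which is only a cosmetic difference.
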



\section{Lower bound for CDRs in high dimensions}
\label{3D-lower-bound}

We now use the lower bound of weak CDRs to obtain a lower bound for CDRs in three or higher dimensions. Consider the restriction of any $d$-dimensional CDR $T$ to the $x_1x_2$-plane (we call this restriction the {\em $x_1x_2$-restriction of T} and denote it by $T_{x_1x_2}$). Recall that the key observation is that $T_{x_1x_2}$ is a (possibly weak) CDR and that any inner leaf in $T_{x_1x_2}$ must extend in some $x_i$-direction in $T$ for some $i \in [3..d]$. We have seen that $T_{x_1x_2}$ needs to have a large number of inner leaves to have $o(\log N)$ error. In the following, we will show that a large number of inner leaves will cause constraints for $\Z^d$ and have an impact in the overall error of $T$.

We do a slight abuse of notation and use the same terms as in two dimensions. For simplicity of the notation, we assume that $N$ is a positive even number. 
For any $n \leq N$, let $L_n = \{(x_1,x_2,\ldots,x_d) \in \domain \colon \sum_{i=1}^{d} x_i = n\}$. Given any CDR in $\domain$, we consider the CDR as a tree rooted at the origin. Let $T(v)$ be the subtree rooted at $v$.

From Theorem~\ref{theo_lower_2}, we already know that in order for $T_{x_1x_2}$ to have sublogarithmic error we must have $\kappa_2\in \omega(N)$ inner leaves. However, each inner leaf ties to a boundary leaf in $L_N$ in $d$ dimensions. In other words, the subtrees rooted at the vertices in $L_{N/2-1} \cap T_{x_1x_2}$ must cover all these boundary vertices.
We now observe that a weak CDR with inner leaves in the $x_1x_2$-plane induces subtrees which are too big for the high dimensional proper CDR (See Figure~\ref{fig:3derror}). 

\begin{lemma}
	\label{lem:subtree_dist}
	Given any CDR in $\domain$, let $\kappa_2$ be the number of inner leaves in $T_{x_1x_2}$ between $L_{N/2}$ and $L_{N}$. There exists a vertex $v \in L_{N/2-1}$ 
	such that $v_i=0$ for $i=3,\dots,d$ and some boundary leaf $u \in T(v) \cap L_N$ has $u_j \geq (\kappa_2/N)^{\frac{1}{d-2}}-1$ for some $j \in [3..d]$.
\end{lemma}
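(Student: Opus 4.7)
The plan is to associate each inner leaf of $T_{x_1x_2}$ in the upper half of $\domain$ with a boundary leaf of the full CDR that is forced off the $x_1x_2$-plane, and then apply a simple lattice-point count to conclude that at least one such boundary leaf must extend deeply in some coordinate $j \in [3..d]$.

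First, I would set up the correspondence. Fix any inner leaf $w$ of $T_{x_1x_2}$ lying between $L_{N/2}$ and $L_N$. Since $w_i = 0$ for $i \geq 3$, axiom (S5) applied to $dig(o,w)$ forces the whole path from $o$ to $w$ into the $x_1x_2$-plane; in particular the ancestor of $w$ at layer $L_{N/2-1}$ belongs to $V = \{v \in L_{N/2-1} : v_i = 0 \text{ for } i \geq 3\}$, a set of $N/2$ elements. Moreover, since $w$ is not a leaf of the proper CDR $T$, it has at least one child in $T$; by the inner-leaf property every vertex of $T(w) \setminus \{w\}$ satisfies $x_j \geq 1$ for some $j \geq 3$, and since every subtree of $T$ reaches $L_N$, the set $T(w) \cap L_N$ is non-empty. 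A similar use of (S5) shows that if $w_1$ were an ancestor of $w_2$ in $T$ for two inner leaves (both with $x_i=0$ for $i\geq 3$), the whole digital path from $w_1$ to $w_2$ would remain in the plane, giving $w_1$ a child in the plane and contradicting that $w_1$ is an inner leaf of $T_{x_1x_2}$. Hence the subtrees $T(w)$ for distinct inner leaves are pairwise disjoint, and picking one boundary leaf $u_w \in T(w) \cap L_N$ per inner leaf yields a set $O$ of at least $\kappa_2$ distinct boundary leaves, each contained in $T(v)$ for some $v \in V$ and each with $u_j \geq 1$ for some $j \geq 3$.

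Finally, let $K = \max\{u_j : u \in O,\ j \in [3..d]\}$, so that $O$ is contained in the lattice set $\Lambda_K = \{u \in L_N : u_i \geq 0 \text{ for all } i,\ u_j \leq K \text{ for all } j \geq 3,\ \sum_{j \geq 3} u_j \geq 1\}$. For each of the $(K+1)^{d-2}-1$ admissible tuples $(u_3,\dots,u_d)$ (excluding the all-zero one), the pair $(u_1,u_2)$ with $u_1+u_2 = N - \sum_{j\geq 3} u_j \leq N-1$ and $u_i \geq 0$ yields at most $N$ options, so $|\Lambda_K| \leq (K+1)^{d-2}\cdot N$. Combining with $|O| \geq \kappa_2$ gives $(K+1)^{d-2} \geq \kappa_2/N$, i.e.\ $K \geq (\kappa_2/N)^{1/(d-2)} - 1$. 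Any $u \in O$ attaining this maximum lies in some $T(v)$ with $v \in V$, yielding the required vertex. The main delicate step is the pairwise disjointness of the $T(w)$'s, which crucially uses (S5); the remaining counting argument is a routine volume bound.
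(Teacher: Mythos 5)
Your proof is correct and follows essentially the same route as the paper's: injectively associating each inner leaf of $T_{x_1x_2}$ in the upper half with a boundary leaf of $T$ forced off the $x_1x_2$-plane, then a lattice-point packing bound on $L_N$ (your maximum-$K$ formulation is just the contrapositive of the paper's "bad set $B_N$ has fewer than $\kappa_2$ points" count). The one place you add value is in spelling out, via (S5), why distinct inner leaves have disjoint subtrees and hence yield distinct boundary leaves --- a point the paper's proof leaves implicit.
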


\begin{proof}	
	\begin{figure}
		\centering
		\includegraphics[width=.6\textwidth]{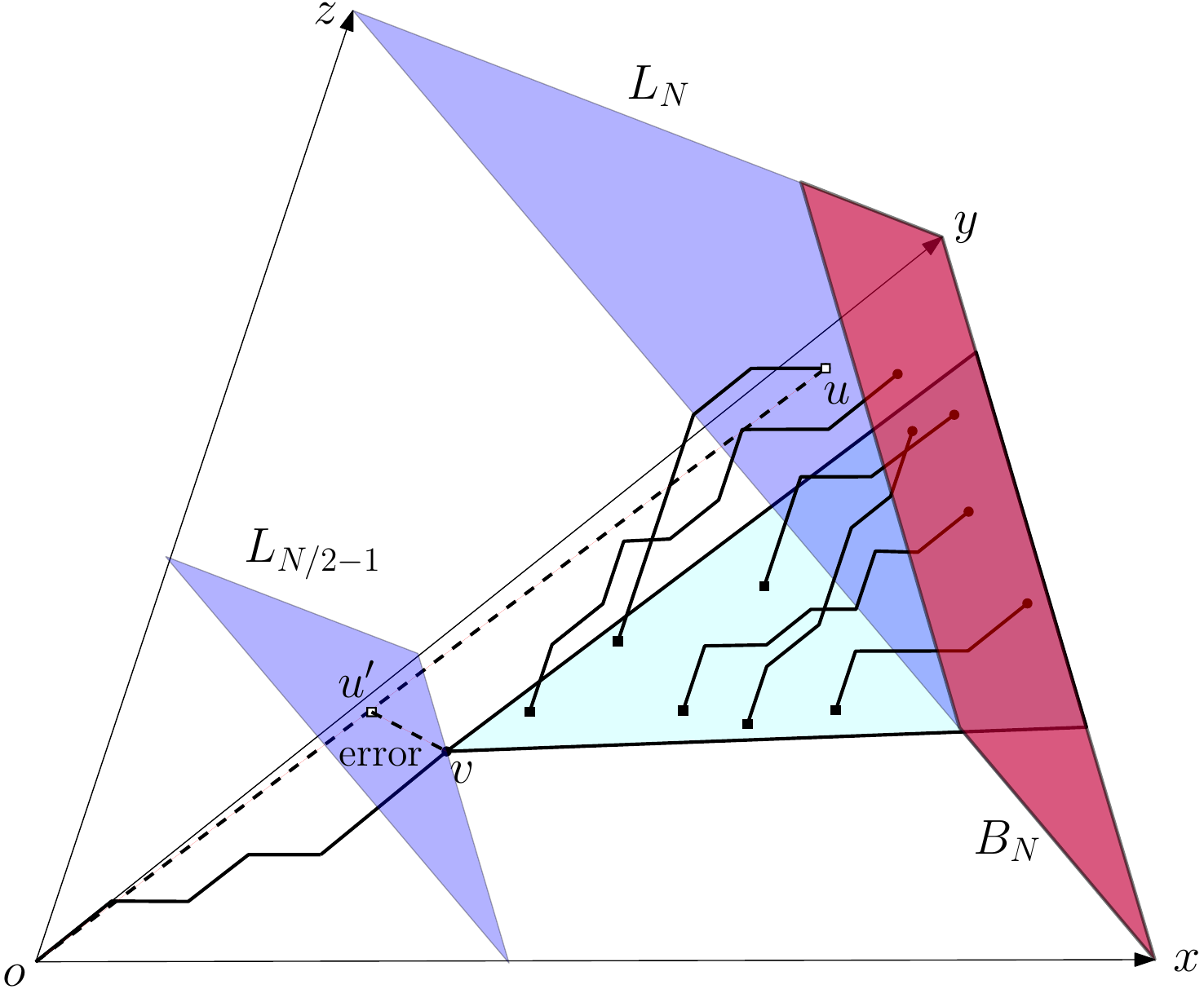}
		\caption{Illustration of Lemmas~\ref{lem:subtree_dist} and \ref{lem:H_dist}: the red region represents the region of $B_N$. If we have lots of inner leaves in $T_{xy}$, it will have many descendants in the three dimensional CDR at layer $L_N$ so that the height of the red region attempting to contain them is large. In particular, we can find a vertex $v$ on the $xy$-plane such that $v$ is on the $dig(o,u)$ and $u$ is far away from the $xy$-plane. For simplicity, we show the Euclidean error between $v$ and $u'$, but we note that the proof argues under the $|| \cdot ||_{\infty}$ metric.}
		\label{fig:3derror}
	\end{figure}
	
	The proof follows from a packing argument. Consider the set $V=\{(0,N/2-1,0,\ldots,0), (1,N/2-2,0,\ldots,0), \ldots, (N/2-1,0,0,\ldots,0)\}$. Note that these vertices lie in the $x_1x_2$-plane and thus are in $T_{x_1x_2}$. Because they are the two dimensional equivalent of $L_{N/2-1}$, the union of their subtrees covers $T_{x_1x_2}$ between $N/2$ and $N$. In this region we know that we have $\kappa_2$ many inner leaves, which will extend to $L_N$ with the first step in the $x_i$-direction for some $i \in [3..d]$. Let $Y_N$ be the extended vertices on $L_N$ from these $\kappa_2$ inner leaves, i.e., $|Y_N| \geq \kappa_2$.
	
	Let $B_N = \{(x_1,x_2,\ldots,x_d)\in \domain \colon \sum_{i=1}^{d} x_i = N, x_1+x_2<N \mbox{ and }\forall i \in [3..d], x_i < (\kappa_2/N)^{\frac{1}{d-2}}-1\}$, see \cref{fig:3derror}. Since we have less than $(\kappa_2/N)^{\frac{1}{d-2}}$ choices for $x_3,\ldots,x_d$, at most $N$ choices for $x_1$ and the value of $x_2$ is adjusted to satisfy the constraint $\sum_{i=1}^{d} x_i = N$, the size of $B_N$ is less than $\kappa_2$. Hence, $B_N$ cannot contain all vertices of $Y_N$. Moreover, no vertices of $Y_N$ lie on $x_1x_2$-plane, so there exists some vertex $u \in Y_N$ such that $u_j \geq (\kappa_2/N)^{\frac{1}{d-2}}-1$ for some $j \in [3..d]$, which is in $T(v) \cap L_N$ for some $v \in V$.
\end{proof}

The existence of this vertex $v$ is the root of the problem.
We conclude with the following statement.

\begin{lemma}
	\label{lem:H_dist}
	Any CDR defined on $\domain \subset \Z^d$ with $\kappa_2$ inner leaves in $T_{x_1x_2}$ between $L_{N/2}$ and $L_{N}$ has  $\Omega((\kappa_2/N)^{\frac{1}{d-2}})$ error.
\end{lemma}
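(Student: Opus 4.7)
The plan is to invoke Lemma~\ref{lem:subtree_dist} to locate a specific digital vertex $v$ at layer $L_{N/2-1}$ together with a boundary vertex $u\in L_N$ in its subtree, and then to show that $v$ itself witnesses a large gap between $dig(o,u)$ and the Euclidean segment $\overline{ou}$. Fix $v$ and $u$ as in that lemma, so that $v_i=0$ for every $i\in[3..d]$, $v_1+v_2=N/2-1$, and $u_j\ge (\kappa_2/N)^{1/(d-2)}-1$ for some $j\in[3..d]$. Because $u$ is a descendant of $v$ in the rooted tree $T$, the path $dig(o,u)$ passes through $v$. Hence the Hausdorff distance between $dig(o,u)$ and $\overline{ou}$ is at least the $||\cdot||_\infty$ distance from $v$ to the Euclidean segment, which equals $\min_{t\in[0,1]}||v-tu||_\infty$. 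It therefore suffices to lower bound this minimum.

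The main computation is a short case analysis on $t$ that isolates the contribution of coordinate $j$. For $t\ge 1/2$, coordinate $j$ alone forces
\begin{equation*}
||v-tu||_\infty \ge |v_j - t u_j| = t u_j \ge u_j/2,
\end{equation*}
because $v_j=0$. For $t\le 1/2$, I instead combine the first two coordinates using the constraint $u_1+u_2\le N-u_j$ (since every $u_i\ge 0$ and $\sum_i u_i=N$). The triangle inequality yields
\begin{equation*}
|v_1-tu_1|+|v_2-tu_2|\ge (v_1+v_2)-t(u_1+u_2)\ge (N/2-1)-(N-u_j)/2 = u_j/2-1,
\end{equation*}
so $\max(|v_1-tu_1|,|v_2-tu_2|)\ge u_j/4-1/2$. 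In either regime $||v-tu||_\infty$ is $\Omega(u_j)$, and substituting $u_j\ge (\kappa_2/N)^{1/(d-2)}-1$ delivers the claimed $\Omega((\kappa_2/N)^{1/(d-2)})$ lower bound on the error.

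I do not expect any serious obstacle beyond this geometric case split: once Lemma~\ref{lem:subtree_dist} delivers the vertices $v$ and $u$ with the required coordinate profile, the rest is a direct comparison of $v$ against points of $\overline{ou}$. The mildest care is required in the regime $\kappa_2=O(N)$, where $(\kappa_2/N)^{1/(d-2)}=O(1)$ and the statement is trivial; this lets us absorb the additive constants $-1/2$ and $-1$ arising in the case analysis into the $\Omega(\cdot)$ without loss.
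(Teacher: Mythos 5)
Your proof is correct and follows essentially the same route as the paper: both invoke Lemma~\ref{lem:subtree_dist} and then exploit that $v_j=0$ while the Euclidean segment $\overline{ou}$ has $j$-th coordinate about $u_j/2$ near the level of $v$. The only (minor) difference is that you lower-bound the distance from the digital vertex $v$ to the whole segment $\overline{ou}$ via a case split on the parameter $t$, whereas the paper bounds the distance from the Euclidean point $u'=\frac{N/2-1}{N}\,u$ to the digital path; your direction reduces to a clean one-parameter minimization and is, if anything, slightly more self-contained.
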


\begin{proof}
	Apply Lemma~\ref{lem:subtree_dist} to obtain a vertex $v\in L_{N/2-1} \cap T_{x_1x_2}$ that satisfies some $u \in T(v) \cap L_N$ with $u_j \geq (\kappa_2/N)^{\frac{1}{d-2}}-1$ for some $j \in [3..d]$. 
	Let $u'$ be the intersection of $\overline{ou}$ and the affine plane containing $L_{N/2-1}$, see \cref{fig:3derror}. As $L_N$ and $L_{N/2-1}$ are parallel, $\frac{u_j'-o_j}{u_j-o_j} \geq \frac{1}{3}$ for $N\geq 6$, this implies that $u'_j = \Omega((\kappa_2/N)^{\frac{1}{d-2}})$. By construction, we have that $v$ is on the $dig(o,u)$ and $v_j = 0$, hence $|| \cdot ||_{\infty}$ distance between $dig(o,u)$ and $\overline{ou}$ is $\Omega((\kappa_2/N)^{\frac{1}{d-2}})$.
\end{proof}

Combining with \cref{theo_lower_2} gives us a lower bound for CDRs in $d$ dimensions.

\lowerCDR*

	\begin{proof}
		By \cref{theo_lower_2} and Lemma~\ref{lem:H_dist}, the error is $\Omega(\frac{N\log N}{N+\kappa_2})$ and $\Omega((\kappa_2/N)^{\frac{1}{d-2}})$, where $\kappa_2$ is the number of inner leaves in $T_{x_1x_2}$ between $L_{N/2}$ and $L_{N}$. The balance between the two is obtained by choosing $\kappa_2 = \Theta(N \log^{\frac{d-2}{d-1}} N)$, giving the  $\Omega( \log^{1/(d-1)} N)$ lower bound.
	\end{proof}



\section{A construction of a weak CDR with constant error} \label{GreedyCDR}

\begin{figure}
	\centering
	\includegraphics[width=\textwidth]{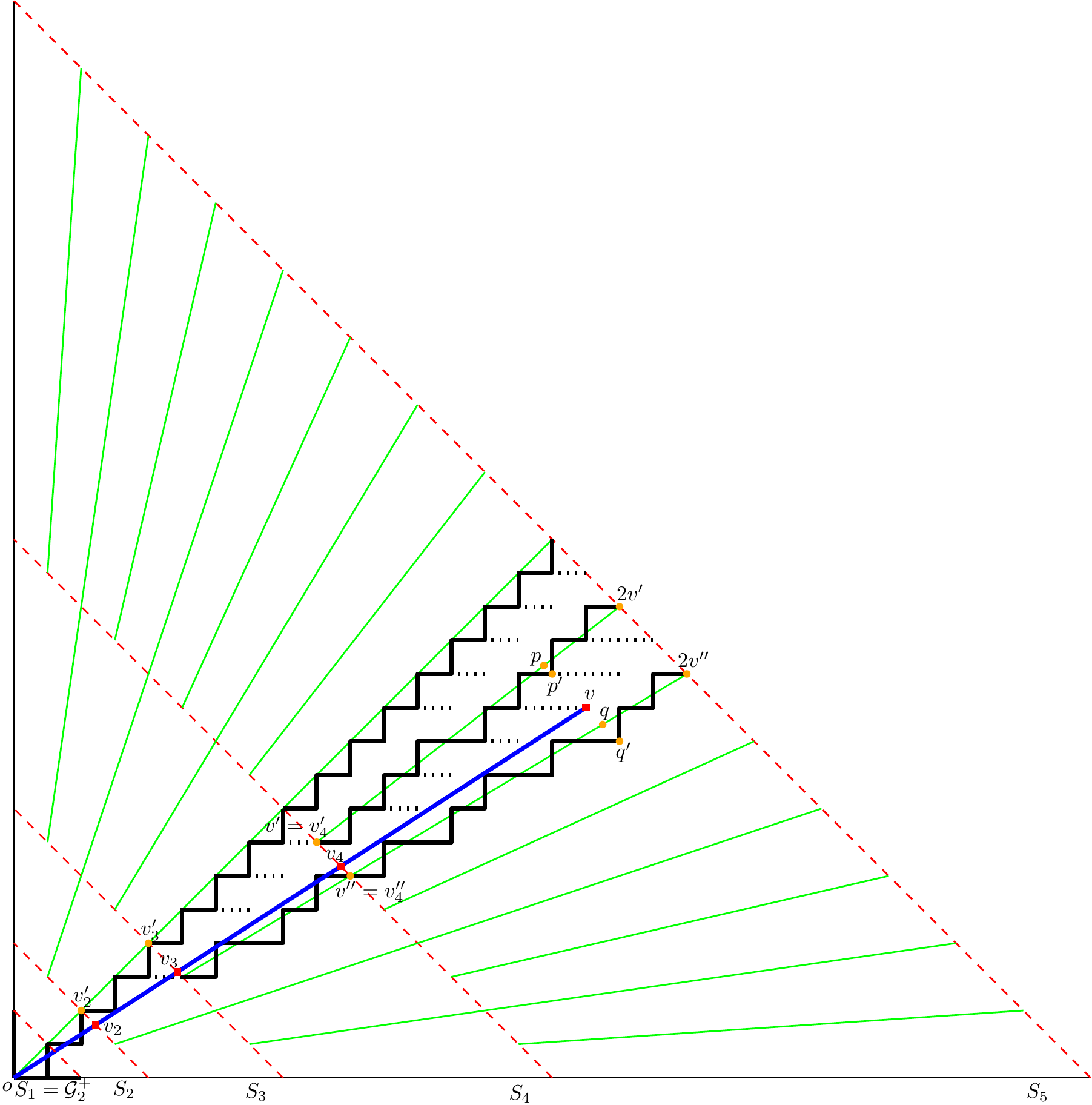}
	\caption{Outline of GREEDY. In the example, $N = 32$. $\domain$ is partitioned into $5$ slices by the red dash lines. The green solid lines indicate where the greedy paths are constructed between $(x,y)$ and $(2x,2y)$. A digital path $dig(o,v)$ is shown through the concatenation of some greedy paths as well as the notation used in the proof of the Hausdorff error for $dig(o,v)$.}
	\label{fig:greedy}
\end{figure}

In this section we describe how to construct a weak CDR in 2 dimensions with the smallest possible error. Specifically, we show that the weak CDR has at most $5/2$ error and at most $N^2/12$ inner leaves, which is about $1/6$ of the total number of grid points in $\domain$.

Assume that $N$ is a power of 2. We partition $\domain$ into $\log_2 N$ diagonal slices by a set of lines $x+y=2^i$ for $i=1,\ldots, \log_2 N-1$. We use $S_i$ to denote the $i$-th slice between $x+y=2^{i-1}$ and $x+y=2^i$ for $i=2,\ldots, \log_2 N$.
The first slice $S_1$ is $\mathcal{G}^+_2$, which only has 6 points. There are two proper CDRs for this small set and both have the same error. So we can use either of the two. For each other slice $S_i$, we draw a {\em greedy} digital path from each point $p=(p_x,p_y) \in L_{2^{i-1}}$ to $2p=(2p_x, 2p_y)$. The greedy digital path simply tries to approximate the Euclidean segment as much as possible. More formally, the path between $p$ and $2p$ is defined by picking a point in each $L_j$ that has the smallest $|| \cdot ||_{\infty}$ distance to the line segment $\overline{p,2p}$ for $j=2^{i-1}, \ldots, 2^{i}$ (in case of tie, we pick the point with smaller $y$-coordinate).
Lemma~\ref{lem:digital_path} shows that the way we picked the points would gives a digital path following the $4$-neighbor grid topology. 

The last step of the construction is as follows: For those points $(p_x,p_y) \in \domain/\{(0,0)\}$ not having an edge to $(p_x-1,p_y)$ or $(p_x,p_y-1)$, we connect $(p_x,p_y)$ to $(p_x-1,p_y)$ if $p_x \geq p_y$, otherwise to $(p_x, p_y-1)$. We call this construction GREEDY. 


The following lemma shows that every two consecutive points we picked in $L_j$ and $L_{j+1}$ to form the greedy digital path is connected under the $4$-neighbor topology.

\begin{lemma}\label{lem:digital_path}
	For $i=2,\ldots, \log_2 N$, any greedy digital path from $p \in L_{2^{i-1}}$ to $2p\in L_{2^i}$ is connected under the $4$-neighbor topology and it is $xy$-monotone.
\end{lemma}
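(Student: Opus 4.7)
The plan is to reduce the greedy choice on each diagonal $L_j$ to a rounding operation on a single real number, and then observe that the rounded values grow by at most $1$ between consecutive diagonals.

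First, I would parametrize the segment $\overline{p, 2p}$ as $t \mapsto (t p_x, t p_y)$ for $t \in [1,2]$. Its intersection with $L_j$ occurs at $t = j / 2^{i-1}$, giving the real point $(\alpha_j, j - \alpha_j)$ where $\alpha_j := j p_x / 2^{i-1}$. For any lattice point $q = (q_x, j - q_x) \in L_j$, I would check that the $L_\infty$ distance from $q$ to the whole segment $\overline{p,2p}$ equals $|q_x - \alpha_j|$: the function $t \mapsto \max(|q_x - t p_x|, |(j - q_x) - t p_y|)$ is minimized exactly at $t = j/2^{i-1}$, because at that $t$ the two inner expressions have opposite signs and equal magnitude, while moving $t$ in either direction increases one of them. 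Hence the greedy rule selects the lattice point on $L_j$ whose $x$-coordinate is the integer nearest to $\alpha_j$, with the stated tie-breaking (smaller $y$, equivalently larger $x$) amounting to rounding halves up; in closed form, $q_j^x = \lfloor \alpha_j + 1/2 \rfloor$ and $q_j^y = j - q_j^x$.

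Next, I would compute $\alpha_{j+1} - \alpha_j = p_x / 2^{i-1}$ and note that $0 \le p_x \le 2^{i-1}$, so this difference lies in $[0,1]$. Since $\lfloor \cdot + 1/2 \rfloor$ is monotone non-decreasing and its value cannot jump by more than $1$ when the argument grows by at most $1$, we get $q_{j+1}^x - q_j^x \in \{0,1\}$. Combined with $q_{j+1}^x + q_{j+1}^y = j+1$ and $q_j^x + q_j^y = j$, this forces $q_{j+1}^y - q_j^y = 1 - (q_{j+1}^x - q_j^x) \in \{0,1\}$, so each step is exactly $(1,0)$ or $(0,1)$. This simultaneously yields $4$-neighbor connectivity and $xy$-monotonicity of the greedy path.

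The only delicate points I expect are bookkeeping rather than substantive obstacles: (i) handling the boundary cases $p_x = 0$ or $p_x = 2^{i-1}$ (where $\alpha_{j+1} - \alpha_j$ equals $0$ or $1$ and the path is purely vertical or horizontal, consistent with the claim); and (ii) making the tie-breaking convention align precisely with the definition of $q_j^x$ as $\lfloor \alpha_j + 1/2 \rfloor$, so that a half-integer $\alpha_j$ is mapped to the larger $x$-coordinate as prescribed by the ``smaller $y$-coordinate'' rule.
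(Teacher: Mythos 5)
Your proof is correct, and it rests on the same two quantitative facts as the paper's: the greedy point on $L_j$ lies within $L_\infty$-distance $1/2$ of the segment's crossing of $L_j$, and consecutive crossings differ by at most $1$ in each coordinate. The packaging, however, is genuinely different. The paper argues by contradiction: if two consecutive greedy points $u\in L_j$, $v\in L_{j+1}$ were not $4$-neighbors then $\|u-v\|_\infty\ge 2$, yet the triangle inequality through the nearest segment points gives $\|u-v\|_\infty\le \tfrac12+1+\tfrac12-\varepsilon<2$; monotonicity is then dismissed with a one-line remark. You instead derive a closed form $q_j^x=\lfloor \alpha_j+\tfrac12\rfloor$ with $\alpha_j=jp_x/2^{i-1}$, and read off that each step is exactly $(1,0)$ or $(0,1)$ from the monotonicity of rounding under an argument increment in $[0,1]$. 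What your route buys is that connectivity and $xy$-monotonicity fall out of a single computation (the paper's monotonicity claim is really only justified once one observes, as you do, that the step vector must be $(1,0)$ or $(0,1)$), and you get an explicit description of the greedy path that would also make the tie-breaking rule and the disjointness statement of Lemma~\ref{lem:disjoint} transparent. The small price is that you must verify the reduction of ``$L_\infty$-distance to the segment'' to $|q_x-\alpha_j|$, including the degenerate slopes $p_x\in\{0,2^{i-1}\}$, which you correctly flag; the paper sidesteps this by excluding those two endpoints up front and never needing the exact minimizer, only a $\le 1/2$ bound.
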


\begin{proof}
	It is trivial for $p=(0,2^{i-1})$ and $p=(2^{i-1},0)$, so we ignore these two cases in the following.
	Suppose that there exists two consecutive points we picked in the greedy path $u \in L_j$ and $v \in L_{j+1}$ for some $2^{i-1} \leq j \leq 2^i-1$ such that $u$ and $v$ are not connected in the 4-neighbor topology, i.e., $u_x \neq v_x$ and $u_y \neq v_y$. Since $u$ and $v$ are grid points, this implies that $||u-v||_{\infty} \geq 2$.
	Let $u'$ and $v'$ be the points on the line segment $\overline{p,2p}$ with the smallest $|| \cdot ||_{\infty}$ distance to $u$ and $v$ respectively, i.e., $u'$ (resp. $v'$) is the intersection of $\overline{p,2p}$ and $x+y=j$ (resp. $x+y=j+1$). Since the slope of 
	$\overline{p,2p}$ is between $0$ and $\infty$ exclusively, $||u'-v'||_{\infty} < 1$.
	Furthermore, we know that the $|| \cdot ||_{\infty}$ distance between any two consecutive points on $L_j$ is 1, so $||u-u'||_{\infty} \leq 0.5$ and the same holds for $|| v-v' ||_{\infty}$. By triangle inequality, we have $||u-v||_{\infty} \leq ||u-u'||_{\infty} + ||u'-v'||_{\infty} + ||v'-v||_{\infty} <2$, which gives a contradiction.
	
	It is easy to see that the greedy digital path is $xy$-monotone because we only pick one grid point per each $L_j$. 
\end{proof}

Next, we show that any two greedy digital paths in the same slice $S_i$ are disjoint so that when we concatenate all the greedy digital paths slice by slice, it is easy to see that they form a tree rooted at the origin in Lemma~\ref{lem:tree}.

\begin{lemma}
	\label{lem:disjoint}
	For $i=2,\ldots, \log_2 N$, any two greedy digital paths in $S_i$ are disjoint.
\end{lemma}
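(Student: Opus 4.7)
The plan is to show that for two distinct starting points $p,q\in L_{2^{i-1}}$ (with, say, $p_x<q_x$, so $q_x-p_x\ge 1$), the greedy paths from $p$ to $2p$ and from $q$ to $2q$ select different grid points on every diagonal $L_j$ with $2^{i-1}\le j\le 2^{i}$. Since both paths only pick one vertex per diagonal and live in exactly these diagonals (by construction and by \cref{lem:digital_path}), distinctness on each diagonal immediately yields disjointness.

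The first step is to locate the Euclidean reference points. The segment $\overline{p,2p}$ sits on the ray from the origin through $p$, so its intersection with the diagonal $L_j$ is at $(j p_x/2^{i-1},\, j p_y/2^{i-1})$; analogously for $q$. Subtracting the $x$-coordinates gives a gap of $(q_x-p_x)\cdot j/2^{i-1}\ge 1\cdot 1=1$ on every $L_j$ in the slice, with equality possible only when $q_x-p_x=1$ and $j=2^{i-1}$, which corresponds to the endpoints $p$ and $q$ themselves (trivially distinct).

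Next, I would translate this Euclidean gap into a gap between the chosen grid points. The grid points on $L_j$ are consecutive integer $x$-coordinates, so the greedy rule picks the unique integer closest to the target $x$-coordinate, with ties (exact half-integers) broken by taking the larger $x$ (equivalently the smaller $y$). A short monotonicity check shows that if two real values differ by at least $1$, then the greedy integer selections differ by at least $1$ as well: if the greedy rule sends $x_p$ to $k$, then $x_p\in[k-\tfrac12,k+\tfrac12)$, hence $x_q=x_p+\delta$ with $\delta\ge 1$ lands in $[k+\tfrac12,k+\tfrac32)$, so it is rounded to $k+1$. Applying this at every diagonal gives distinct chosen vertices throughout the slice.

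The main (very mild) obstacle is the tie-breaking bookkeeping, since the half-integer case is exactly where the rule could a priori make the two selections collide; the argument above shows it does not, provided the tie-breaking is consistent (as it is, by the stated convention). Once this is handled, since both greedy paths visit exactly one vertex per $L_j$ and those vertices differ for every $j\in\{2^{i-1},\dots,2^{i}\}$, the two paths share no vertex and are therefore disjoint.
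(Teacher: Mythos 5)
Your proof is correct and follows essentially the same approach as the paper's: both rest on the facts that the reference segments $\overline{p,2p}$ and $\overline{q,2q}$ are at least one unit apart on every diagonal $L_j$ of the slice, while the greedy rule never strays more than $1/2$ from its own segment, so no grid point can be claimed by two paths. Your per-diagonal rounding formulation is a slightly more explicit rendering of the paper's distance/packing argument and has the minor added benefit of handling the tie-breaking convention explicitly.
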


\begin{proof}
	By the way we picked the grid points on the greedy digital paths, we can see that any grid point on the greedy digital path has at most 0.5 $L$-infinity distance to the corresponding line segment.
	Give any two consecutive line segments $\overline{p,2p}$ and $\overline{q,2q}$ in $S_i$ where $q_x = p_x +1$, for any point $v \in \overline{{p,2p}}/\{p\}$, the $|| \cdot ||_{\infty}$ distance from $v$ to $\overline{q,2q}$ is larger than $1$. Hence, one grid point cannot be assigned to more than one greedy digital path.
\end{proof}

Then, we show how the greedy digital paths in $S_i$ connect to the greedy digital paths in $S_{i-1}$, which gives us the structure of $dig(o,p)$ how it passes through some intermediate points.


\begin{lemma}\label{lem:connect}
	For $i=3,\ldots, \log_2 N$ and any $p = (p_x, p_y) \in L_{2^{i-1}}$, $dig(p, 2p) \subset dig(q, 2p)$, where $q = (\lfloor p_x/2\rfloor, \lceil p_y/2\rceil)$ if $p_x \geq p_y$, otherwise $q=(\lceil p_x/2\rceil, \lfloor p_y/2\rfloor)$.
\end{lemma}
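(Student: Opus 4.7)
Since $p_x + p_y = 2^{i-1}$ is even, the coordinates $p_x$ and $p_y$ share a parity, so I will split into an even/even case and an odd/odd case (the $p_x \geq p_y$ and $p_x < p_y$ subcases of the latter being symmetric). In the even/even case the claim is immediate: $q = p/2$ and $2q = p$, so the greedy path from $q$ to $2q = p$ lies in $S_{i-1}$ while the greedy path from $p$ to $2p$ lies in $S_i$; by Lemma~\ref{lem:disjoint} these two greedy paths intersect only at the shared endpoint $p$ on the boundary layer $L_{2^{i-1}}$, and their concatenation is exactly the tree-path $dig(q,2p)$.

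The odd/odd subcase with $p_x \geq p_y$ is the substantial one. Here $q = ((p_x-1)/2,(p_y+1)/2)$ and $2q = (p_x-1, p_y+1) \neq p$, so I must show that $q$ is still an ancestor of $p$ in the tree; once this is established the claim follows because $p$ already lies on the tree-path from the origin to $2p$ via the greedy path of $S_i$. The plan is to exhibit the ancestor chain from $p$ up to $q$ explicitly. By Lemma~\ref{lem:digital_path} the greedy path from $q$ to $2q$ is $4$-connected and $xy$-monotone, so its $y$-coordinate increases in unit steps from $q_y=(p_y+1)/2$ to $2q_y = p_y+1$ and therefore attains every intermediate integer, in particular $p_y$. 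Let $j^{*}$ be the \emph{largest} index in $[2^{i-2},\,2^{i-1})$ at which the greedy pick of $q$'s path has $y$-coordinate $p_y$, and set $K = 2^{i-1} - j^{*}$; then this greedy pick at $L_{j^{*}}$ is exactly $(p_x-K,\,p_y)$. I will show that the tree-path from $p$ up to $q$ has the form
\[
p \longrightarrow (p_x-1, p_y) \longrightarrow (p_x-2, p_y) \longrightarrow \cdots \longrightarrow (p_x-K, p_y) \longrightarrow \cdots \longrightarrow q,
\]
where the first $K$ arrows are final-step edges along the line $y=p_y$ (valid because $p_x-k \geq p_y$ throughout, using $p_x > p_y$ in this subcase), and the remaining arrows trace the greedy path from $q$ to $2q$ in reverse, starting from $(p_x-K, p_y)$.

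The main obstacle is verifying that each intermediate point $(p_x-k, p_y)$ for $k = 1,\dots,K-1$ is \emph{not} picked by any greedy path in $S_{i-1}$, so that the final-step rule correctly assigns $(p_x-k-1, p_y)$ as its parent. The key inputs are $xy$-monotonicity together with the maximality of $j^{*}$. Monotonicity of any greedy path passing through $(p_x-k, p_y) \in L_{2^{i-1}-k}$ forces its basepoint $r \in L_{2^{i-2}}$ to satisfy $r_y \leq p_y \leq 2r_y$, hence $r_y \in \{q_y, q_y+1,\dots, p_y\}$. If $r_y = q_y$ then $r = q$, but by maximality of $j^{*}$ the greedy pick of $q$'s path has $y = p_y + 1$ at every layer $L_j$ with $j > j^{*}$, ruling out $(p_x-k, p_y)$ for $k < K$. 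If $r_y > q_y$ the line from the origin through $r$ is strictly steeper than the line through $q$, so at every layer $L_j$ with $j > j^{*}$ the line value exceeds $(p_y+1)j/2^{i-1} > p_y + 1/2$, and the greedy pick of $r$'s path rounds up to $y \geq p_y + 1$. In both situations $(p_x-k, p_y)$ is avoided, completing the verification and hence the proof.
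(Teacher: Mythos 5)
Your proof is correct and follows essentially the same route as the paper's: the even case is the trivial concatenation of greedy paths across the slice boundary, and in the odd case you trace the horizontal final-step edges from $p$ until they meet the greedy path $dig(q,2q)$ (with $2q=(p_x-1,p_y+1)$) and then descend that path to $q$. You supply more detail than the paper does in pinpointing the meeting layer $j^{*}$ and ruling out interference from other greedy paths; the only spot that deserves an explicit line rather than a parenthetical is the claim that $p_x-k\geq p_y$ for all $k\leq K-1$, which needs the bound $K\leq p_x-p_y+1$ (true, but not a consequence of $p_x>p_y$ alone).
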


\begin{proof}
	When $p_x$ is even, $q = (p_x/2, p_y/2) \in L_{2^{i-2}}$, so $dig(q,p)$ is a greedy digital path. $dig(q,2p)$ is the concatenation of $dig(q,p)$ and $dig(p,2p)$.
	Hence, $dig(p, 2p) \subset dig(q, 2p)$.
	
	When $p_x$ is odd, let $p_x = 2k + 1$ for some $k = 0,1,\ldots, 2^{i-2}-1$. Then, $p_y = (2^{i-1} - 2k - 1)$. Assume that $p_x \geq p_y$ (another case can be proved by the same approach). By the last step of GREEDY construction, $p$ will connect horizontally to $(p_x-1, p_y)$ and so on, so we follow the digital path from $p$ horizontally until we hit some greedy path. Since $p$ is between $p' = (2k,2^{i-1} - 2k)$ and $(2k+2,2^{i-1} - 2k - 2)$ on $L_{2^{i-1}}$, the greedy path we hit is $dig(p'/2,p')$ and then we follow $dig(p'/2,p')$ and reach $p'/2 = (k, 2^{i-2} - k) = (\lfloor p_x/2\rfloor, \lceil p_y/2\rceil)$. Therefore, $dig(p,2p) \subset dig(q,2p)$, where $q = p'/2$.
\end{proof}

By repeatedly applying Lemma~\ref{lem:connect} for all $i=3,\ldots, \log_2 N$, we have the following corollary.

\begin{cor}\label{cor:connect}
	For $i=3,\ldots, \log_2 N$ and any $p = (p_x, p_y) \in L_{2^{i-1}}$, $dig(o,p)$ passes through all the points $(\lfloor p_x/2^j\rfloor, \lceil p_y/2^j\rceil)$ if $p_x \geq p_y$, otherwise $(\lceil p_x/2^j\rceil, \lfloor p_y/2^j\rfloor)$ for $j=1,\ldots, i-2$.
\end{cor}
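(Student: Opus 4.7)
The plan is to prove the corollary by induction on $j$, repeatedly invoking Lemma~\ref{lem:connect} and using the fact that the GREEDY construction is a rooted tree.

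For the base case $j=1$, Lemma~\ref{lem:connect} applied directly to $p \in L_{2^{i-1}}$ gives $dig(p,2p) \subset dig(q,2p)$ with $q = p^{(1)}$. Since GREEDY produces a tree rooted at $o$, this inclusion forces the unique tree path from $o$ to $p$ to pass through $q$, hence $p^{(1)} \in dig(o,p)$.

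For the inductive step, suppose $p^{(j)} \in dig(o,p)$ for some $j \leq i-3$. Because $p^{(j)} \in L_{2^{i-1-j}}$ with $i-j \geq 3$, Lemma~\ref{lem:connect} applies to $p^{(j)}$ and produces a predecessor $q' \in L_{2^{i-2-j}}$ defined by floor/ceil on the coordinates of $p^{(j)}$ according to the sign of $p^{(j)}_x - p^{(j)}_y$. To identify $q'$ with the point $p^{(j+1)}$ from the corollary I need two elementary facts: (i) nested halvings compose, i.e.\ $\lfloor \lfloor x/2^j \rfloor / 2 \rfloor = \lfloor x/2^{j+1} \rfloor$ and $\lceil \lceil x/2^j \rceil / 2 \rceil = \lceil x/2^{j+1} \rceil$ for non-negative integers $x$; and (ii) the relation $p_x \geq p_y$ (respectively $p_x < p_y$) is preserved along the sequence $p,p^{(1)},p^{(2)},\ldots,p^{(i-3)}$, so the lemma and the corollary select the same floor/ceil branch at every step.

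The preservation in (ii) is the only subtle point. The potential failure case $\lfloor a/2 \rfloor < \lceil b/2 \rceil$ with $a \geq b$ happens exactly when $a = b$ and both are odd. But every $p^{(j)}$ considered lies in a layer $L_{2^{i-1-j}}$ whose index is a power of $2$ that is at least $4$ (since $j \leq i-3$), hence the coordinate sum is even and the two coordinates share parity; in particular, whenever $p^{(j)}_x = p^{(j)}_y$, both must be even, which rules out the problematic case. A short case check on the four parity combinations then confirms $\lfloor p^{(j)}_x/2 \rfloor \geq \lceil p^{(j)}_y/2 \rceil$ (and symmetrically for the other branch), so $q' = p^{(j+1)}$ as defined.

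Finally, from $dig(p^{(j)}, 2p^{(j)}) \subset dig(p^{(j+1)}, 2p^{(j)})$ and the monotonicity of greedy paths in the coordinate sum (Lemma~\ref{lem:digital_path}), the point $p^{(j+1)}$ lies before $p^{(j)}$ along the tree path $dig(o, 2p^{(j)})$, hence $p^{(j+1)} \in dig(o,p^{(j)})$. Combining with the inductive hypothesis and axiom (S3) gives $p^{(j+1)} \in dig(o,p^{(j)}) \subseteq dig(o,p)$, completing the induction. The main technical obstacle is just the bookkeeping around (ii); once the parity observation is in place, the induction runs mechanically.
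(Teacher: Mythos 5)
Your proof is correct and takes essentially the same route as the paper, which justifies the corollary simply by iterating Lemma~\ref{lem:connect}; your induction on $j$ merely makes explicit the floor/ceiling composition and the parity bookkeeping that the paper leaves implicit. (One small refinement: since each layer sum $2^{i-1-j}$ is even, the two coordinates always share parity, and while the strict inequality $p_x<p_y$ can degrade to equality along the sequence, your observation that equal coordinates in a layer of sum at least $4$ are even shows both branches then yield the same point, so the argument goes through.)
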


Now we have all the tools to show that GREEDY is a rooted tree at the origin so that GREEDY is a weak CDR.

\begin{lemma}
	\label{lem:tree}
	GREEDY is a rooted tree at the origin with $xy$-monotone paths to all the vertices.
\end{lemma}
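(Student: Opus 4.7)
The plan is to show that GREEDY assigns to each non-origin vertex $v \in \domain$ a unique \emph{parent edge} going to either $(v_x-1, v_y)$ or $(v_x, v_y-1)$, and then argue that the resulting $|\domain|-1$ edges form a spanning tree rooted at $o$ whose root-to-leaf paths are coordinate-monotone. This is a natural approach because each edge of GREEDY is by construction of the form $\{(a,b),(a+1,b)\}$ or $\{(a,b),(a,b+1)\}$, so producing a tree reduces to matching up every vertex with exactly one ``predecessor'' edge.

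First I would dispose of the base slice $S_1 = \mathcal{G}^+_2$ by invoking the explicit proper CDR used there, which is itself a rooted tree with $xy$-monotone paths on its six vertices. For every other slice $S_i$ (with $i \geq 2$) and every non-origin vertex $v \in S_i$, I would split into two cases. In Case A, $v$ lies on some greedy path $dig(p,2p)$ with $p \in L_{2^{i-1}}$; by \cref{lem:disjoint} this greedy path is uniquely determined by $v$, and by \cref{lem:digital_path} it is $4$-connected and $xy$-monotone, so the predecessor of $v$ on the path is one of $(v_x-1, v_y)$ or $(v_x, v_y-1)$, which I would declare to be $v$'s parent. In Case B, $v$ lies on no greedy path of $S_i$, so the final pass of GREEDY deterministically assigns the parent according to whether $v_x \geq v_y$.

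To make Case B well-posed I would first observe that every axis vertex in $\domain$ is already covered by a greedy path, because the greedy path from $(0, 2^{i-1})$ to $(0,2^i)$ reduces to the $y$-axis segment inside $S_i$ and analogously on the $x$-axis, so in Case B both candidate predecessors $(v_x-1, v_y)$ and $(v_x, v_y-1)$ lie in $\domain$. I also need to verify that Cases A and B are mutually exclusive: the explicit guard in the last step of GREEDY (``not having an edge to $(v_x-1,v_y)$ or $(v_x,v_y-1)$'') prevents a second parent edge from being added when a greedy-path edge is already present. Vertices on the inter-slice boundary $L_{2^{i-1}}$ fall into Case A of slice $S_{i-1}$, so they too receive exactly one parent edge and the handover between consecutive slices is clean.

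Once each non-origin vertex $v$ has exactly one parent $\pi(v)$ with $\pi(v)_x + \pi(v)_y = v_x + v_y - 1$, the rest is immediate: iterating $\pi$ strictly decreases the $L_1$-sum, so the parent relation is acyclic and must terminate at $o$. Hence the $|\domain|-1$ parent edges form a spanning subgraph in which every vertex has a unique path to $o$; this is a tree rooted at $o$, and monotonicity of every root-to-vertex path follows because every edge decreases exactly one coordinate by $1$. The main obstacle I anticipate is the bookkeeping in Case B, specifically verifying that the axes are entirely handled by greedy paths so that the fallback step is always applicable and never produces a spurious second edge on a vertex already settled in Case A.
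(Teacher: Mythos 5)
Your proposal is correct and follows essentially the same route as the paper: both arguments show that every non-origin vertex of $\domain$ receives exactly one parent edge to $(v_x-1,v_y)$ or $(v_x,v_y-1)$ --- either as the predecessor on its (unique, by Lemmas~\ref{lem:digital_path} and~\ref{lem:disjoint}) greedy path, or from the fallback step --- and then conclude tree-ness and $xy$-monotonicity. The only slip is the claim that all vertices of $L_{2^{i-1}}$ fall into Case~A of slice $S_{i-1}$: only the points of the form $2p$ (both coordinates even) are endpoints of greedy paths of $S_{i-1}$, while the odd-coordinate boundary vertices are start points of their own greedy paths in $S_i$ (hence have no predecessor there) and are handled by the fallback step, as in the paper's treatment of $v \in dig(p,2p)\setminus\{p\}$; this does not affect the conclusion.
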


\begin{proof}
	If we can show that every grid point $v \in \domain$ except the origin has exactly one edge to either $(v_x-1,v_y)$ or $(v_x,v_y-1)$, then GREEDY is a tree with $xy$-monotone paths connecting to all the grid points from the origin because the graph is connected in $\domain$ and there are $|\domain|-1$ edges. The $xy$-monotone is come from the fact that all the grid points are connected towards the origin. Clearly, this holds in $S_1$ because this part is a CDR with $N=2$. Hence, we consider $S_i$ for $i=2,\ldots,\log_2 N$.
	
	By Lemmas~\ref{lem:digital_path} and \ref{lem:disjoint}, we know that all the greedy digital paths in $S_i$ are $xy$-monotone and disjoint. Hence, for any point $v \in dig(p,2p)/\{p\}$ with $p \in L_{2^{i-1}}$, there is only one edge to either $(v_x-1,v_y)$ or $(v_x,v_y-1)$.
	
	Furthermore, the last step of GREEDY only applies to the grid points $p \in \domain/\{(0,0)\}$ not having an edge to $(p_x-1, p_y)$ or $(p_x,p_y-1)$. Hence, there is only one edge to $(p_x-1, p_y)$ or $(p_x,p_y-1)$ assigned to those points. 
\end{proof}

Then, we can talk about the quality of GREEDY in term of the number of inner leaves and the Hausdorff error in the next two lemmas.

\begin{lemma}
	\label{lem:death}
	There are at most $N^2/12$ inner leaves in GREEDY.
\end{lemma}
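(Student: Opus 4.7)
The plan is to bound the inner leaves slice by slice, and then sum. Within a given slice $S_i$ (with $m := 2^{i-1}$), every greedy-path vertex already has a child: either the next vertex on its own greedy path, or, if it is an endpoint $2p \in L_{2^i}$ with $2^i<N$, the first vertex of the greedy path it starts in $S_{i+1}$. Therefore only the \emph{skipped} vertices (those not picked by any greedy path of $S_i$) can possibly be inner leaves, and further only those in the interior layers $L_{m+1},\dots,L_{2m-1}$, since a skipped vertex in $L_{2^i}$ is either a boundary leaf (when $2^i=N$) or a starting point of $S_{i+1}$. Because each of the $m{+}1$ greedy paths picks exactly one vertex in $L_{m+k}$ and $|L_{m+k}|=m{+}k{+}1$, there are exactly $k$ skipped vertices in $L_{m+k}$, giving an initial bound of $\binom{m}{2}$ candidate inner leaves per slice.

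To tighten this I would use the last rule of GREEDY: a skipped vertex $w=(w_x,w_y)$ with $w_x<w_y$ is attached to $(w_x,w_y-1)$ and one with $w_x\ge w_y$ to $(w_x-1,w_y)$. This groups the skipped vertices into vertical chains above the diagonal and horizontal chains below it (with possible forks exactly on the diagonal). Every non-top vertex of a chain is thereby guaranteed a skipped child, so only chain tops can be inner leaves. A chain top $w\in S_i\setminus L_{2m}$ fails to be an inner leaf in two ways: (i) it lies on $L_{2m-1}$ and the natural extension vertex in $L_{2m}$ is itself skipped, so the chain simply grows into $L_{2m}$; or (ii) the perpendicular neighbor of $w$ lies on a greedy path whose incoming edge actually does come from $w$. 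A short case analysis of the two grid-neighbors of $w$ (using the rounding/tie-breaking rule of the greedy picks, and treating $w_x<w_y$, $w_x>w_y$, $w_x=w_y$ separately) is expected to yield the per-slice bound $\kappa(i)\le m^2/4$.

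Once the per-slice bound $\kappa(i)\le m^2/4$ is in hand, summing over $i=2,\dots,\log_2 N$ is immediate:
\[
\kappa \;\le\; \sum_{i=2}^{\log_2 N}\frac{(2^{i-1})^2}{4}\;=\;\sum_{j=0}^{(\log_2 N)-2} 4^{j}\;=\;\frac{N^2/4-1}{3}\;=\;\frac{N^2-4}{12}\;<\;\frac{N^2}{12},
\]
which is the desired bound. The main obstacle is the per-slice case analysis: the tie-breaking rule breaks the symmetry between the two sides of the diagonal, so the three regimes $w_x<w_y$, $w_x>w_y$, and $w_x=w_y$ have to be handled separately, and one has to check that on the diagonal (where chains may fork) the bound $m^2/4$ still absorbs the extra candidates. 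Everything else — identifying candidate inner leaves, counting skipped vertices per layer, and combining the per-slice bounds by a geometric sum — is straightforward bookkeeping.
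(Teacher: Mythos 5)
Your overall skeleton matches the paper's proof: only vertices skipped by the greedy paths can be inner leaves (the paper gets this from Corollary~\ref{cor:connect}), the final attachment rule chains the skipped vertices horizontally below the diagonal and vertically above it so that only one vertex per chain is a candidate, and the per-slice bounds are combined by a geometric sum. The arithmetic at the end is also correct: a per-slice bound of $m^2/4$ with $m=2^{i-1}$ does sum to $(N^2-4)/12<N^2/12$. However, there is a genuine gap exactly where you flag "the main obstacle": the reduction from the trivial candidate count $\binom{m}{2}\approx m^2/2$ per slice to $m^2/4$ is never carried out --- you write that a case analysis "is expected to yield" it. This step is not optional bookkeeping; without it the sum comes to roughly $N^2/6$, which misses the stated bound by a factor of two. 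A further friction point in your plan is that you count skipped vertices along the diagonal layers $L_{m+k}$, while the chains are organized by rows and columns; these two decompositions do not mesh, so the layer count does not directly tell you how many chains (hence how many chain tops) there are.

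The paper closes this gap by counting per \emph{horizontal line per inter-path strip} rather than per diagonal layer. Fix two consecutive greedy paths $dig(p,2p)$ and $dig(q,2q)$ with $q=(p_x+1,p_y-1)$ below the diagonal. By the monotonicity and disjointness of the greedy paths (Lemmas~\ref{lem:digital_path} and~\ref{lem:disjoint}), the skipped vertices of the strip on any fixed horizontal line form a single interval, all attached leftward, so the strip contributes at most one inner leaf per horizontal line; moreover the topmost line $y=2p_y-1$ is excluded because $(2p_x+1,2p_y-1)\in L_{2^i}$ starts a greedy path of $S_{i+1}$. This gives $p_y-1$ inner leaves per strip, hence $\sum_{p_y=1}^{2^{i-2}}(p_y-1)$ below the diagonal and, doubling by symmetry, $2^{i-2}(2^{i-2}-1)\le m^2/4$ per slice. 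If you replace your layer-based candidate count and the deferred case analysis with this row-per-strip count, your argument becomes complete; as written, the central quantitative claim is asserted rather than proved.
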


\begin{proof}
	By Corollary~\ref{cor:connect}, every grid point on the greedy digital paths can be extended to $L_N$. 
	Thus, in each $S_i$, the inner leaves are created between two greedy digital paths $dig(p,2p)$ and $dig(q,2q)$ exclusively, where $p_x+p_y = 2^{i-1}$, $q_x = p_x + 1$ and $q_y = p_y -1$. We only consider the grid points below the line $x=y$. The other case is symmetric. By GREEDY construction, all the grid points between $dig(p,2p)$ and $dig(q,2q)$ exclusively are connected to their left hand side neighbors. Hence, there is at most one inner leaf per each horizontal line between $dig(p,2p)$ and $dig(q,2q)$, except the line $y=2p_y-1$ because $(2p_x+1,2p_y-1)$ can be extended to $S_{i+1}$. Therefore, there are at most $\sum_{y=p_y}^{2p_y-2} 1 = p_y-1$ inner leaves between $dig(p,2p)$ and $dig(q,2q)$. By considering all $dig(p,2p)$ for $p_y = 1,\ldots, 2^{i-2}$ and the symmetric case, we have $2(\sum_{p_y=1}^{2^{i-2}} p_y -1 )= 2^{i-2}(2^{i-2}-1)$ inner leaves in $S_i$. We sum up for all the slices, we get $\sum_{i=2}^{\log_2 N}  2^{i-2}(2^{i-2}-1) = \frac{4^{\log_2 N-1}-1}{4-1} - \frac{2^{\log_2 N -1}-1}{2-1} = N^2/12 - 1/3 -N/2 +1 < N^2/12$ inner leaves. 
\end{proof}

\begin{lemma}
	\label{lem:error}
	The Hausdorff error in GREEDY is at most 5/2.
\end{lemma}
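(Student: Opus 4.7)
The plan is to decompose $dig(o,v)$ slice-by-slice and bound the $\|\cdot\|_\infty$ distance of each piece to $\overline{o,v}$ separately. Fix $v\in\domain$ with $v\in L_N$, $N=2^I$, and by symmetry assume $v_x\geq v_y$. By Corollary~\ref{cor:connect}, $dig(o,v)$ passes through the anchor points $\alpha_k := (\lfloor v_x/2^{I-k}\rfloor,\,\lceil v_y/2^{I-k}\rceil)\in L_{2^k}$ for $k=1,\ldots,I-1$. In each slice $S_i$ the digital path consists of the greedy segment $dig(\alpha_{i-1},\,2\alpha_{i-1})$ followed, when $2\alpha_{i-1}\neq\alpha_i$, by one or two short corrective horizontal/vertical steps connecting $2\alpha_{i-1}$ to $\alpha_i$.

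The key quantitative input is an anchor-to-Euclidean distance bound. Let $\epsilon_k := (2^k/N)\cdot v$ be the point where $\overline{o,v}$ crosses $L_{2^k}$. Since $v_x+v_y=2^I$, the fractional parts $\{v_x/2^{I-k}\}$ and $\{v_y/2^{I-k}\}$ sum to $0$ or $1$, and a direct calculation gives
\[
\|\alpha_k-\epsilon_k\|_\infty \;=\; \{v_x/2^{I-k}\} \;\leq\; 1-1/2^{I-k} \;<\; 1.
\]
With this in hand, consider any point $w$ on the greedy segment $dig(\alpha_{i-1},\,2\alpha_{i-1})$ inside slice $S_i$. By the greedy construction, $w$ lies within $\|\cdot\|_\infty$ distance $1/2$ of the line $\overline{o,\alpha_{i-1}}$; since both this line and $\overline{o,v}$ emanate from the origin, at any layer $L_t$ with $t\leq 2^i$ the two rays are separated by $(t/2^{i-1})\,\|\alpha_{i-1}-\epsilon_{i-1}\|_\infty \leq 2\,\|\alpha_{i-1}-\epsilon_{i-1}\|_\infty<2$. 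The triangle inequality then yields $\|w-\overline{o,v}\|_\infty < 1/2+2 = 5/2$, which is the desired bound.

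The remaining cases I would handle at the end and expect to be routine. The corrective horizontal/vertical steps (when present) lie within one unit of the greedy segment's last point and within one unit of $\alpha_i$, whose distance to $\epsilon_i$ is again strictly less than $1$, so the same triangle-inequality argument places them comfortably inside $5/2$. Slice $S_1$ is $\mathcal{G}^+_2$ with only six points and so contributes constant-bounded error. For the reverse Hausdorff direction, $dig(o,v)$ contains exactly one vertex on each diagonal $L_t$ for $t=0,\ldots,N$ by Lemmas~\ref{lem:digital_path} and~\ref{lem:tree}, so every Euclidean point $tv/N\in\overline{o,v}$ is $\|\cdot\|_\infty$-close to a digital vertex on the same or an adjacent layer, and the same $5/2$ bound follows. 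I expect the main obstacle to be the triangle-inequality estimate itself: the constant $5/2$ arises cleanly as $1/2$ (greedy rounding) plus $2\cdot 1$ (twice the anchor offset, doubled because the two rays spread at most by a factor of $2$ across a slice), and tightening this — e.g., by tracking which side of $\overline{o,\alpha_{i-1}}$ the greedy rounding places $w$ — could improve the constant, but is not needed here.
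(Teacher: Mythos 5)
Your core estimate is sound and is essentially the paper's argument: the slice decomposition, the anchor points $(\lfloor v_x/2^j\rfloor,\lceil v_y/2^j\rceil)$ (the paper's $v'_i$), the $1/2$ greedy-rounding bound, the observation that each anchor is within $\|\cdot\|_\infty$-distance $1$ of the corresponding crossing of $\overline{o,v}$, and the factor-$2$ spread of two rays from the origin across one doubling slice all appear in the paper's proof, and the constant $5/2$ arises from the same $1/2+2$ in both.

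The genuine gap is your opening restriction to $v\in L_N$. The error of GREEDY is $\max_{v\in\domain}H(dig(o,v),\overline{o,v})$ over \emph{all} $v\in\domain$, and the interior case does not reduce to the boundary case: for $2^{n-1}<v_x+v_y<2^n$ the reference segment $\overline{o,v}$ points in a direction realized by no $u\in L_N$, so a bound on $H(dig(o,u),\overline{o,u})$ for boundary $u$ says nothing directly about $H(dig(o,v),\overline{o,v})$. This is exactly the case on which the paper spends most of its proof: such a $v$ lies strictly between two adjacent greedy paths $dig(v',2v')$ and $dig(v'',2v'')$ and is reached by a horizontal connector, and the paper controls the final partial slice by sandwiching $dig(v',v)$ between those two paths, whose Euclidean reference segments stay within $2$ of each other, plus the $1/2$ rounding. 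Your method does extend to cover it (the anchor on $L_{2^{n-1}}$ is still within $1$ of where $\overline{o,v}$ crosses that layer, and the connector has length at most $2$ since adjacent greedy paths diverge by at most $(v_x+v_y)/2^{n-1}\le 2$ within a slice), but as written the case is simply absent. Two smaller repairs: the path between consecutive anchors is not ``greedy segment, then a corrective step at $L_{2^i}$'' --- when $\alpha_i\ne 2\alpha_{i-1}$ the path leaves the greedy segment \emph{before} reaching $2\alpha_{i-1}$ and runs horizontally to $\alpha_i$ (this is the content of Lemma~\ref{lem:connect}), though your quantitative bound survives; and the reverse Hausdorff direction needs one more line, since a point of $\overline{o,v}$ strictly between two layer crossings is up to distance $1$ from the nearest crossing, which would overshoot $5/2$ in early slices where your per-layer bound has almost no slack --- either argue by convexity along each unit edge of the polyline, or bound segment-to-segment Hausdorff distances directly as the paper does.
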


\begin{proof}
	Given any point $v \in \domain$, there exists some $n \leq N$ such that $2^{n-1} < v_x + v_y \leq 2^n$. Based on our slice partition, we partition $\overline{o,v}$ into $n$ line segments such that each line segment lie in some slice $S_i$. Then, we bound the Hausdorff error between each line segment and the corresponding piece of digital segment within each slice, which implies the overall Hausdorff error. Recall that $H(A,B)$ is the Hausdorff distance between $A$ and $B$ under $|| \cdot ||_{\infty}$ metric.
	
	We first show how to partition $\overline{o,v}$ and $dig(o,v)$. 
	For the sake of simplicity, we assume that $v_x \geq v_y$.
	Let $k= v_x + v_y$ and let $v_i$ be the intersection of $\overline{o,v}$ and $x+y= 2^i$ for $i \leq n-1$, i.e., $v_{i,x} = 2^{i}v_x/k$ and $v_{i,y} = 2^{i} v_y/k $.
	Let $v' = (\lfloor 2^{n-1}v_x/k \rfloor, \lceil 2^{n-1} v_y/k \rceil)$ and $v'' = (\lfloor 2^{n-1}v_x/k \rfloor + 1, \lceil 2^{n-1} v_y/k \rceil - 1)$ so that  $v_{n-1}$ is between $v'$ and $v''$ on line $x+y=2^{n-1}$. Therefore, $v$ is between $\overline{v',2v'}$ and $\overline{v'',2v''}$. Based on the GREEDY construction, $v$ is also between $dig(v',2v')$ and $dig(v'', 2v'')$.
	Suppose that $v$ is not in $dig(v'', 2v'')$, then $dig(o,v)$ is constructed by a horizontal path from $v$ to $dig(v',2v')$ and then following $dig(o,2v')$ to the origin. Let $v'_i = (\lfloor 2^{i}v_x/k\rfloor, \lceil 2^{i}v_y/k\rceil)$ for $i = 2, \ldots, n-1$. Since $dig(o,v)$ passes through $v'=v'_{n-1}$, by Corollary~\ref{cor:connect}, $dig(o,v)$ also passes through all $v'_i$. Then, we need to consider $H(\overline{v_{n-1}, v}, dig(v',v))$ and $H(\overline{v_{i-1}^{},v_i}, dig(v'_{i-1},v'_i))$, whose maximum gives the bound on $H(\overline{o,v}, dig(o,v))$.
	
	Now we are going to bound $H(\overline{v_{n-1}, v}, dig(v',v))$.
	Let $p$ (resp. $p'$) be the intersection of $\overline{v',2v'}$ (resp. $dig(v',2v')$) and $x+y=k$ and $q$ (resp. $q'$) be the intersection of $\overline{v'',2v''}$ (resp. $dig(v'',2v'')$) and $x+y=k$ so that $dig(v',v)$ is between $dig(v',p')$ and $dig(v'',q')$ (see \cref{fig:greedy}). Then, $H(\overline{v_{n-1}, v}, dig(v',v))$ is bounded by the maximum of $H(\overline{v_{n-1}, v}, dig(v',p'))$ and $H(\overline{v_{n-1}, v}, dig(v'',q'))$. Furthermore, $H(\overline{v_{n-1}, v}, dig(v',p')) \leq H(\overline{v_{n-1}, v}, \overline{v', p}) + H(\overline{v',p}, dig(v',p'))$ and $H(\overline{v_{n-1}, v}, dig(v'',q')) \leq H(\overline{v_{n-1}, v}, \overline{v'', q}) +$
	\newline $H(\overline{v'',q}, dig(v'',q'))$. 
	Because of the GREEDY construction, both $H(\overline{v',p}, dig(v',p'))$ and $H(\overline{v'',q}, dig(v'',q'))$ are at most $0.5$. Since $H(\overline{v', p}, \overline{v'', q}) < 2$, we have $H(\overline{v_{n-1}, v}, dig(v',v)) < 5/2$.
	
	For the second part, $H(\overline{v_{i-1}^{},v_i}, dig(v'_{i-1},v'_i))$ is bounded by 
	$H(\overline{v_{i-1},v_i}, \overline{v'_{i-1},v'_i}) + $ 
	\newline $H(\overline{v'_{i-1},v'_i}, dig(v'_{i-1},v'_i))$.
	Since $|| v_{i-1} - v'_{i-1}||_\infty$ and $|| v_{i} - v'_{i}||_\infty$ are at most 1, $H(\overline{v_{i-1},v_i}, \overline{v'_{i-1},v'_i})$ is also at most 1.
	Based on the GREEDY construction, we know that  $H(\overline{p,2p}, dig(p,2p)) \leq 0.5$ for any $p\in L_{2^i}$.
	If $v'_{i} = 2v'_{i-1}$, we are done. Otherwise, $v'_i = (2v'_{i-1,x}+1, 2v'_{i-1,y}-1 ) $, then $dig(v'_{i-1},v'_i)$ is constructed by a horizontal path from $v'_i$ to $dig(v'_{i-1},2v'_{i-1})$, and then following $dig(v'_{i-1},2v'_{i-1})$ to $v'_{i-1}$. Using the similar argument, we can show that $H(\overline{v'_{i-1},v'_i}, dig(v'_{i-1},v'_i)) \leq 1.5$. Overall, it gives $H(\overline{o,v}, dig(o,v)) \leq 5/2$ when $dig(o,v)$ passes through $v'$.
	
	We go back to another case when $v \in dig(v'',2v'')$. Let $v''_i = (\lfloor 2^{i}v_x/k + 1/2^{n-1-i}\rfloor, \lceil 2^iv_y/k - 1/2^{n-1-i} \rceil)$ for $i=2,\ldots, n-1$. Since $dig(o,v)$ passes through $v''=v''_{n-1}$, by Corollary~\ref{cor:connect}, $dig(o,v)$ also passes through all $v''_i$. Since $||v_{i} - v''_{i}||_\infty$ is at most 1 for $i=2,\ldots,n-1$, we can apply the same argument as above to show that $H(\overline{o,v}, dig(o,v)) \leq 5/2$ when $dig(o,v)$ passes through $v''$.
\end{proof}

Combining all these lemmas, we have our main theorem.


\begin{restatable}{theorem}{Greedy}
	\label{theo_greedy}
	For any $N>0$ we can create a weak CDR in $\Z^2$ with $5/2$ error and $N^2/12$ inner leaves. 
\end{restatable}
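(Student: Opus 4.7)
The plan is to analyze the GREEDY construction as a sequence of slice-by-slice refinements and then verify three things: it is a valid weak CDR (i.e.\ a rooted, $xy$-monotone spanning tree), it has at most $N^2/12$ inner leaves, and its Hausdorff error is at most $5/2$. First I would recall the slice partition: the domain $\domain$ is split by the diagonals $x+y=2^i$ into $\log_2 N$ slices $S_1,\ldots,S_{\log_2 N}$, and within each $S_i$ the construction draws a greedy path from every $p\in L_{2^{i-1}}$ to $2p\in L_{2^i}$ (picking in each intermediate layer $L_j$ the grid point minimising the $L_\infty$ distance to $\overline{p,2p}$, breaking ties downward), after which a clean-up step assigns every still-unattached vertex a single edge toward the origin—horizontal if $p_x\geq p_y$ and vertical otherwise.

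Next I would verify that GREEDY is a rooted tree. This breaks into three routine facts. First, a single greedy path is a valid grid path: two consecutive picks in $L_j$ and $L_{j+1}$ cannot be offset by $2$ in $L_\infty$ because the corresponding intersections of $\overline{p,2p}$ with those diagonals differ by less than $1$ in each coordinate, and each pick is within $1/2$ of its intersection, so triangle inequality excludes a diagonal jump. Second, two greedy paths in the same slice are disjoint: consecutive line segments $\overline{p,2p}$ and $\overline{q,2q}$ (with $q_x=p_x+1$) are $L_\infty$-separated by more than $1$ except at their shared endpoint-layer, ruling out grid-point collisions. Third, greedy paths in adjacent slices join cleanly: for $p\in L_{2^{i-1}}$ the path $dig(o,p)$ goes through the canonical rounding $(\lfloor p_x/2\rfloor,\lceil p_y/2\rceil)$ (or its symmetric counterpart), which is itself a starting point of a greedy path in $S_{i-1}$. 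Iterating this gives a chain of doubled ancestors of $p$, so every vertex has a unique $xy$-monotone path to the origin; combined with the exact edge count this forces the graph to be a spanning tree.

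For the inner-leaf count, I would work slice by slice. In $S_i$, inner leaves can only occur strictly between two neighboring greedy paths $dig(p,2p)$ and $dig(q,2q)$. On each horizontal line crossing this strip, the clean-up rule assigns each vertex a left-neighbor or down-neighbor edge, so at most one vertex per horizontal line fails to extend; moreover the top line of the strip does extend into $S_{i+1}$. Summing over all pairs of neighboring paths below $y=x$, doubling for the symmetric case, and then summing the resulting $2^{i-2}(2^{i-2}-1)$ geometrically over all slices gives the claimed $N^2/12$ bound.

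The Hausdorff bound is the delicate part, since one must avoid accumulating error across $\Theta(\log N)$ slices. The key idea is that any Euclidean segment $\overline{o,v}$ is decomposed by the slice diagonals into pieces $\overline{v_{i-1},v_i}$, and by the canonical-ancestor chain of \cref{cor:connect} the digital path $dig(o,v)$ is the concatenation of digital paths between rounded doubled ancestors $v'_{i-1},v'_i$. For each piece, the Hausdorff distance decomposes as the distance from $\overline{v_{i-1},v_i}$ to $\overline{v'_{i-1},v'_i}$ (at most $1$ by the rounding) plus the distance from that Euclidean segment to its greedy realisation (at most $0.5$ for a direct greedy path, at most $1.5$ when a horizontal hop is needed because of parity). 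The top piece, where $v$ sits between two greedy paths in $S_n$, gives the worst case: the contribution splits again as the distance from the piece of $\overline{o,v}$ to one of the neighboring Euclidean segments (below $2$, since they are unit-separated at the $L_{2^{n-1}}$ end and meet at $v$'s scale) plus the $0.5$ greedy approximation error, yielding $5/2$. Since the Hausdorff error is the maximum—not the sum—over the pieces, this $5/2$ bound is global. The main obstacle is therefore this careful piecewise triangle-inequality bookkeeping, which requires handling the parity cases $v'_i=2v'_{i-1}$ vs.\ $v'_i=2v'_{i-1}+(1,-1)$ and the position of $v$ relative to the two bracketing greedy paths, but once both cases are handled the $5/2$ constant follows and, together with the previous lemmas, establishes the theorem.
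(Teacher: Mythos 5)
Your proposal is correct and follows essentially the same route as the paper: the same slice-by-slice verification that GREEDY is a rooted $xy$-monotone spanning tree, the same per-strip count of one inner leaf per horizontal line (minus the top line) summing to $2^{i-2}(2^{i-2}-1)$ per slice and $<N^2/12$ overall, and the same piecewise Hausdorff bound via the canonical doubled-ancestor chain with the worst case $2+1/2=5/2$ arising in the top slice. No substantive differences to report.
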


\begin{proof}
	Lemma~\ref{lem:tree} guarantees that GREEDY satisfies axioms 1, 2, 3 and 5. Lemmas~\ref{lem:death} and \ref{lem:error} give the two qualities of the weak CDR.
\end{proof}

From the above theorem, we can also extend it to have a trade-off construction with $O(c)$ error and $O(N^2/c)$ by scaling the tree.

\begin{theorem}
	\label{thm:tradeoff}
	For any $N>0$ and $c \leq N$ we can create a weak CDR in $\Z^2$ with $O(c)$ error and $O(N^2/c)$ inner leaves.
\end{theorem}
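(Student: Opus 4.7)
The strategy is to apply the construction of \cref{theo_greedy} at a coarser scale of $c$ and then fill in the interior of each $c \times c$ ``cell'' with a simple comb-like structure. First I would identify the sublattice $\Lambda_c = \{(ic, jc) \in \domain \colon i, j \in \Z_{\geq 0}\}$, which (assuming $c$ divides $N$, with minor adjustments otherwise) is isomorphic to $\mathcal{G}^+_{N/c}$. Applying \cref{theo_greedy} to $\Lambda_c$ at its own scale yields a macro-tree with $5/2$ error in scaled units, i.e., $5c/2 = O(c)$ in the ambient metric. Each macro-edge (connecting points at $L_\infty$ distance $c$) would then be replaced by the straight unit-step path of length $c$ between its endpoints, producing a skeleton sub-tree of $\domain$.

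Next, I would fill the interior of each $c \times c$ cell (the grid block owned by a macro-vertex $(ic, jc)$) using a comb: a vertical spine along the column $x = ic$ with horizontal teeth extending rightward through each row. The expanded macro-edges coincide exactly with the bottom tooth and left spine of each cell, so no duplication or cycles arise. In each cell the $c-1$ non-bottom teeth terminate as inner leaves (extending them into a neighboring cell would create a cycle through that neighbor's spine), and the spine-top is an inner leaf whenever the cell has no upward macro-exit. Summing over the $O((N/c)^2)$ cells yields $O(c) \cdot O((N/c)^2) = O(N^2/c)$ inner leaves.

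For the Hausdorff error, I would decompose any digital path $dig(o,v)$ into a macro portion along the skeleton and a final in-cell portion. The scaled GREEDY guarantees the macro portion lies within $O(c)$ of the Euclidean ideal (the $5/2$ bound of \cref{theo_greedy} rescaled by $c$), and the in-cell portion is confined to a $c \times c$ square, contributing at most $O(c)$ additional error by the triangle inequality. Hence the total Hausdorff error is $O(c)$.

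The main obstacle is verifying that the cell-level combs respect all the weak-CDR axioms---especially (S1), (S3) and (S5)---across cell boundaries, and handling the cells that intersect $L_N$ whose partial geometry requires a small case analysis. These checks are mechanical but need to be done carefully; boundary cells contribute only $O((N/c) \cdot c) = O(N)$ additional inner leaves, already absorbed by $O(N^2/c)$, so the asymptotics are unaffected.
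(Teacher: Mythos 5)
Your proposal is correct and follows essentially the same route as the paper: apply the construction of \cref{theo_greedy} at scale $N/c$, blow the resulting tree up by a factor of $c$, and cover the uncovered vertices with short branches that each add $O(c)$ error and contribute $O(c)$ inner leaves per $c\times c$ cell, for $O(N^2/c)$ in total. The only difference is in the choice of filling --- the paper attaches shifted copies of sub-paths of the GREEDY tree (so the $O(c)$ error of the branches is inherited directly), whereas you use axis-parallel combs and argue the in-cell error separately --- but both variants yield the same bounds.
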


\begin{figure}
	\centering
	\includegraphics[width=.8\textwidth]{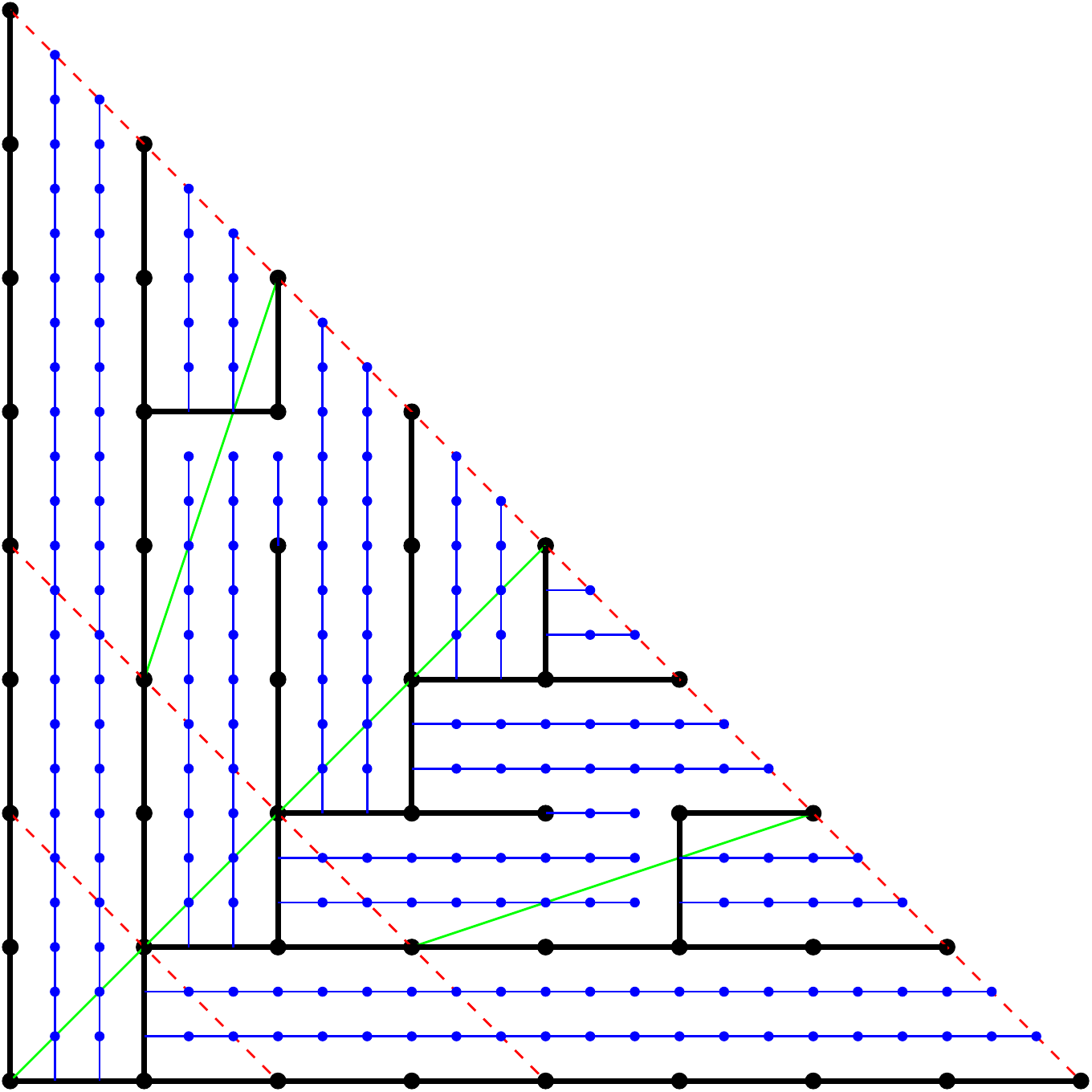}
	\caption{Example of a weak CDR with $O(c)$ error and $\Theta(N^2/c)$ inner leaves for $N=24$ and $c=3$. The thick black line segments represent the GREEDY construction for $N/c=8$. The blue segments cover the remaining refined grid vertices.}
	\label{fig:trade-off}
\end{figure}

\begin{proof}
	We first apply the GREEDY construction in $\mathcal{G}^+_{\lceil N/c \rceil}$, in which we have $O(1)$ error and $O(N^2/c^2)$ inner leaves. Then, we scale up the tree by a factor of $c$ so that the original grid edges have length $c$. Therefore, the error of the tree in $\mathcal{G}^+_{\lceil N/c \rceil \cdot c}$ becomes $O(c)$, but the tree does not cover all the refined grid vertices. Then, we draw some vertical or horizontal line segments that branch from the tree as shown in Figure~\ref{fig:trade-off}. This will increase the number of inner leaves by a factor of $c$, i.e., $O(N^2/c)$. Each new branch is a copy of some sub-path of the original GREEDY tree and is shifted by at most $c$ steps. Hence, their errors are still $O(c)$. 
\end{proof}

In addition to the GREEDY construction for a weak CDR, we can also observe some nice properties in GREEDY. If we remove all the branches which end up at some inner leaves, we have an infinite tree which covers more than half of the grid points of $\domain$. In particular, for each grid point not in the tree, there must exist a vertex in the tree within one unit distance. Hence, given any point $p$ in $\domain$, we can snap $p$ to some vertex $q$ in the tree with distance at most $1$. Then, $dig(o,p)$ can be approximated by $dig(o,q)$ with a very small distortion at the end point, but the Hausdorff error $H(\overline{o,p},dig(o,q))$ is $O(1)$. 
Let $T_G$ be the tree created by GREEDY after removing all the inner branches. 
We give a formal statement as follows:

\begin{theorem}
	Let $V$ be the vertices in $T_G$. Then, $DS(\{o\} \times V)$ realized by $T_G$ is a partial CDS with $O(1)$ error.	
	Moreover, for any vertex $p \in \Z^2$, there exists a vertex $q \in V$ such that $||p-q||_\infty\leq 1$. 
\end{theorem}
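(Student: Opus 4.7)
The plan is to exploit the fact that $T_G$ is the \emph{infinite skeleton} of the GREEDY tree, consisting precisely of the union of greedy backbone paths $dig(r, 2r)$ for $r \in L_{2^{i-1}}$ and $i \geq 1$: by Corollary~\ref{cor:connect}, every such backbone prolongs indefinitely via the chain $r \to 2r \to 4r \to \cdots$, so every vertex of $V$ lies on an infinite $xy$-monotone ray from the origin.

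I would first verify the five axioms for $DS(\{o\} \times V)$ realized by $T_G$. Axiom (S4) is immediate because each vertex of $V$ lies on an infinite ray. Axioms (S1), (S3) and (S5) are inherited from Lemma~\ref{lem:tree}: for any $v \in V$, the path $dig(o, v)$ in $T_G$ coincides with the corresponding path in the full GREEDY tree, so it is a grid path, $xy$-monotone, and every prefix $dig(o, r)$ satisfies $r \in V$ (since $r$ retains infinitely many descendants through $v$). Axiom (S2) is vacuous, since the source is always the origin. The Hausdorff error bound of $5/2$ is inherited verbatim from Lemma~\ref{lem:error}, as $dig(o, v)$ denotes the same path in $T_G$ and in the full GREEDY tree.

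For the density claim, by symmetry I restrict to the positive orthant, and the small region $\mathcal{G}^+_2$ is handled directly since all six of its vertices already lie in $V$. Otherwise let $p$ be a grid point with $p_x + p_y > 2$ and let $i$ be the integer with $2^{i-1} < p_x + p_y \leq 2^i$. Writing $n = p_x + p_y$, I consider the $2^{i-1}+1$ backbone paths in slice $S_i$: the intersection of $\overline{r_k, 2r_k}$ (with $r_k = (k, 2^{i-1}-k)$) with $L_n$ has $x$-coordinate $kn/2^{i-1}$, and GREEDY selects the nearest grid point on $L_n$, whose $x$-coordinate differs from this value by at most $1/2$. Since consecutive Euclidean intersections are separated by $n/2^{i-1} \leq 2$ and nearest-integer rounding (with the paper's consistent tie-breaking ``smaller $y$'') distorts a real gap of size at most $2$ into a grid gap of size at most $2$, the $x$-coordinates of consecutive backbone grid points on $L_n$ differ by at most $2$. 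Because any two grid points of $L_n$ have $L_\infty$ distance equal to the absolute difference of their $x$-coordinates, every grid point on $L_n$ is within $L_\infty$ distance $1$ of a backbone grid point, hence of a vertex of $V$.

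The main obstacle is pinning down the spacing bound of $2$: one must carefully show that two real numbers that differ by at most $2$ have nearest-integer roundings that also differ by at most $2$ under the consistent tie-breaking used by GREEDY, ruling out a ``gap of size $3$'' between adjacent backbone grid points. The remaining ingredients --- the symmetric extension of GREEDY from $\mathcal{G}^+_N$ to all of $\Z^2$, and the fact that $T_G$ is well-defined as an infinite tree by taking $N \to \infty$ (the backbone $dig(r,2r)$ stabilises once $N$ is large enough) --- are routine.
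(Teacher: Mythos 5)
Your proof is correct and follows essentially the same route as the paper: the axioms are verified via Lemma~\ref{lem:tree} and Corollary~\ref{cor:connect}, the error bound is inherited from Lemma~\ref{lem:error}, and the density claim reduces to the observation that consecutive greedy paths leave at most one uncovered grid point per layer $L_n$ (which you justify more carefully than the paper's one-line assertion, via the rounding/spacing argument). One cosmetic inaccuracy: $T_G$ is not \emph{precisely} the union of the backbone paths $dig(r,2r)$ --- it also contains the short horizontal/vertical connectors that attach the odd-indexed starting points of slice $S_{i+1}$ to the backbones of $S_i$ --- but this does not affect your argument, since you check the axioms via the ``infinitely many descendants'' property (which holds for all of $V$) and the density claim only needs the backbone points as a subset of $V$.
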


\begin{proof}
	By Lemma~\ref{lem:tree}, we know that every path from any vertex in $T_G$ to the origin is $xy$-monotone.
	And by Corollary~\ref{cor:connect}, we know that all the greedy digital paths can be extended to infinity.
	Hence, $DS(\{o\} \times V)$ realized by $T_G$ satisfies all the five axioms. Lemma~\ref{lem:error} gives the $O(1)$ error.
	
	
	By definition of greedy digital path, in each $L_j$ between two consecutive greedy digital paths there is at most one point not on these two paths. Hence, the distance between those points and greedy paths is at most one.
\end{proof}



\section{Point sets with constant discrepancy} \label{GreedyDiscrepancy}

In this section we construct red $R$ and blue $B$ point sets such that the absolute value of their discrepancy is 1. Let $m > 0$ be the difference between the number of blue and red points as defined in \cref{DiscrepancySection}. Our construction has $\Theta(m^2)$ many points. Afterwards we also prove that a discrepancy of 1 cannot be achieved with $o(m^2)$ many points.

We first describe a specific configuration of points, called staircase.
\begin{definition}
	A staircase is a sequence of alternating blue and red points $(p_1,p_2,...,p_n)$ in the unit square. It starts and ends with a blue point. Moreover for every red point $p_i$, the blue point $p_{i-1}$ has smaller $x$-coordinate and the same $y$-coordinate. The blue point $p_{i+1}$ has the same $x$-coordinate and smaller $y$-coordinate. 
\end{definition}
\begin{figure}
	\centering
	\includegraphics[width = 0.4\textwidth,trim= 50 150 50 150, clip]{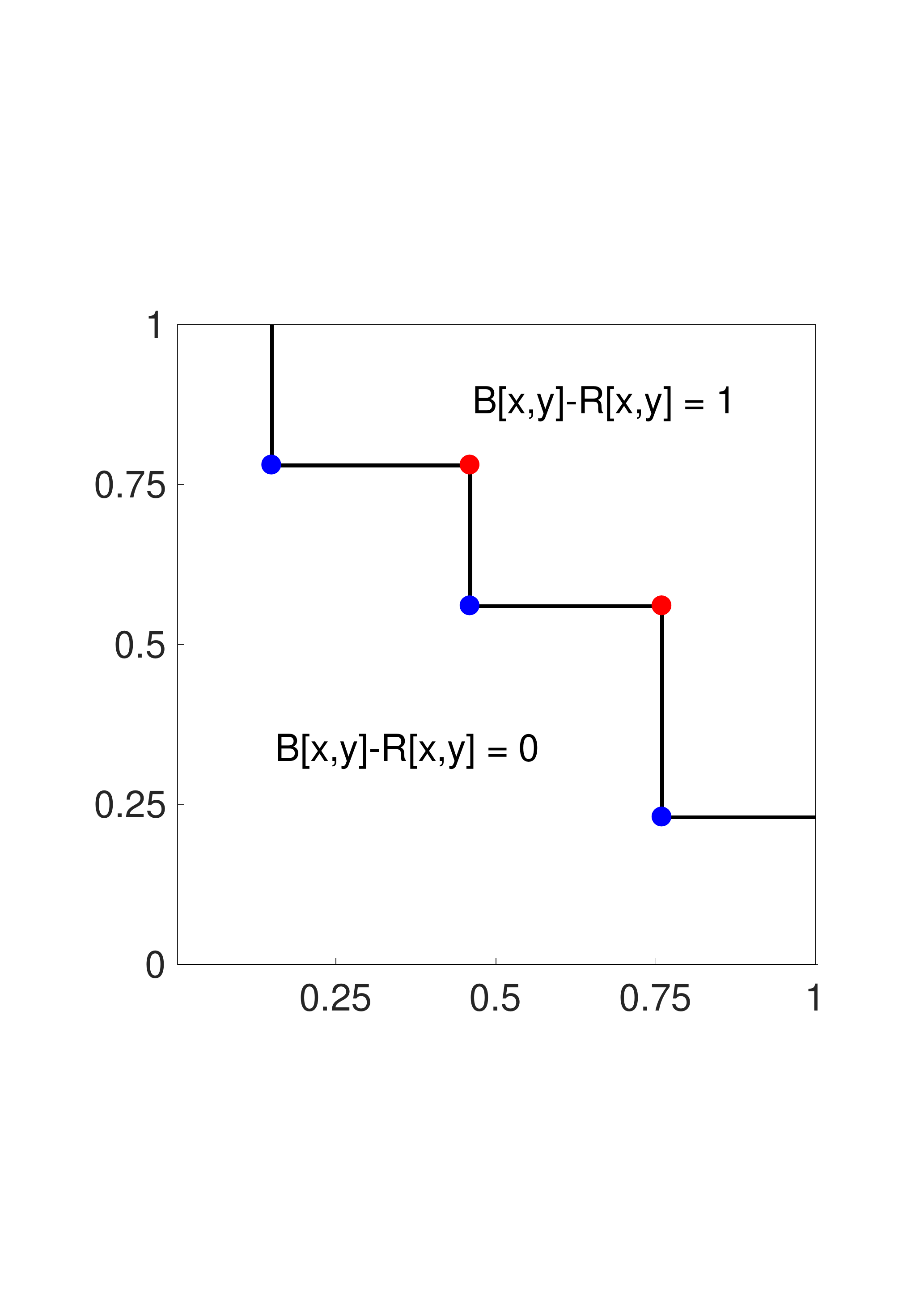}
	\caption{A Staircase}\label{Fig:Staircase}
\end{figure}

Given a staircase, we can define a curve by connecting consecutive points on the staircase. Additionally we add a vertical segment at the beginning and a horizontal segment at the end, in order to connect the curve to the boundary of the unit square, see \cref{Fig:Staircase}. We will also use the term ``staircase" for this curve. 

\begin{observation}
	The transformation in \cref{TransformationSection} maps a CDR to blue and red points in the unit square, which can be decomposed into a set of staircases.
\end{observation}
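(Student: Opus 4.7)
My plan is to exhibit the decomposition by constructing an auxiliary graph on $R \cup B$ and showing that its connected components are exactly staircases. For each red point I would add two edges: a \emph{horizontal} edge to the blue point immediately to its left on the same line $y=n/N$, and a \emph{vertical} edge to the blue point sharing its $x$-coordinate but with a smaller $y$-coordinate. The horizontal edge is well defined by \cref{lem_one_line}, since on each layer the colours alternate starting and ending with a blue point, so every red is flanked by a blue on its immediate left. The vertical edge is well defined and unique by \cref{lem_bijection}, which gives a bijection between inner leaves and split vertices whose images share an $x$-coordinate; moreover, by construction of the auxiliary function in \cref{sec_auxiliary}, the image of the split vertex lies strictly below the image of the associated inner leaf.

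Next, I would verify that every red point has degree exactly two, while every blue point has degree at most two: a blue can receive at most one horizontal edge (from the red immediately on its right on the same layer, if any) and at most one vertical edge (from a red above on its vertical line, if any). The key structural step is to show the graph is acyclic: traversing a horizontal edge from a blue to a red strictly increases the $x$-coordinate, while traversing a vertical edge from a red to a blue strictly decreases the $y$-coordinate, so no alternating cycle is possible. Consequently the connected components are simple paths, and since every red has degree two, the endpoints of every nontrivial path are blue; isolated blue vertices correspond to trivial singleton staircases.

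Walking along such a path starting from a blue endpoint produces an alternating sequence of blue and red vertices, and the geometric relationships imposed by the two types of edges (blue to red strictly to the right on the same horizontal, red to blue strictly downward on the same vertical) are exactly those required in the definition of a staircase. Hence each connected component is a staircase, and together they partition $R \cup B$ into the required collection.

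The main obstacle I anticipate is handling blue points that correspond to split vertices whose associated leaf under the virtual walk (from \cref{sec_auxiliary}) lies on $L_N$ rather than in $\death$: these blue points have no vertical edge going upward and must be correctly identified as endpoints of a staircase (or as singletons), without conflating them with blue points that merely lack a horizontal edge. A careful bookkeeping via \cref{lem_bijection} should resolve this cleanly, since that lemma sharply distinguishes which split vertices are paired with inner leaves and which with boundary leaves.
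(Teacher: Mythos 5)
Your proposal is correct and follows essentially the same route as the paper's (much terser) proof: the paper also builds the staircases by attaching each red point to the blue point immediately to its left on its row (via \cref{lem_one_line}) and to the blue point directly below it sharing its $x$-coordinate (via \cref{lem_bijection}). Your version simply makes explicit the degree, acyclicity, and path-orientation bookkeeping that the paper leaves implicit.
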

\begin{proof}
	Below every red point there is a corresponding blue point. Moreover, by Lemma~\ref{lem_one_line}, in each row the set of blue and red points is alternating, i.e., to the left of every red point there is a corresponding blue point.
\end{proof}

Assume that a set of blue and red points forms one staircase. Then the curve induced by the staircase splits the unit square into two parts. The set of points $(x,y)$ to the bottom-left of the staircase satisfies $B[x,y]-R[x,y] = 0$ whereas the set of points to the top-right of the staircase satisfies $B[x,y]-R[x,y] = 1$, see \cref{Fig:Staircase}. When the set of blue and red points can be decomposed into many staircases, then we can easily compute the value $B[x,y]-R[x,y]$ by counting how many staircases are to the bottom-left of point $(x,y)$.

Recall the definition of discrepancy of $R$ and $B$ at point $(x,y)$:
\begin{equation*}
	D_{R,B}(x,y) = m x y - (B[x,y]-R[x,y]).
\end{equation*}
The first term $m x y$ represents the expected difference between the numbers of blue and red points in the axis-aligned rectangle with corner points $(0,0)$ and $(x,y)$. Every point $(x,y)$ along the curves $C_i := \{(x,y) \in [0,1]^2 | x\cdot y = \frac{i}{m}\}$, where $i \in \{0,1,...,m\}$, describes a rectangle $[0,x] \times [0,y]$ in which we expect $i$ many blue points more than red points. \cref{Fig:StaircaseLowerBound} illustrates the curves $C_i$ in black for $m=7$.

The idea of our construction is to approximate the level curves $C_{i-0.5}$ by staircases, where $i \in \{1,...,m\}$. We will construct $m$ staircases such that the staircase approximating $C_{i-0.5}$ is between $C_{i-1}$ and $C_{i}$. This guarantees that the discrepancy $D^*_{R,B}$ is at most 1.

We describe how we construct the staircase which approximates $C_{i-0.5}$. We start with a blue point at the intersection of the two curves $C_{i-1}$ and $x = y$. This is the blue point $(\frac{\sqrt{i-1}}{\sqrt{m}},\frac{\sqrt{i-1}}{\sqrt{m}})$. Starting from there we move horizontally to the right until we hit the curve $C_i$ at the point $(\frac{i}{\sqrt{i-1} \sqrt{m}}, \frac{\sqrt{i-1}}{\sqrt{m}})$. We add a red point here. Then we move vertically down until we hit $C_{i-1}$ and put a blue point. We continue in this fashion, i.e. from a blue point on $C_{i-1}$ we move horizontally to the right and put a red point on $C_i$. From a red point on $C_i$ we move vertically down and put a blue point on $C_{i-1}$. 
The blue points will have the coordinates $(\frac{i^k}{(i-1)^{k-0.5} \cdot \sqrt{m}}, \frac{(i-1)^{k+0.5}}{i^k \cdot \sqrt{m}})$ and the red points have the coordinates $(\frac{i^{k+1}}{(i-1)^{k+0.5} \cdot \sqrt{m}}, \frac{(i-1)^{k+0.5}}{i^k \cdot \sqrt{m}})$, where $k \in \{0,1,2,...\}$.
We stop this construction when we leave the unit square, i.e., we look for the largest $k$ such that the blue point $(\frac{i^k}{(i-1)^{k-0.5} \cdot \sqrt{m}}, \frac{(i-1)^{k+0.5}}{i^k \cdot \sqrt{m}})$ is still contained in $[0,1]^2$.
The maximum value for $k$ is
\begin{equation*}
	k = \left \lfloor \frac{\log\left(\frac{\sqrt{m}}{\sqrt{i-1}}\right)}{\log\left(\frac{i}{i-1}\right)} \right \rfloor
\end{equation*}
for $i \geq 2$ and $k=0$ for $i=1$.
So far we described how we construct the staircases on the side $y \leq x$. We add red and blue points on the side $y > x$ to make the construction symmetric to the line $y = x$. \cref{Fig:StaircaseLowerBound} illustrates our construction, which we call the symmetric greedy staircase construction.

\begin{figure}
	\centering
	\includegraphics[width = 0.7\textwidth,trim= 0 0 0 0, clip]{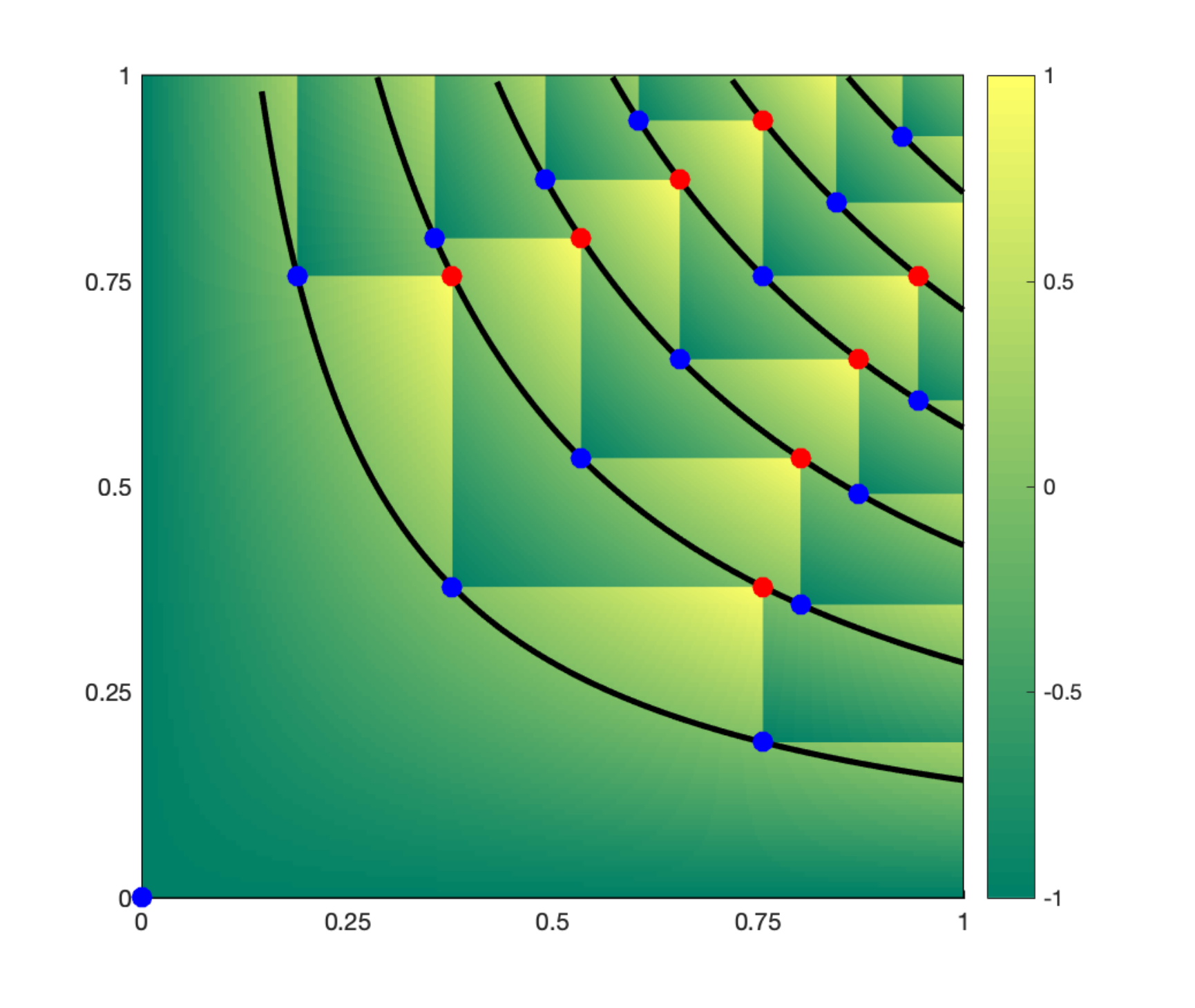}
	\caption{Staircase approximation for $m=7$. \\
		The curves $C_i$ are drawn in black. The brightness of the green color encodes the value of the discrepancy $D_{R,B}(x,y)$ at each point $(x,y)$. The discrepancy values range between -1 and 1 as shown on the right hand side. The staircases can be seen at the discontinuity of the discrepancy function. At each staircase the discrepancy function changes its value by 1.
	}\label{Fig:StaircaseLowerBound}
\end{figure}

\begin{observation}
	The points of the symmetric greedy staircase construction with $m$ stairs are
	\begin{align*}
		B & = \left\{\left(\frac{i^k}{(i-1)^{k-0.5} \cdot \sqrt{m}}, \frac{(i-1)^{k+0.5}}{i^k \cdot \sqrt{m}} \right) \Big| i \in \{1,2,...,m\} \text{ and } -k^*(i) \leq k \leq k^*(i) \right\} \\
		R & = \left\{\left(\frac{i^{k+1}}{(i-1)^{k+0.5} \cdot \sqrt{m}}, \frac{(i-1)^{k+0.5}}{i^k \cdot \sqrt{m}} \right) \Big| i \in \{1,2,...,m\} \text{ and } -k^*(i) \leq k \leq k^*(i)-1 \right\}
	\end{align*}
	where
	\begin{equation*}
		k^*(i) = \begin{cases}
			0 & \text{if } i = 1 \\
			\left \lfloor \frac{\log\left(\frac{\sqrt{m}}{\sqrt{i-1}}\right)}{\log\left(\frac{i}{i-1}\right)} \right \rfloor & \text{if } i \neq 1.
		\end{cases} 
	\end{equation*}
	There are $2 k^*(i)+1$ (resp. $2k^*(i)$) many blue (resp. red) points in the $i$-th staircase.
\end{observation}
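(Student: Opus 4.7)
The plan is to derive the closed-form coordinates of the blue and red points by induction on $k$, to extend the formulas to $k < 0$ via the symmetry across the line $y = x$, and finally to compute the cutoff $k^*(i)$ from the unit-square constraint.

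I would first handle the base case $k = 0$: by construction, the initial blue point of the $i$-th staircase is the intersection of $C_{i-1}$ with the diagonal $y = x$, namely $(\sqrt{(i-1)/m}, \sqrt{(i-1)/m})$, which matches the blue formula at $k = 0$. Then I induct on $k \geq 0$: assuming the blue point at index $k$ has the claimed coordinates and lies on $C_{i-1}$, the horizontal move to the right reaches $C_i$ at the $x$ satisfying $x \cdot y_k^{\mathrm{blue}} = i/m$, which simplifies to $\frac{i^{k+1}}{(i-1)^{k+0.5}\sqrt{m}}$ and matches the red formula at index $k$. The subsequent vertical move downward reaches $C_{i-1}$ at the $y$ satisfying $x_k^{\mathrm{red}} \cdot y = (i-1)/m$, which simplifies to $\frac{(i-1)^{k+1.5}}{i^{k+1}\sqrt{m}}$ and matches the blue formula at index $k+1$. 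The monotonicity of the moves (right, then down) follows from $i/(i-1) > 1$.

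For $k < 0$, the construction on the side $y > x$ is by reflection across the line $y = x$. A direct substitution, using the identity $i^{-k}/(i-1)^{-k-0.5} = (i-1)^{k+0.5}/i^k$ together with its analog for the $y$-coordinate, shows that the blue formula at index $-k$ equals the reflection of the blue formula at index $k$, and the red formula at index $-k-1$ equals the reflection of the red formula at index $k$. This justifies the index ranges $[-k^*(i), k^*(i)]$ for blue and $[-k^*(i), k^*(i)-1]$ for red, from which the counts $2k^*(i)+1$ and $2k^*(i)$ are immediate.

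Finally, I would determine $k^*(i)$. For $i \geq 2$, the blue $x$-coordinate grows strictly with $k$ while its $y$-coordinate shrinks (again using $i/(i-1) > 1$), so as $k$ increases the first exit from $[0,1]^2$ happens through $x > 1$; the inequality $\frac{i^k}{(i-1)^{k-0.5}\sqrt{m}} \leq 1$ rearranges to $k \leq \log(\sqrt{m/(i-1)})/\log(i/(i-1))$, yielding the stated $k^*(i)$. Moreover, the red point at index $k^*(i)$ shares its $x$-coordinate with the (excluded) blue point at index $k^*(i)+1$, so it likewise lies outside the square, confirming that the red range caps at $k^*(i)-1$. For $i = 1$ the initial blue point is $(0,0)$ and no horizontal move can reach $C_1$, so $k^*(1) = 0$. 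The only obstacle in this plan is bookkeeping of exponents and treating the $i=1$ degenerate case cleanly; no conceptual difficulty arises.
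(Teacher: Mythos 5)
Your proposal is correct and follows essentially the same route as the paper: the paper states this Observation without a separate proof, since the coordinates, the reflection across $y=x$, and the cutoff $k^*(i)$ are read off directly from the construction described just before it, and your induction on $k$ plus the unit-square exit condition is precisely a careful verification of that derivation. The only point deserving explicit care, which you already flag, is the degenerate case $i=1$ where the starting point is $(0,0)$ and no red points exist.
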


\begin{theorem}
	The symmetric greedy staircase construction has discrepancy 1.
\end{theorem}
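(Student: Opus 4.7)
The plan is to prove $D^*_{R,B}=1$ by first bounding $|D_{R,B}(x,y)|\le 1$ for every $(x,y)\in[0,1]^2$, and then exhibiting a point attaining equality. A convenient extremal point is the central blue $(\sqrt{(i-1)/m},\sqrt{(i-1)/m})$ of any staircase $i\ge 1$: here $mxy=i-1$, while $B[x,y]-R[x,y]=i$ because staircases $1,\ldots,i-1$ each contribute $+1$ (the point is above-right of them), the central blue contributes $+1$, and every staircase $j>i$ contributes $0$ (the point is below-left of them). This yields $D_{R,B}(x,y)=-1$, so $D^*_{R,B}\ge 1$.

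For the pointwise bound, I would first verify by direct multiplication that the blue points of the $i$-th staircase lie on $C_{i-1}$ and the red points on $C_i$, so the entire zig-zag portion sits in the band $\{(u,v):(i-1)/m\le uv\le i/m\}$. Together with the symmetric reflection across $y=x$ and the terminal vertical/horizontal extensions to $y=1$ and $x=1$, the $i$-th staircase becomes a monotone step curve that partitions $[0,1]^2$ into an above-right region (containing $(1,1)$) and a below-left region (containing $(0,0)$). The key partition lemma to prove is: the rectangle $[0,x]\times[0,y]$ contains every point of staircase $i$ when $(x,y)$ is above-right (hence contributes $+1$ to $B[x,y]-R[x,y]$ since the staircase has one more blue than red), and contains none of them when $(x,y)$ is below-left (contributing $0$). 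I would then show the crucial geometric tie-in: $xy>i/m$ forces $(x,y)$ above-right of staircase $i$, while $xy<(i-1)/m$ forces it below-left. Summing over all $m$ staircases, those with $i<mxy$ give $+1$, those with $i-1>mxy$ give $0$, and the at most two boundary staircases (with $i-1\le mxy\le i$) each contribute $0$ or $1$, yielding $\bigl|mxy-(B[x,y]-R[x,y])\bigr|\le 1$.

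The main obstacle will be the careful handling of the terminal extensions and of points sitting on one of the hyperbolas $C_i$. The implication ``$xy>i/m$ implies above-right'' is clean along the zig-zag part, since every such staircase point $(u,v)$ satisfies $uv\le i/m$; but along the top vertical extension $x=u_0,\ v_0\le y\le 1$ the product $u_0 y$ can exceed $i/m$, so a separate bookkeeping argument is needed to verify that $(x,y)$ with $xy>i/m$ really does land in the above-right region near the square's boundary. Symmetry across $y=x$ halves this case analysis. A second delicate point is that when $(x,y)$ lies exactly on $C_i$, both staircases $i$ and $i+1$ are simultaneously borderline, and one must verify that their combined contribution stays in $\{0,1,2\}$ rather than accidentally reaching $2$ on top of all staircases $j\le i-1$ already contributing $+1$; the positions of the reds of staircase $i$ and the blues of staircase $i+1$ on $C_i$ never coincide generically, so one of the two borderline staircases has its relevant point outside the rectangle. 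Once these cases are dispatched the bound $|D_{R,B}|\le 1$ holds everywhere, which combined with the extremal example gives $D^*_{R,B}=1$.
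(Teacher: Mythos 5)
Your overall route is the same as the paper's: show that each point $(x,y)$ sees $B[x,y]-R[x,y]$ equal to the number of staircases lying below-left of it, and combine this with the fact that staircase $i$ is squeezed into the band $\frac{i-1}{m}\le uv\le\frac{i}{m}$ to get $\abs{mxy-(B[x,y]-R[x,y])}\le 1$. The paper's proof is a three-line version of exactly this; you are more careful, and you add two things the paper omits: an explicit witness showing $D^*_{R,B}\ge 1$ (the central blue of staircase $i$, where $mxy=i-1$ and $B[x,y]-R[x,y]=i$ -- this computation is correct), and an honest acknowledgement that the terminal vertical/horizontal extensions and the on-curve query points need separate checking. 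Those are real delicacies: for instance, the top vertical extension of staircase $i+1$ sits at abscissa $u_0=i/(m\,y_{\mathrm{top}})$, and one needs the stopping rule to guarantee $y_{\mathrm{top}}>i/(i+1)$ so that $u_0v\le(i+1)/m$ along the whole segment; this does go through, but the paper never says so.

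One step of your plan is wrong as literally stated and would fail if you tried to prove it: the ``key partition lemma'' claims that $[0,x]\times[0,y]$ contains \emph{every} point of staircase $i$ whenever $(x,y)$ is above-right of it. That is false -- take $(x,y)$ just above-right of the leftmost blue point of the staircase; the rectangle then misses all the points further to the right. The correct statement, which is what you actually need, is that the staircase contributes exactly $+1$ to $B[x,y]-R[x,y]$ when $(x,y)$ is on or above-right of its curve and $0$ otherwise. This follows from a pairing argument rather than from containment of the whole staircase: each red point $p_j$ has its predecessor blue $p_{j-1}$ to its left at the same height, so any rectangle containing $p_j$ contains $p_{j-1}$, and these pairs cancel; for a given $x$ there is exactly one blue whose right-neighbouring red falls outside $[0,x]$, and that unpaired blue lies in the rectangle precisely when $y$ is at least the height of the step curve over $x$. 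With that repair (and your boundary-case bookkeeping) the argument is complete and matches the paper's.
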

\begin{proof}
	Consider any point $(x,y)$ between the $i$-th and $(i+1)$-th staircase. It holds that $B[x,y]-R[x,y] = i$. Moreover both staircases are bounded from below by the level curve $C_{i-1}$ and from above by $C_{i+1}$, which means that $\frac{i-1}{m} \leq x \cdot y \leq \frac{i+1}{m}$. Summarizing, we can bound the discrepancy 
	\begin{equation*}
		-1 \leq \underbrace{m x y - (B[x,y]-R[x,y])}_{= D_{R,B}(x,y)} \leq 1.
	\end{equation*}
\end{proof}

\begin{theorem} \label{upperboundnumber}
	The symmetric greedy staircase construction with $m$ stairs has $O(m^2)$ many points.
\end{theorem}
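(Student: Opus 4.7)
The plan is to count the total number of points directly by summing $2k^\ast(i)+1$ blue and $2k^\ast(i)$ red points over all $i \in \{1,\ldots,m\}$, so the total equals $\sum_{i=1}^m (4k^\ast(i)+1) = m + 4\sum_{i=2}^m k^\ast(i)$, and then show that this sum is $O(m^2)$.

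First, I would give a clean upper bound on $k^\ast(i)$ for $i \geq 2$. Since $\log\!\bigl(\frac{i}{i-1}\bigr) = \log\!\bigl(1 + \tfrac{1}{i-1}\bigr) \geq \tfrac{1}{i}$ (using the standard inequality $\log(1+x) \geq x/(1+x)$ with $x = 1/(i-1)$), the definition of $k^\ast(i)$ gives
\begin{equation*}
k^\ast(i) \;\leq\; \frac{\log\!\bigl(\sqrt{m/(i-1)}\bigr)}{\log\!\bigl(i/(i-1)\bigr)} \;\leq\; \frac{i}{2}\,\log\!\Bigl(\frac{m}{i-1}\Bigr).
\end{equation*}
This is the key technical inequality; everything else is a routine sum estimate.

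Next I would bound the sum $\sum_{i=2}^m k^\ast(i)$. Substituting $j = i-1$ and using $i \leq 2j$ for $j\geq 1$, we get
\begin{equation*}
\sum_{i=2}^m k^\ast(i) \;\leq\; \sum_{j=1}^{m-1} j\,\log\!\Bigl(\frac{m}{j}\Bigr).
\end{equation*}
I would then compare this sum to the integral $\int_0^m x\,\log(m/x)\,dx$. Changing variables $u = x/m$ gives $m^2 \int_0^1 u\,\log(1/u)\,du = m^2/4$, so the sum is $O(m^2)$. Putting this back, the total number of points is $m + 4\cdot O(m^2) = O(m^2)$, as claimed.

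I do not expect any real obstacle: the only slightly delicate step is the lower bound on $\log(i/(i-1))$, which must hold uniformly for all $i\geq 2$ including the awkward case $i=2$ (where $\log 2 \geq 1/2$ holds comfortably). Once that inequality is in place, the calculation reduces to a one-line integral estimate. I would probably also remark that the bound is asymptotically tight in the sense that the $i$-th staircase by itself may contribute $\Theta(i \log(m/i))$ points, so no savings are possible beyond the constant.
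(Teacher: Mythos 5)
Your proposal is correct and follows essentially the same route as the paper: bound $k^\ast(i)$ by lower-bounding $\log(i/(i-1))$ by a quantity of order $1/i$, reduce to the sum $\sum_i i\log(m/i)$, and compare with the integral $\int x\log(m/x)\,dx = O(m^2)$. The only point worth making explicit in the write-up is that $x\log(m/x)$ is not monotone on $[0,m]$ (it is unimodal with maximum $O(m\log m)$), so the sum-to-integral comparison incurs an additive $O(m\log m)$ term, which is harmless.
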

\begin{proof}
	The number of blue points, which are used in our construction, is
	\begin{align*}
		\abs{B} & = \sum_{i = 1}^{m} 1+2\cdot k^*(i) = m + \sum_{i = 2}^{m} 2\cdot \left \lfloor \frac{\log\left(\frac{\sqrt{m}}{\sqrt{i-1}}\right)}{\log\left(\frac{i}{i-1}\right)} \right \rfloor 
		\leq  O(m) +  \sum_{i = 2}^{m-1} \frac{\log\left(\frac{m}{i}\right)}{\log\left(\frac{i+1}{i}\right)}.
	\end{align*}
	We now lower bound the denominator by
	\begin{equation*}
		\log\left(\frac{i+1}{i}\right) =\log\left(1+\frac{1}{i}\right) \geq \frac{1}{i} - \frac{1}{i^2} = \frac{i-1}{i^2}.
	\end{equation*}
	Putting the inequalities together, we get:
	\begin{align*}
		\abs{B} & \leq O(m) + \sum_{i = 2}^{m-1} \frac{i^2}{i-1} \log\left(\frac{m}{i}\right) 
		\leq O(m) + 2 \sum_{i = 2}^{m-1} i \log\left(\frac{m}{i}\right)
	\end{align*}
	The continuous function $f(i) = i \log \left(\frac{m}{i}\right)$ has exactly one maximum in the interval $[2,m]$ with a value bounded by $m \log m$ and is monotone on both sides of it. Therefore we can replace the sum by an integral.
	\begin{align*}
		\abs{B} & \leq O(m \log m) + 2 \int_{2}^{m-1} i \log\left(\frac{m}{i}\right) di = O(m \log m) + 2  \left[ \frac{i^2}{4} \cdot \left(1+2 \log\left(\frac{m}{i}\right)\right) \right]_{i=2}^{i=m-1} \\
		& = O(m^2).
	\end{align*}
\end{proof}

We now show that our construction is tight.

\newcommand{\const}{\xi}

\begin{theorem} \label{Non-intersecting staircases}
	Let $B$ and $R$ be point sets, which can be decomposed into $m$ non-intersecting staircases, and have a discrepancy bounded by a constant $\const$. Then $\abs{B} = \Omega\left(m^2\right)$.
\end{theorem}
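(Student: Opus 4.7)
The plan is to combine the discrepancy hypothesis with the nested structure of the staircases to confine the $i$-th staircase to a thin strip around the hyperbola $xy=i/m$, and then count how many alternations a monotone staircase must make inside such a strip. Since the staircases are non-intersecting and nested, every point $(x,y)$ strictly between the $i$-th and $(i{+}1)$-th staircase has $B[x,y]-R[x,y]=i$, and the bound $|D_{R,B}(x,y)|\leq\const$ then forces $xy\in[(i-\const)/m,(i+\const)/m]$ throughout that region. Evaluating the count $B[x,y]-R[x,y]$ at a blue vertex of the $i$-th staircase yields $i$, so every blue vertex lies on the strip $xy\in[(i-\const)/m,(i+\const)/m]$, while every red vertex satisfies $xy\in[(i-1-\const)/m,(i-1+\const)/m]$. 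Consistency of these two inclusions already demands $\const\geq 1/2$, which is harmless since $\const$ is a fixed constant.

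Inside the strip, a blue-red-blue triple $(x_k,y_k),(x_{k+1},y_k),(x_{k+1},y_{k+1})$ on the $i$-th staircase must satisfy $x_ky_k\geq (i-\const)/m$ and $x_{k+1}y_k\leq (i-1+\const)/m$. Dividing yields
\[
\frac{x_{k+1}}{x_k}\leq\frac{i-1+\const}{i-\const}=1+O(1/i),
\]
and symmetrically $y_k/y_{k+1}\leq 1+O(1/i)$, i.e., the blue vertices of staircase $i$ march along the strip in geometric steps of ratio $1+O(1/i)$. I then bound the overall $x$-range of these blue vertices. For the topmost blue $(x_1^{(i)},y_1^{(i)})$, evaluating the discrepancy at the boundary point $(x_1^{(i)}-\epsilon,1)$ (which lies to the left of staircases $i,\ldots,m$ and to the right of staircases $1,\ldots,i-1$, hence has $B-R=i-1$) forces $x_1^{(i)}\leq(i-1+\const)/m=O(i/m)$. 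Symmetrically, probing at $(1,y_{n_i}^{(i)}-\epsilon)$ gives $y_{n_i}^{(i)}\leq (i-1+\const)/m$, and together with the strip inclusion $x_{n_i}^{(i)}y_{n_i}^{(i)}\geq (i-\const)/m$ this yields $x_{n_i}^{(i)}\geq(i-\const)/(i-1+\const)=\Omega(1)$ once $i$ exceeds a constant depending on $\const$. Hence $x_{n_i}^{(i)}/x_1^{(i)}=\Omega(m/i)$.

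Combining the per-step ratio bound with the overall range, the $i$-th staircase must contain at least
\[
\frac{\log(x_{n_i}^{(i)}/x_1^{(i)})}{\log(1+O(1/i))}=\Omega(i\log(m/i))
\]
blue vertices for every $i$ in a range such as $[C,m/2]$ with $C=C(\const)$ a suitable constant. Summing and comparing with $\int_0^{m/2}t\log(m/t)\,dt=\Theta(m^2)$ gives
\[
|B|\geq\sum_{i=C}^{m/2}\Omega(i\log(m/i))=\Omega(m^2).
\]
The delicate step is the horizontal-extent argument: one must use the vertical and horizontal boundary extensions of each staircase together with the count of how many earlier staircases lie above a boundary probe point, so as to show that $x_1^{(i)}$ is pushed all the way down to $O(i/m)$ rather than merely bounded in terms of some $y_1^{(i)}$ that might itself be small. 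The remaining boundary cases (indices $i\leq C$, and $i$ close to $m$) contribute only $O(1)$ staircases of $\Omega(1)$ vertices each, which is absorbed into lower-order terms.
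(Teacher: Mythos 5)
Your proof is correct and follows essentially the same route as the paper's: confine the $i$-th staircase to the hyperbolic strip between $C_{i-\const}$ and $C_{i+\const-1}$, bound the multiplicative advance of the $x$-coordinate per stair by $\frac{i-1+\const}{i-\const}$, deduce that staircase $i$ needs $\Omega(i\log(m/i))$ vertices, and sum over $i$. If anything, your explicit per-step ratio and the boundary-probe argument pinning $x_1^{(i)}=O(i/m)$ and $x_{n_i}^{(i)}=\Omega(1)$ make rigorous what the paper asserts more briefly, namely that the minimum is realized by the greedy staircase between the two convex level curves.
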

\begin{proof}
	The $i$-th staircase is bounded from below by the level curve $C_{i-\const}$ and from above by $C_{i+\const-1}$ because of the discrepancy constraint. We count how many points are necessary to create the $i$-th stair. The minimum number can be realized by constructing a stair in a greedy manner between $C_{i-\const}$ and $C_{i+\const-1}$ because both curves are convex.	
	\begin{align*}
		B & = \left\{\left(\frac{(i+\const-1)^k}{m (i-\const)^{k-1} }, \frac{(i-\const)^k}{(i+\const-1)^k} \right) \Big| i \in \{1,2,...,m\} \text{ and } 1 \leq k \leq k^*(i) \right\} \\
		R & = \left\{\left(\frac{(i+\const-1)^{k+1}}{m (i-\const)^k }, \frac{(i-\const)^k}{(i+\const-1)^k} \right) \Big| i \in \{1,2,...,m\} \text{ and } 1 \leq k \leq k^*(i)-1 \right\}
	\end{align*}
	where
	\begin{equation*}
		k^*(i) = \begin{cases}
			0 & \text{if } i \leq \const \\
			\left \lfloor \frac{\log\left(\frac{m}{i-\const}\right)}{\log\left(\frac{i+\const-1}{i-\const}\right)} \right \rfloor & \text{if } i > \const.
		\end{cases} 
	\end{equation*}
	The number of blue points can therefore be bounded by		
	\begin{align*}
		\abs{B} & \geq \sum_{i = 1}^{m} k^*(i) 
		\geq \const + \sum_{i = \const+1}^{m} \left\lfloor \frac{\log\left(\frac{m}{i-\const}\right)}{\log\left(\frac{i+\const-1}{i-1}\right)} \right\rfloor 
		\geq -O(m) +  \sum_{i = 1}^{m-\const} \frac{\log(\frac{m}{i})}{\log(\frac{i+2\const-1}{i})}
	\end{align*}
	Using the inequality $\log\left(\frac{i+2\const-1}{i}\right) =\log(1+\frac{2\const-1}{i}) \leq \frac{2\const-1}{i}$ and comparing the sum with an integral, as done in the proof of \cref{upperboundnumber}
	\begin{equation*}
		\frac{1}{2\const-1} \sum_{i = 1}^{m-\const} i \log\left(\frac{m}{i}\right) \geq \frac{1}{2\const-1} \left(-O(m \log m) + \int_{1}^{m-\const} i \log\left(\frac{m}{i}\right) di \right)
	\end{equation*}
	we can conclude $\abs{B} \geq \Omega\left(m^2\right)$.
\end{proof}


\begin{restatable}{theorem}{DiscrepOne}
	\label{TheoremDiscrepancy1}
	Let $B$ and $R$ be two sets of points whose discrepancy satisfies $D_{R,B}^*<1$. Then $\abs{B} = \Omega(m^2)$, where $m = \abs{B}-\abs{R}$.
\end{restatable}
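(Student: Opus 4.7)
The plan is to prove \cref{TheoremDiscrepancy1} by generalizing the counting argument from the proof of \cref{Non-intersecting staircases} so that it applies directly to arbitrary configurations satisfying $D^*_{R,B} < 1$, without first requiring a clean decomposition into exactly $m$ staircases. The key observation is that the integrality of $f(x,y) := B[x,y] - R[x,y]$ combined with the strict discrepancy hypothesis pins $f(x,y)$ to $\{\lfloor mxy \rfloor, \lceil mxy \rceil\}$ at every $(x,y) \in [0,1]^2$, so $f$ ranges over $\{0, 1, \ldots, m\}$ with $f = 0$ on the lower and left edges of the unit square and $f(1,1) = m$.

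For each $i \in \{1, \ldots, m\}$, let $\gamma_i$ denote the union of arrangement edges separating cells with $f \geq i$ from cells with $f < i$. Every corner of $\gamma_i$ is a point of $B \cup R$: a blue point appears as a corner of $\gamma_i$ precisely when its insertion causes $f$ to jump from $i-1$ up to $i$, and a red point precisely when its insertion causes $f$ to drop from $i$ back to $i-1$. In general position, every blue point is a corner of exactly one $\gamma_i$, so $\sum_{i=1}^{m} (\text{number of blue corners of } \gamma_i) = |B|$.

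The geometric input is that the strict bound $|mxy - f(x,y)| < 1$ confines $\gamma_i$ to the hyperbolic strip $\{(x,y) : (i-1)/m < xy < i/m\}$ between $C_{i-1}$ and $C_i$. Consequently, any horizontal segment of $\gamma_i$ at height $y$ has length at most $1/(my)$, and any vertical segment at abscissa $x$ has length at most $1/(mx)$. Since $\{f \geq i\}$ contains $(1,1)$ while being absent from both the left and bottom edges, $\gamma_i$ must span a horizontal extent of at least $1 - i/m$ on its way from the top to the right edge of the unit square. Substituting into the same hyperbolic-integral estimate that drives the proof of \cref{Non-intersecting staircases} yields $\Omega(i \log(m/i))$ corners for $\gamma_i$, and summing
\[
|B| \;\geq\; \sum_{i=1}^{m} \Omega\!\left(i \log \frac{m}{i}\right) \;=\; \Omega(m^2).
\]

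The main technical obstacle is that $\{f \geq i\}$ need not be upward-closed, since red points locally decrease $f$, so $\gamma_i$ can split into several connected components or contain closed loops around isolated pockets of $f = i$. I would handle this by observing that each closed rectilinear loop in $\gamma_i$ has exactly four more convex than concave corners (hence at least four extra blue corners), and each open component that reaches both the top and right edges contributes its own blue corners; in either case disconnectedness only adds to the blue count and cannot reduce the bound. Remaining degeneracies where distinct points share an $x$- or $y$-coordinate can be removed by an arbitrarily small generic perturbation that preserves the strict inequality $D^*_{R,B} < 1$, reducing the analysis to the clean general-position setting.
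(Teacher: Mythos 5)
Your overall strategy is in substance the paper's own: pin the level curves of $f(x,y)=B[x,y]-R[x,y]$ between consecutive hyperbolas $C_{i-1}$ and $C_i$, show each one is forced to make $\Omega\left(i\log\frac{m}{i}\right)$ turns by the multiplicative step bound, charge turns to points of $B\cup R$, and sum; the paper routes this through an explicit decomposition into $m$ non-intersecting staircases and invokes \cref{Non-intersecting staircases}, while you count corners of the level-set boundaries $\gamma_i$ directly, but the quantitative heart of the argument is identical.

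There is, however, one step you assert that is not true for general point sets and genuinely requires the hypothesis $D^*_{R,B}<1$: the claim that every corner of $\gamma_i$ lies at a point of $B\cup R$. The discontinuity locus of $f$ is the union of the upward and rightward rays emanating from each point, and a vertical ray of one point can cross a horizontal ray of another; for instance, two blue points at $(0.3,0.5)$ and $(0.5,0.3)$ give the level set $\{f\ge c+2\}$ a convex corner at $(0.5,0.5)$, which is not a point of the configuration. If such corners could occur, your identity $\sum_i \#\{\text{blue corners of }\gamma_i\}=\abs{B}$ would fail and the lower bound would not follow. The claim is rescued by the discrepancy hypothesis, but this needs to be proved: at any such crossing $z$ the four surrounding quadrants carry the values $c$, $c+\sigma_p$, $c+\sigma_q$, $c+\sigma_p+\sigma_q$ with $\sigma_p,\sigma_q\in\{\pm 1\}$, which always comprise three distinct integers spanning a range of $2$; since every neighbourhood of $z$ meets all four quadrants and $mxy$ is continuous, this is incompatible with $\abs{mxy-f(x,y)}<1$. (Equivalently: under $D^*_{R,B}<1$ the discontinuity rays of distinct points never cross, which is exactly what lets the paper decompose the configuration into non-intersecting staircases.) You should state and prove this as a lemma; with it in place, your multiplicity argument (each point is a corner of exactly one $\gamma_i$ in general position) and your handling of disconnected or non-monotone components are fine, and the remaining computation coincides with the proof of \cref{Non-intersecting staircases}.
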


\begin{proof}
	Consider the sets $S_i := \{(x,y) \in [0,1]^2 \,\Big|\, B[x,y]-R[x,y] = i\}$ for $i \in \{0,1,...,m\}$. Because the discrepancy of the point set $B$ and $R$ is less than 1 we can conclude that
	\begin{enumerate}
		\item the curves $C_i$ are contained in $S_i$ and
		\item the points between $C_i$ and $C_{i+1}$ are either contained in $S_i$ or $S_{i+1}$.
	\end{enumerate}
	Therefore there exists a curve between $C_i$ and $C_{i+1}$ which is only neighboring $S_i$ to its bottom-left and $S_{i+1}$ to its top-right for each $i \in \{0,1,...,m-1\}$. This curve is a staircase. Hence there exists a staircase between $C_i$ and $C_{i+1}$ for each $i \in \{0,1,...,m-1\}$. Those staircases are non-intersecting because $D_{R,B}^*<1$. Therefore they consist of at least $\Omega(m^2)$ many points, as shown in \cref{Non-intersecting staircases}.
\end{proof}

As mentioned before, our transformation maps a CDR to a set of points in the unit square, which can be decomposed into staircases. Unfortunately \cref{Non-intersecting staircases} does not imply that every 2D weak CDR with constant discrepancy needs $\Omega(N^2)$ many leaves. The staircases from our transformation might be intersecting. Intersecting staircases can again be decomposed into non-intersecting (only touching) staircases, where there does not need to be a blue or red point at every turn.


\section{Final remarks}\label{sec_conclu}
Common intuition would say that the $\Omega(\log N)$ lower bound for the error of two-dimensional CDR and CDS automatically extends to higher dimensions.
The observation that this is not true opens up new ways in which research can continue. We believe that further analysis of the mapping between the three spaces (from CDR in high dimensions to the 2-D weak CDR to the two-colored pointset) and the high interdependence between the three spaces can help in designing better lower and upper bounds.

Our lower bound $\Omega(\log^{1/(d-1)} N)$ extends the previous lower bound. The next step would be to close the gap between $\Omega(\log^{1/2} N)$ and $O(\log N)$ bounds in three dimensions. Even if the final answer ends up being $\Theta(\log N)$ we believe that the relationship between high dimensional CDRs, weak CDRs induced in subspaces and the mapping to pointset gives a better understanding of CDRs. 

We also find that weak CDRs are an interesting research topic on their own. In particular, we would like to find the relationship between the number of inner leaves and the error of the construction. That is, say that we want a CDR with $O(e)$ error (for some $e\leq \log n$). What is the minimum number of leaves $\ell=\ell(e)$ that such a CDR must have? Can we find such a construction?

\cref{theo_lower_2} seems to indicate a linear relationship between the two, and it is not hard to obtain one (See, for example~\cref{thm:tradeoff} and \cref{fig:trade-off}). 
However, this construction is most likely not the best possible one. Indeed, even if we are interested in $O(\log N)$ error, this construction creates a large number of inner leaves, but we know of CDRs with the same error and no inner leaves. Thus, the question becomes, can we significantly improve upon the greedy construction in \cref{GreedyCDR}? Or is there some exponential dependency between the number of inner leaves and the error of the weak CDR?

\paragraph{Acknowledgements}
The authors would like to thank 
Matthew Gibson,
Evanthia Papadopoulou,
Andr\'e van Renssen and
Marcel Roeloffzen
for their helpful discussions during the creation of this paper.
The authors would also like to thank the anonymous reviewers for the many comments that helped improve the paper. We would especially like to thank a reviewer for SODA that showed us how to improve the lower bound from $\Omega(\log^{1/d} N)$ to $\Omega(\log^{1/(d-1)}N)$.

	\bibliography{reference}

\end{document}